%% file: main.tex
\documentclass[letterpaper,11pt]{article}

\usepackage[margin=1in]{geometry}
\usepackage{setspace}
\usepackage[T1]{fontenc}
\usepackage[utf8]{inputenc}
\usepackage{lmodern}
\usepackage{microtype}
\usepackage[backend=biber,style=alphabetic,maxbibnames=100,maxalphanames=4]{biblatex}

\usepackage{amsmath,amssymb,amsthm,mathtools}
\usepackage{bm}
\usepackage{thmtools}
\usepackage{thm-restate}

\usepackage[dvipsnames,svgnames,x11names]{xcolor}
\usepackage[colorlinks]{hyperref}
\usepackage{enumerate}
\usepackage{nicefrac}
\usepackage{graphicx}
\usepackage{algorithm}
\usepackage{authblk}
\usepackage{esvect}
\usepackage[normalem]{ulem}
\usepackage{braket}
\usepackage{float}
\usepackage{pgfplots}
\usepackage{multicol}
\usepackage[nameinlink,noabbrev,capitalize]{cleveref}
\usepackage{adjustbox}
\usepackage{subcaption}
\usepackage[font=small]{caption}
\usepackage[toc,page]{appendix}
\usepackage{makecell}
\usepackage{array}
\usepackage{bm}
\usepackage{titlesec}

\titlespacing*{\paragraph}
  {0pt}   
  {0.1em} 
  {1em}   

\newcolumntype{C}[1]{>{\centering\arraybackslash}p{#1}}

\usetikzlibrary{calc}
\usetikzlibrary{trees}
\usetikzlibrary{patterns}
\usetikzlibrary{arrows.meta}

\pgfplotsset{compat=1.6}

\bibliography{bibliography}

\hypersetup{
    breaklinks = true,
	colorlinks = true,
	citecolor = teal,
	urlcolor = purple,
}

\newtheorem{theorem}{Theorem}

\usepackage{xcolor}

\Crefname{theorem}{Theorem}{Theorems}
\Crefname{theorem2}{Theorem}{Theorems}
\Crefname{lemma}{Lemma}{Lemmas}
\Crefname{lemma2}{Lemma}{Lemmas}
\Crefname{figure}{Fig.}{Figs.}
\Crefname{algorithm}{Listing}{Listings}
\Crefname{section}{Section}{Sections}
\Crefname{observation}{Observation}{Observations}
\Crefname{property}{Property}{Properties}
\Crefname{lemma}{Lemma}{Lemmas}
\Crefname{claim}{Claim}{Claims}
\Crefname{claimx}{Claim}{Claims}
\Crefname{figure}{Fig.}{Figs.}
\Crefname{subfigure}{Fig.}{Figs.} 
\Crefname{minipage}{Fig.}{Figs.}
\Crefname{enumi}{Condition}{Conditions}
\Crefname{reductionrule}{Rule}{Rules}

\usepackage[inline]{enumitem}

\definecolor{realblue}{rgb}{0,0,1}
\definecolor{defblue}{rgb}{0.274,0.392,0.666}
\definecolor{darkerblue}{rgb}{0.094,0.455,0.804}
\definecolor{darkblue}{rgb}{0.063,0.306,0.545}
\definecolor{linkblue}{rgb}{0.098,0.098,0.4392}
\definecolor{red}{rgb}{0.627,0.117,0.156}
\definecolor{green}{RGB}{51, 153, 102}
\definecolor{orange}{rgb}{0.903,0.739,0.382}
\definecolor{realred}{rgb}{1,0,0}
\definecolor{lipicsblue}{rgb}{0.08235294118,0.3098039216,0.537254902}

\usepackage{color,soul}

\usepackage{xspace} 
\usepackage[textsize=scriptsize]{todonotes}
\DeclareTextFontCommand{\emph}{\color{defblue}\em}
\DeclareTextFontCommand{\bl}{\color{lipicsblue}}
\hypersetup{colorlinks=true,
    linkcolor=lipicsblue,
    anchorcolor=lipicsblue,
    citecolor=lipicsblue,
    filecolor=lipicsblue,
    menucolor=lipicsblue,
    urlcolor=lipicsblue,
    bookmarksopen=true,
    bookmarksopenlevel=2,
    bookmarksnumbered=true,
    plainpages=false,
    }

\usepackage{tocloft}
\DeclareMathOperator{\Oh}{\mathcal O}

\DeclareMathOperator{\tw}{tw}
\DeclareMathOperator{\bw}{bw}
\DeclareMathOperator{\midset}{mid}

\DeclareMathOperator{\poly}{poly}

\newcommand{\jvchange}[1]{\textcolor{black}{#1}}

\title{Faster Planar Graph Algorithms\\ for Connectivity Problems via Meanders}

\author[1]{Susanna Caroppo}
\author[1]{Giordano Da Lozzo}
\author[1]{\authorcr Giuseppe Di Battista}
\author[2]{Jevgēnijs Vihrovs}

\affil[1]{Roma Tre University, Department of Engineering}
\affil[2]{University of Latvia, Faculty of Science and Technology}

\date{}

\begin{document}

\maketitle

\vspace{-8mm}
\begin{abstract} 
    In this paper, we refine the dynamic programming framework based on the sphere cut decomposition designed by Dorn, Penninkx, Bodlaender, and Fomin (ESA 2005) to obtain faster subexponential algorithms for connectivity problems on planar graphs.
    We investigate the relationship between these problems and meanders, which are simple closed planar loops that intersect a fixed line in a given number of points.
    By combining dynamic programming with techniques from meandric system analysis and the use of fast matrix multiplication by Dorn (ESA 2006), we obtain improved algorithms for planar connectivity problems.

    \jvchange{We show that the number of meanders on $2n$ crossings $M_n$ is $\Oh^*(12.806^n)$, which improves the previous upper bound of $\Oh^*(12.901^n)$ by Albert and Paterson (FPSAC 2004).}
    This then gives the best-known classical upper bounds on the deterministic time complexity of several planar graph problems with polynomially-bounded weights, namely $\Oh(2^{5.543\sqrt n})$ for the \textsc{Planar Travelling Salesman} problem, $\Oh(2^{5.796\sqrt n})$ for \textsc{Planar Longest Cycle/Path}, $\Oh(2^{8.251\sqrt n})$ for \textsc{Planar Connected Dominating Set} and $\Oh(2^{8.037\sqrt n})$ for \textsc{Planar Steiner Tree}.
    Notably, this leads to the best-known deterministic complexity $\Oh(2^{5.543\sqrt{n}})$ for the \textsc{Planar Hamiltonian Cycle} problem.
    
\end{abstract}

{
\setlength{\cftbeforesecskip}{0pt}
\scriptsize
\tableofcontents
}
\clearpage
\section{Introduction and Overview}\label{se:intro}

Planar graphs occupy a central position in graph theory and algorithms, both for their rich structural properties and for the algorithmic advantages they offer. In contrast to general graphs, many NP-hard problems admit significantly faster exact algorithms when restricted to $n$-vertex planar instances. Namely, the complexity of such algorithms can be subexponential, $2^{\Oh(\sqrt n \log n)}$ or even $2^{\Oh(\sqrt{n})}$.
A classic approach by Lipton and Tarjan to obtain such algorithms is to apply their famous {\em Planar Separator Theorem}~\cite{LT79} recursively and solve the problem using the divide-and-conquer technique \cite{LT80}.
Planar graphs, however, admit several algorithmic tools that go substantially beyond the separator paradigm. 
An influential technique for planar graphs and, more generally, for minor-free graphs is {\em Baker's layering method}~\cite{Bak94,Epp00}. By deleting one congruence class of BFS layers, a planar graph can be decomposed into components of bounded treewidth, which can then be optimized over by dynamic programming. This gives polynomial-time approximation schemes for a wide range of optimization problems on planar graphs, and has become one of the canonical examples of how planar structure can be algorithmically exploited~\cite{Epp00,ABFKN02,GB07,BKM09}.
Another major example is the {\em Bidimensionality Theory}~\cite{DFHT05}, which exploits the interaction between minor-monotonicity, treewidth, and grid-like structure. Together with the {\em Grid-Minor Theorem}~\cite{RST94} and its sharper forms for planar and minor-free graphs, bidimensionality yields powerful meta-theorems for fixed-parameter algorithms, subexponential algorithms, and polynomial-time approximation schemes on planar graphs and related families~\cite{DFHT04,DH05,DH08,FLST20}. 

An important class of planar graph problems is the set of \emph{connectivity problems} in which, broadly speaking, we are asked to find an optimal subgraph subject to topological constraints. 
This class of problems is the main focus of our work and includes relevant examples such as \textsc{Planar Hamiltonian Cycle}, \textsc{Planar Travelling Salesman}, and \textsc{Planar Steiner Tree}.
Whether the logarithmic factor in the exponent of the running time of exact separator-based algorithms for these problems could be removed was first answered positively for 
\textsc{Planar Hamiltonian Cycle} in \cite{DKW06}, where the complexity of the divide-and-conquer technique was brought down to $2^{\Oh(\sqrt n)}$, albeit with some large constant in the base of the exponent.
Since then, such complexity has been obtained for other planar connectivity problems as well \cite{DPBF10, CNPPVRW22, Pin16}, whose $2^{\Oh(\sqrt n)}$ complexity has also been proven to be optimal under the SETH conjecture \cite{CNPPVRW22}. 
The main open questions thus lie in determining the best constants for the $\Oh(\sqrt n)$ factor in the exponent. 

To reach this goal, modern techniques often combine dynamic programming (DP) with tools from the area of parameterized complexity.
For instance, dynamic programming over a tree decomposition of the input graph $G$ yields algorithms with running time $\Oh^*(2^{\Oh(\tw(G))})$ for a wide range of problems, where $\tw(G)$ is the treewidth of~$G$ \cite{Bod88}.
For planar graphs on $n$ vertices, $\tw(G) = \Oh(\sqrt{n})$, which directly leads to the aforementioned subexponential running times.

Dorn, Penninkx, Bodlaender, and Fomin showed a refinement of such dynamic programming using the \emph{sphere cut decomposition} of the input graph \cite{ST94}, leading to improved running times~\cite{DPBF10}.
Such a decomposition is defined for an embedding of the graph on the surface of a sphere.
Informally, the sphere cut decomposition can be viewed as a recursive binary partition of the surface into regions, where each region is bounded by a simple closed curve called a \emph{noose}.
They show that there exists such a decomposition in which every noose intersects at most $\bw(G)$ vertices, where $\bw(G)$ denotes the  \emph{branchwidth} of the graph, and it can be found efficiently \cite{GT08}.
Dynamic programming over this decomposition then yields algorithms with running time $\Oh^*(2^{\Oh(\bw(G))})$. Since it is known that $\bw(G) \leq \sqrt{4.5n} \leq 2.122\sqrt{n}$ \cite{FT06}, the desired subexponential complexity follows.

\paragraph{Current Bounds.} Among the results of \cite{DPBF10}, the authors obtained an $\Oh(2^{6.903\sqrt{n}})$-time algorithm for \textsc{Planar Hamiltonian Cycle} and \textsc{Planar Travelling Salesman}, an $\Oh(2^{7.223\sqrt{n}})$-time algorithm for \textsc{Planar Longest Path}, and an $\Oh(2^{9.264\sqrt{n}})$-time algorithm for \textsc{Planar Feedback Vertex Set}.
In a related work, Dorn showed that fast matrix multiplication results can be applied to speed up these dynamic programming algorithms; 
however, in the presence of weights, 
this incurs a multiplicative factor of $W$ in the complexity, where $W$ is the largest weight in the graph~\cite{Dor06}.
This implies, for example, that \textsc{Planar Travelling Salesman} can be solved in $\Oh(W\cdot 2^{5.579\sqrt{n}})$ time and \textsc{Planar Longest Path} in $\Oh(W \cdot 2^{5.828\sqrt{n}})$ time.
For unweighted problems, the factor $W$ disappears; therefore, for instance, \textsc{Planar Hamiltonian Cycle} can be solved in $\Oh(2^{5.579\sqrt{n}})$ time.

Subsequent works introduced more powerful techniques that considerably improved the complexity of planar connectivity algorithms, specifically \emph{Cut \& Count} \cite{CNPPVRW22} and the \emph{rank-based approach} \cite{BCKN15}.
Pino, Bodlaender and Van Rooij combined these techniques with branchwidth to obtain state-of-the-art complexities in the deterministic unbounded-weight setting and in the randomized polynomial-weight setting \cite{PBvR17}.
For example, they showed $\Oh(2^{6.570\sqrt{n}})$ deterministic time and $\Oh(W\cdot 2^{5.036\sqrt{n}})$ randomized time algorithms for the \textsc{Travelling Salesman} problem.
Nonetheless, the best deterministic algorithm in the polynomial-weight regime still remains Dorn's $\Oh(2^{5.571\sqrt{n}})$ algorithm \cite{Dor06} (when adjusted for the current best upper bound on the matrix multiplication constant $\omega < 2.372$ \cite{ADWXXZ25}).
For \textsc{Planar Hamiltonian Cycle}, until now, this has been the best-known deterministic complexity.
The authors of \cite{PBvR17} additionally applied their approach to the \textsc{Planar Connected Dominating Set}, \textsc{Planar Steiner Tree} and \textsc{Planar Feedback Vertex Set} problems.
Together with \cite{Dor06}, these two papers provide the state-of-the-art complexities for these problems, see \Cref{tab:sota}.

\paragraph{Meanders.}

A \emph{meander} of order $n$ is a Jordan curve that crosses a horizontal line in $2n$ points, see \Cref{fig:meander}.
Meanders are planar objects with various intriguing connections to different areas of mathematics, physics, and computer science, including computational geometry \cite{HMRT86}, knot theory \cite{BM20}, statistical physics \cite{DGG00}, as well as labyrinths and recreational mathematics \cite{Zvo24}.
The systematic study of meanders was initiated in \cite{LZ92,LZ93,Bac99}, and a lot of effort has gone into calculating the asymptotic growth $M_n$ of the number of meanders of order $n$ \cite{AP05}, called the \emph{meandric number}.
\begin{figure}[ht!]
    \centering
        \begin{tikzpicture}[scale=.5]
            \draw (-0.5,0) -- (11.5,0);
        
            \draw[thick,color=RoyalBlue]
            (0,0) to[out=90,in=90,looseness=1] (11,0)
                  to[out=-90,in=-90,looseness=1.3] (6,0)
                  to[out=90,in=90,looseness=1.3] (7,0)
                  to[out=-90,in=-90,looseness=1.1] (10,0)
                  to[out=90,in=90,looseness=1.3] (9,0)
                  to[out=-90,in=-90,looseness=1.3] (8,0)
                  to[out=90,in=90,looseness=1] (1,0)
                  to[out=-90,in=-90,looseness=1.3] (4,0)
                  to[out=90,in=90,looseness=1.3] (3,0)
                  to[out=-90,in=-90,looseness=1.3] (2,0)
                  to[out=90,in=90,looseness=1.3] (5,0)
                  to[out=-90,in=-90,looseness=1.3] (0,0);
        \end{tikzpicture}
    \caption{A meander of order 6.}
    \label{fig:meander}
\end{figure}

It has already been noticed in \cite{DPBF10} that the complexity of the dynamic programming algorithms for planar connectivity problems can be improved by ignoring some planar configurations that cannot lead to a solution.
Bobier and Sawada \cite{BS10} noticed that such configurations are closely related to meanders.
In this work, we explore this elegant connection in greater depth and show that planar connectivity problems admit improved algorithms by relating them to certain types of meanders and by excluding forbidden planar arrangements that cannot be related to such configurations.

In the algorithm of \cite{DPBF10} for the \textsc{Planar Hamiltonian Cycle}, the meanders implicitly appear as follows.
Conceptually, for each noose of the sphere cut decomposition, dynamic programming calculates all possible (topological) ways in which a partial Hamiltonian cycle of the corresponding subgraph (which is just a collection of paths that pass through every vertex in the subgraph) can intersect this noose.
The \emph{merging step} of the DP tries to combine two such partial solutions from the two adjacent nooses, which overlap in a continuous curve which we call the frontier.
The topology of these partial solutions can be such that combining them together induces a cycle disconnected from the rest of the graph.
Such a situation cannot result in a Hamiltonian cycle, hence it can be computationally ignored, reducing the running time of the algorithm.
The allowed configurations on the frontier are not meanders but rather \emph{meandric systems}, which allow for multiple disconnected curves, which are not necessarily closed.
Bobier and Sawada \cite{BS10} noticed that in the \textsc{Planar Hamiltonian Cycle} case, the relevant type of meandric systems are
\emph{acyclic meander systems}, i.e., those that avoid cycles (no closed curves are allowed). 
Denoting the asymptotic number of such systems on $n$ crossings by $\Theta^*(z^n)$, a trivial upper bound is $z \leq 4$.
The authors of \cite{DPBF10} gave a better bound of $z \leq 3.682$, which resulted in a faster algorithm for the \textsc{Planar Hamiltonian Cycle}.
Afterwards, Bobier and Sawada calculated the exact numbers of such systems up to $n = 28$, but did not prove a better upper bound on $z$.

\paragraph{Our Results.}
In this work, we use meanders and techniques from meander analysis to improve algorithms and their complexities for well-known planar graph connectivity problems, such as \textsc{Planar Travelling Salesman}, \textsc{Planar Longest Cycle/Path}, \textsc{Planar Connected Dominating Set}, \textsc{Planar Steiner Tree}, and \textsc{Planar Feedback Vertex Set}.

First, we show that the best-known upper bound on the meandric numbers from \cite{AP05} can be applied to the acyclic meandric systems.
By implementing the corresponding algorithm from \cite{AP05} and running the software on a modern machine, we obtain a refined upper bound on $z$.
In fact, our simulations provide the new best upper bound on the meandric numbers themselves, which improves upon their upper bound of $12.901$.
\begin{restatable}{theorem}{upperbound}\label{thm:upper-bound}
    The number of meanders of order $n$ is at most $\Oh^*(z^{2n}) = \Oh^*(12.806^n)$, and $z < 3.579$.
\end{restatable}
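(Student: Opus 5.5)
The plan is to follow the Albert–Paterson method \cite{AP05} and reduce everything to a single counting quantity. Cutting a meander of order $n$ along the horizontal line gives two arch systems, one above and one below, each on $2n$ points. Generalizing this, a \emph{meandric system} is any pair of such arch systems; its components are closed curves, and in the open version (the frontier configuration) the curves may also be open paths. An \emph{acyclic} meandric system on $n$ crossings is one in which no component is a closed curve. Let $z^n$ denote the exponential growth of the number of acyclic systems on $n$ crossings. The first step is to show that $M_n = \Oh^*(z^{2n})$.

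\textbf{Step 1: bounding $M_n$ by acyclic systems.} Take a meander on $2n$ crossings and cut the line between crossing $n$ and crossing $n+1$. Both halves are then acyclic open systems on $n$ crossings. Conversely, the meander is determined by this pair of halves up to a polynomial amount of gluing data. This gives $M_n \le \poly(n)\cdot A_n^2$, where $A_n$ is the number of acyclic systems on $n$ crossings. Hence $M_n = \Oh^*(z^{2n})$, and the bound $12.806^n$ follows once we prove $z < \sqrt{12.806} \approx 3.5786$. Everything therefore reduces to proving $z<3.579$.

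\textbf{Step 2: a transfer-matrix upper bound for $z$.} Scan the configuration left to right, one crossing at a time. The state after $k$ crossings is the connectivity pattern of the open arc ends above and below the scan line, where some patterns are merged or relaxed to keep the state space finite. Adding a crossing is a local transition: open a new arc, close an existing arc, or pass through. A transition is forbidden if it would close a loop.

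The number of acyclic systems is then at most the number of walks in a weighted automaton. Following \cite{AP05}, we bound this by the dominant eigenvalue of a finite transfer matrix $T_k$. It is built by truncating the state space to windows of depth $k$ and over-approximating every state beyond the window, for example by forgetting the deep nesting structure. The over-approximation is always in the direction of allowing more configurations, so the eigenvalue $\lambda(T_k)$ remains a valid upper bound on $z$ for every $k$. It also decreases toward the true value as $k$ grows.

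\textbf{Step 3: computation and certification.} I would implement this construction and compute $\lambda(T_k)$ for the largest $k$ a modern machine allows, increasing $k$ until the computed value falls below $3.579$.

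To make the result rigorous rather than merely numerical, the eigenvalue must be certified. I would exhibit an explicit positive vector $v$ and verify $T_k v \le \lambda v$ entrywise for some $\lambda<3.579$. This is a Collatz–Wielandt certificate, and the verification should use exact rational arithmetic.

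\textbf{Main obstacle.} The hardest part is making the truncated automaton a provable over-approximation that is still sharp enough to beat $3.579$. Albert–Paterson's parameters give $\sqrt{12.901}\approx 3.592$, so modest gains come only from larger $k$. The state space grows exponentially in $k$, which means memory-efficient sparse power iteration and careful symmetry reduction of the states are essential.
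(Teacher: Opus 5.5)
Your Step 1 is correct, and sharper than you state. A meander of order $n$ is a word $w_1w_2$ over $\{\texttt{U},\texttt{D},\texttt{L},\texttt{R}\}$ with $|w_1|=|w_2|=n$. The prefix $w_1$ is a unidirectional open acyclic system, and the suffix $w_2$ is a mirrored one. Since the concatenation is the meander, $M_n\le U_n^2$ with no gluing data at all.

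This gives the literal inequality $M_n=\Oh^*(z^{2n})$ more directly than the paper does. The paper instead bounds both quantities by one computed constant: every meander word is $\texttt{U}w\texttt{D}$ with $w$ avoiding the forbidden set $\mathcal M'$, and every bidirectional acyclic system avoids $\mathcal M'$.

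Note that $3.579^2\approx 12.809$. So the target needed for the $12.806^n$ claim is $z\le 3.5785$, not merely $z<3.579$.

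The genuine gap is in Steps 2--3, which carry the entire quantitative content of the theorem. An over-approximating automaton does give a valid upper bound in principle. But you give no evidence that your depth-$k$ window matrix $T_k$ ever has spectral radius below $3.5785$ at any feasible $k$. Your claim that $\lambda(T_k)$ decreases to the true growth rate is unproved, and even if true, it gives no rate.

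The margin is tiny. The paper's bound comes from the Goulden--Jackson cluster method applied to words avoiding submeander factors generated from standard representatives. It crosses the threshold only at length $24$, where $\sqrt{12.805332}\approx 3.5785$. At length $22$ it gives $\sqrt{12.8229}\approx 3.581$, which does not suffice.

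Moreover, Albert and Paterson \cite{AP05} do not use a depth-window transfer matrix. Their figure $12.901$ therefore says nothing about how your $T_k$ performs, and your claim that modest gains come only from larger $k$ has no basis.

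As written, your argument only yields $z\le\lambda(T_k)$ for whatever value an unexecuted computation would return. To prove the statement, you must either carry out and certify a computation that actually beats $3.5785$, or rely on the cluster-method computation and its reported value.
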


The new upper bound on $z$ improves the complexity upper bound of the algorithms for the \textsc{Planar Travelling Salesman} and \textsc{Planar Longest Cycle/Path} in~\cite{DPBF10}.
These two problems belong to what we call the \emph{cycle optimization} class of problems, in which we need to find an optimal cycle of the given graph.
Regarding the corresponding forbidden meandric systems, in such problems we are excluding configurations that contain multiple Jordan curves, see \Cref{fig:no-cycle-opt}.

\begin{figure}[ht!]
\centering

\begin{subfigure}[t]{0.3\textwidth}
    \centering
    \includegraphics[page=2,width=\textwidth]{figures/noose.pdf}
    \caption{}
    \label{fig:no-cycle-opt}
\end{subfigure}
\hfill
\begin{subfigure}[t]{0.3\textwidth}
    \centering
    \includegraphics[page=3,width=\textwidth]{figures/noose.pdf}
    \caption{}
    \label{fig:no-tree-opt}
\end{subfigure}
\hfill
\begin{subfigure}[t]{0.3\textwidth}
    \centering
    \includegraphics[page=4,width=\textwidth]{figures/noose.pdf}
    \caption{}
    \label{fig:no-forest-opt}
\end{subfigure}
\caption{ 
Illustration of some forbidden closed meandric configurations. The vertices of a graph are represented by small black circles and its edges are represented by solid and by dashed segments. The solid edges correspond to potential solutions, which induce forbidden situations.
(a) For the cycle optimization problems (e.g.\ \textsc{Planar Travelling Salesman}), the forbidden situation is a collection of cycles which corresponds to a closed meandric system with multiple loops;
(b) for the tree optimization problems (e.g.\ \textsc{Planar Steiner Tree}), a forbidden situation is, for instance, a subforest of the graph with multiple components which again corresponds to a closed meandric system with multiple loops; and 
(c) for the forest optimization problems (e.g.\ \textsc{Planar Feedback Vertex Set}), the forbidden situation is a subgraph containing a cycle which corresponds to a forbidden closed forest meandric system because it contains an enclosed loop.
}
\label{fig:no-instances}
\end{figure}

For planar graph problems like \textsc{Planar Connected Dominating Set} and \textsc{Planar Steiner Tree}, the goal is to find an optimal subtree of the given graph; we call the class of such problems the \emph{tree optimization} problems.
We adapt the meander technique to this case by interpreting the Eulerian tour of the subtree as a Jordan curve outline, which allows us to show improved DP algorithms for these problems as well.
In this case the forbidden meandric systems again are those that contain multiple Jordan curves, see \Cref{fig:no-tree-opt}.

Finally, for the \textsc{Planar Feedback Vertex Set} problem, it is known that it can be interpreted as finding some optimal subforest of the graph; problems such as this we place in the class of \emph{forest optimization} problems.
We then define and investigate the corresponding \emph{forest} meandric systems.
The forbidden instances in this case are configurations that contain a nested Jordan curve, see \Cref{fig:no-forest-opt}.
To count the number of such systems, we analyze the generating function of the corresponding formal language, similarly as in \cite{AP05}.
This gives an improved DP algorithm for this problem as well.

We then show that these methods can be combined with fast matrix multiplication \cite{ADWXXZ25} to obtain improved complexities in the setting with polynomially-weighted graphs in the same manner as in \cite{Dor06}.
All of our final classical complexities are given in \Cref{tab:results}.
The table describes upper bounds on the complexity of DP, augmented by either the fast matrix multiplication technique (FMM), or meanders, or both.
The FMM-based algorithms have a linear dependence on $W$, which is the maximum weight in the input graph.
We will refer to the setting in which $W = n^{\Oh(1)}$ as the \emph{bounded-weight} setting, and to the setting with arbitrarily-large weights as the \emph{unbounded-weight}~setting.

\begin{table}[ht!]
\begin{center}
\small
\begin{tabular}{p{5cm} || C{2.3cm} | C{2.3cm} || C{2.3cm} | C{2.3cm}}
     Problem & \makecell{DP\\ \cite{DPBF10}} & \makecell{DP\\+ Meanders\\(this paper)} & \makecell{DP + FMM\\\cite{Dor06}} & \makecell{DP + FMM\\ + Meanders\\(this paper)} \\ \hline\hline
     \textsc{Planar TSP}                & $\Oh(2^{6.987\sqrt n})$   & $\Oh(2^{6.876\sqrt n})$   & $\Oh(W 2^{5.655\sqrt n})$     & $\Oh(W 2^{5.543\sqrt n})$\\
     \textsc{Planar Longest Cycle/Path}                & $\Oh(2^{7.223\sqrt n})$   & $\Oh(2^{7.128\sqrt n})$   & $\Oh(W 2^{5.958\sqrt n})$     & $\Oh(W 2^{5.796\sqrt n})$\\
     \textsc{Planar Connected DS}       & $\Oh(2^{9.910\sqrt n})$   & $\Oh(2^{9.611\sqrt n})$   & $\Oh(W 2^{8.849\sqrt n})$     & $\Oh(W 2^{8.251\sqrt n})$\\
     \textsc{Planar Steiner Tree}       & $\Oh(2^{9.264\sqrt n})$   & $\Oh(2^{8.946\sqrt n})$   & $\Oh(W 2^{8.674\sqrt n})$     & $\Oh(W 2^{8.037\sqrt n})$\\
     \textsc{Planar Feedback VS}   & $\Oh(2^{9.264\sqrt n})$   & $\Oh(2^{9.119\sqrt n})$   & $\Oh(W 2^{8.674\sqrt n})$     & $\Oh(W 2^{8.382\sqrt n})$
\end{tabular}
\end{center}
\caption{Comparison between complexities obtained by combinations of different techniques: dynamic programming (DP, \cite{DPBF10}); dynamic programming together with fast matrix multiplication (DP + FMM, \cite{Dor06}); meandric systems (Meanders, this paper).
This table is dedicated to comparing the power of the techniques, as for some of the problems \cite{DPBF10,Dor06} have given slightly improved complexities using (weaker) meandric arguments.
The two leftmost columns are set in the unbounded-weight setting, and the two rightmost columns are set in the bounded-weight setting ($W$ denotes the maximum weight in the graph).
For the state-of-the-art complexities of the five problems, see \Cref{tab:sota}.}
\label{tab:results}
\end{table}

\textit{Bounded-Weight Setting.}
Our results improve the best-known deterministic complexity of several planar graph problems in the bounded-weight setting, including \textsc{Planar Travelling Salesman}, \textsc{Planar Logest Cycle/Path}, and \textsc{Planar Connected Dominating Set}, see \Cref{tab:sota}, where the improved complexities are underlined.
In particular, this gives the current best-known deterministic running time $\Oh(2^{5.543\sqrt{n}})$ for the \textsc{Planar Hamiltonian Cycle} problem.
However, we note that the randomized Monte Carlo algorithms of \cite{PBvR17} are still more efficient than our algorithms for all of the given problems.
Nonetheless, it is worth mentioning that since our complexities depend on the value of $z$, improving the upper bound on $z$ would immediately improve the asymptotic estimates.

We have also identified a number of discrepancies in the analysis of the running times in \cite{Dor06} for \textsc{Planar Connected Dominating Set}, \textsc{Planar Steiner Tree} and \textsc{Planar Feedback Vertex Set}, and a corner-case flaw in the algorithm for \textsc{Planar Steiner Tree} in \cite{DPBF10}.
Those have resulted in incorrect smaller complexities; 
our tables contain the actual complexities after making necessary corrections.
We describe the issues in detail in \Cref{app:mistakes}, where we also provide corrections in the complexity analysis.

\begin{table}[bt!]
\begin{center}
\small
\begin{tabular}{p{5cm} || C{2.3cm} | C{2.3cm} | C{2.3cm} || C{2.3cm}}
     Problem & \makecell{Deterministic\\\cite{Dor06}} & \makecell{Deterministic\\(this paper)} & \makecell{Randomized\\ \cite{PBvR17}} & \makecell{Deterministic\\ \cite{PBvR17}} \\ \hline\hline
     \textsc{Planar TSP}                & $\Oh(W2^{5.571\sqrt n})$           & $\underline{\Oh(W2^{5.543\sqrt n})}$   & $\Oh(W2^{5.034\sqrt n})$ & $\Oh(2^{6.570\sqrt n})$\\
     \textsc{Planar Longest Cycle/Path}                & $\Oh(W 2^{5.819\sqrt n})$   & \underline{$\Oh(W 2^{5.796\sqrt n})$}   &  $~\Oh(W2^{5.034\sqrt n})^\ddag$ & ~$\Oh(2^{6.912\sqrt n})^{\ddag}$\\
     \textsc{Planar Connected DS}       & ~$\Oh(W2^{8.674\sqrt n})^{\dagger}$   & $\underline{\Oh(W2^{8.251\sqrt n})}$   & $\Oh(W2^{5.034\sqrt n})$ & $\Oh(2^{8.777\sqrt n})$\\
     \textsc{Planar Steiner Tree}       & ~$\Oh(W2^{8.674\sqrt n})^{\dagger}$   & $\underline{\Oh(W2^{8.037\sqrt n})}$   & $\Oh(W2^{3.989\sqrt n})$ & $\Oh(2^{8.038\sqrt n})$\\
     \textsc{Planar Feedback VS}   & ~$\Oh(W2^{8.674\sqrt n})^{\dagger}$   & $\Oh(W2^{8.382\sqrt n})$   & $\Oh(W2^{3.989\sqrt n})$ & $\Oh(2^{8.038\sqrt n})$
\end{tabular}
\end{center}
\caption{Comparison of our results with the previous state-of-the-art complexities. 
The symbol $\dagger$ denotes corrected complexities (see \Cref{app:mistakes}), fixing discrepancies in the analysis in previous works.
The underlined entries are results of this paper that improve the state-of-the-art complexities for the corresponding problems, 
except for the last, where the existing unbounded-weight algorithm is more efficient.
The bounds from the previous results were updated using the best-known value $\omega < 2.372$ \cite{ADWXXZ25}.
The complexity denoted by $^\ddag$ follows from \cite{PBvR17}, see \Cref{app:longest}.}
\label{tab:sota}
\end{table}

\section{Preliminaries}

In this section, we provide preliminaries and basic definitions. For standard concepts and terminology about graphs and their embeddings, we refer the reader to~\cite{Die12}. Unless stated otherwise, all graphs in this paper are connected, finite, and simple.

\paragraph{Graph Notation.}
For an undirected graph $G=(V,E)$, we denote by $V(G)$ and $E(G)$ the set $V$ of vertices and the set $E$ of edges of $G$, respectively, and we denote by $n$ the cardinality of $V(G)$. We also denote by $\delta(G)$ the degree of the vertex of $V(G)$ with the minimum degree. 
A curve is \emph{simple} if it does not self-intersect.
A \emph{Jordan arc} (resp.\ \emph{Jordan curve}) is an open (resp.\ closed) simple curve in the plane.
A \emph{planar drawing} of a graph $G$ in the plane 
maps each vertex $v$ of $G$ to a point $p(v)$ and each edge $\{u,v\} \in E(G)$ to a Jordan arc connecting $p(u)$ and $p(v)$ 
that contains no other vertex and does not intersect any other arc, except at a common endpoint.
A graph is \emph{planar} if it admits a planar drawing.
Analogously, we can define a \emph{spherical drawing} as a drawing of $G$ on the surface of a sphere. It is easy to see that a graph is planar if and only it admits a spherical drawing.
An edge $e\in E(G)$ is a \emph{bridge} if its removal increases the number of connected components of $G$. A graph is said to be \emph{bridgeless} if it contains no bridges.

\paragraph{Problems.}
In this paper, we consider the following five problems.
For a planar graph $G=(V,E)$, let $w:E\rightarrow \mathbb{N}$ be a weight function defined on the edges of~$G$.
In the {\sc Planar Travelling Salesman} problem (Planar TSP), we seek the minimum length $\sum_{e\in C}w(e)$ of a cycle $C$ in $G$ that passes through all vertices in $V$ exactly once (a Hamiltonian Cycle). 
In the {\sc Planar Longest Cycle/Path} problem,
we seek the maximum weight of a cycle/path $C$ in $G$. In the {\sc Planar Steiner Tree} problem, we are given a subset $X\subseteq V$ of \emph{terminal} vertices and we seek the minimum total edge weight of a connected subgraph $H\subseteq G$ such that $X \subseteq V(H)$. 

For other problems, let $w:V\rightarrow \mathbb{N}$ be a vertex weight function.
In the {\sc Planar Connected Dominating Set} problem, we seek the minimum total weight of a set $D\subseteq V$ such that for any vertex $v \in V$, either $v \in D$ or $v$ is \emph{dominated} by a neighbor, i.e., a neighbor of $v$ is in $D$, and the subgraph induced by $D$ is connected.
Finally, in the {\sc Planar Feedback Vertex Set} problem, we seek the minimum weight $\sum_{v\in Y}w(v)$ of a set $Y\subseteq V$ such that every cycle in $G$ contains at least one vertex from $Y$.

In \cref{app:problems}, we show how all these problems can be reduced in polynomial time to the case of bridgeless graphs.

\paragraph{Sphere Cut Decomposition.}

Let $G$ be a planar graph on $n$ vertices.
A \emph{branch decomposition} of $G$ is a pair $(T,\mu)$, where $T$ is an unrooted ternary tree and $\mu: L \leftrightarrow E(G)$ is a bijection between the set of leaves $L$ of $T$ and the edges of $G$.
For an edge $e \in E(T)$, define $T_1$ and $T_2$ as the disconnected subtrees obtained by removing $e$ from $T$.
Let $G_i$ be the subgraph of $G$ induced by the edge set $\{\mu(f) \mid f \in L \cap V(T_i)\}$ for $i \in \{1,2\}$.
The \emph{middle set} of an edge $e \in T$ is then defined as $\midset(e) = V(G_1) \cap V(G_2)$.
The \emph{width} of the branch decomposition $(T,\mu)$ is the maximum of $\left|\midset(e)\right|$ over all $e \in T$.
The \emph{branchwidth} of $G$ is the minimum width over all branch decompositions of $G$, denoted by $\bw(G)$ (we will sometimes shorthand it simply by $\bw$, as the argument will always~be~$G$).
It is well-known that for an $n$-vertex planar graph $G$, we have $\bw(G)\leq \sqrt{4.5 n} \leq 2.122\sqrt n$~\cite{FT06}.

Consider a \emph{topological graph} $G$, i.e., a planar graph with an associated spherical drawing.  
The connected components of the sphere obtained by removing $G$ from it are called \emph{faces}.
An \emph{$O$-arc} is a Jordan curve on the sphere.
A \emph{noose} of $G$ is an $O$-arc that meets $G$ only at vertices and such that its intersection with a face is either empty or connected.
A \emph{sphere cut decomposition} of $G$ is a branch decomposition $(T,\mu)$ equipped with an ordering $\pi$, where for each $e \in T$ there exists a noose $O_e$ such that $O_e = \midset(e)$ and it bounds two open disks $\Delta_1$ and $\Delta_2$ of the sphere so that $G_i$ lies entirely in $O_e \cup \Delta_i$.
For each $e \in T$, $\pi(e)$ is a cyclic ordering of the vertices of $\midset(e)$ on the noose $O_e$ (clockwise or counter-clockwise).
See \Cref{fig:noose} for an illustration.

\begin{figure}[ht!]
    \centering
    \includegraphics[scale=0.6]{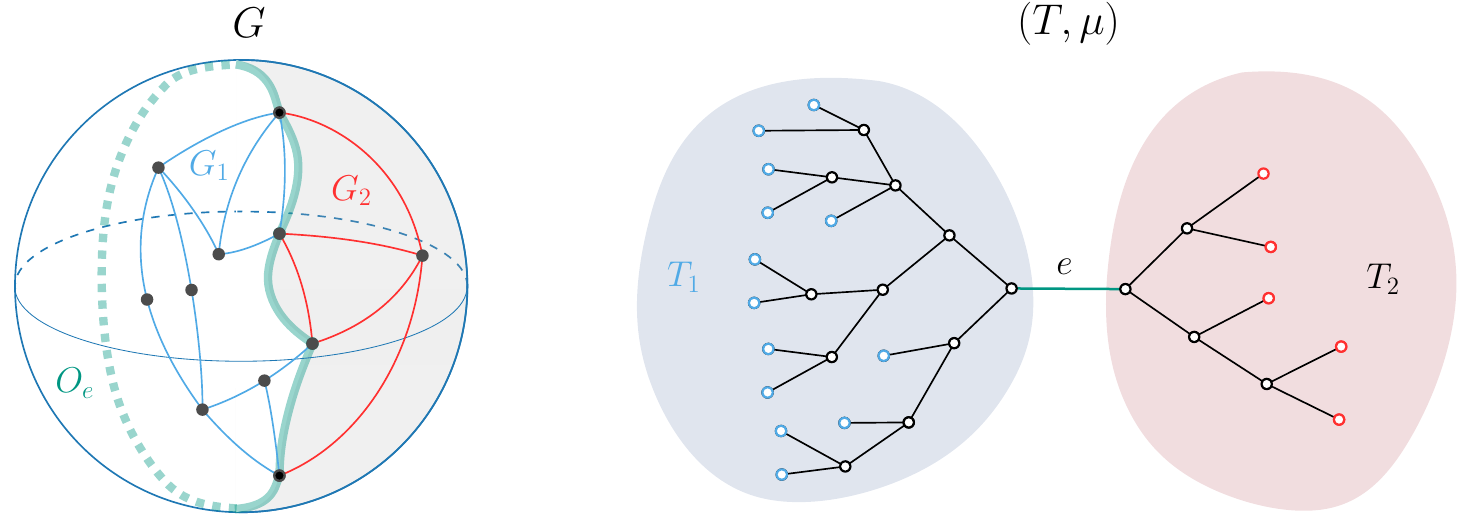}
    \caption{A topological graph $G$ embedded on a sphere and a sphere cut decomposition of $G$.}
    \label{fig:noose}
\end{figure}

An optimal sphere cut decomposition can be computed efficiently:

\begin{theorem}[\cite{ST94,GT08,DPBF10,MP15}] \label{thm:sc-decomposition}
Let $G$ be an $n$-vertex, connected, bridgeless, topological graph. There exists a sphere cut decomposition of $G$ of width $\bw(G)$ and such a decomposition can be constructed in $\Oh(n^3)$ time.
\end{theorem}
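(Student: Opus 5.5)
This statement is assembled from known results, so I would prove it by citing and combining them rather than building a decomposition from scratch. The plan has three ingredients.

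\textbf{Existence.} By Seymour and Thomas~\cite{ST94}, every planar graph of branchwidth at least $2$ has an optimal branch decomposition that is a \emph{carving} of the embedding. In such a decomposition the middle set of every tree edge is realized by a noose. The key steps are as follows.
\begin{enumerate*}[label=(\roman*)]
\item Pass to the medial graph $M_G$, which is well defined because $G$ is connected and bridgeless. Branch decompositions of $G$ correspond to carving decompositions of $M_G$, with width halved.
\item Use the ratcatcher/slope characterization of~\cite{ST94} to get a carving of $M_G$ of width $2\bw(G)$ in which every cut is a bond, meaning both sides are connected.
\item Show that a bond cut in the medial graph corresponds to a simple closed curve through faces and vertices of $G$. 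This curve meets each face in a connected set, hence is a noose $O_e$ with $O_e\cap V(G)=\midset(e)$, and it separates $G_1$ from $G_2$ as required.
\end{enumerate*}
The cyclic order $\pi(e)$ is then simply the order in which $O_e$ visits the vertices of $\midset(e)$.

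\textbf{Construction time.} An optimal branch decomposition of a planar graph can be computed in $\Oh(n^3)$ time. This follows from the Seymour--Thomas/Gu--Tamaki algorithm~\cite{GT08}, with the refinement of~\cite{MP15}, which computes the branchwidth via the ratcatcher in $\Oh(n^2)$ per test and builds the decomposition in $\Oh(n^3)$. I would then cite~\cite{DPBF10} for the fact that an optimal branch decomposition can be converted into a sphere cut decomposition of the same width within the same time bound. Concretely, each cut is replaced by the bond-type cut described above, and nooses and their cyclic orders are read off the embedding in linear time per tree edge. Since the tree has $\Oh(n)$ edges, all the $\pi(e)$ cost $\Oh(n^2)$ in total.

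\textbf{Main obstacle.} The hardest part is guaranteeing that every middle set is realized by a \emph{noose} rather than by an arbitrary separation. In particular, the closed curve must meet each face in a connected set, and it must be a single Jordan curve rather than several. I would handle this with the bond property of carvings in the medial graph. A minimal edge cut in a connected plane graph is dual to a single cycle in the dual graph. For $M_G$, the dual vertices correspond to faces and vertices of $G$, so this dual cycle is a closed curve through faces and vertices of $G$. Since the cycle is simple, it passes through each face at most once, which gives the connected-intersection condition. The bridgeless hypothesis is exactly what makes the medial-graph correspondence valid, so that no face is traversed twice.
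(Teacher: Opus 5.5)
Your proposal takes the same route as the paper, which gives no proof of its own and imports the statement from \cite{ST94,GT08,DPBF10,MP15}. Your sketch of why it holds is the standard Seymour--Thomas argument: an optimal bond carving of the medial graph, whose cuts dualize to simple cycles of the vertex--face incidence graph and hence to nooses through half as many vertices as there are cut edges.

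Two small corrections. First, the $\Oh(n^2)$ ratcatcher test is already in \cite{ST94} and the $\Oh(n^3)$ construction of an optimal branch decomposition is due to \cite{GT08}; \cite{MP15} appears, as far as I can tell, because of the bridgeless form of the hypothesis (\cite{DPBF10} only excluded degree-one vertices), not as an algorithmic refinement. Second, bridgelessness is not needed to define the medial graph, which exists for every plane graph, but to make it $2$-connected: for a bridge $e=uv$ with $\deg u,\deg v\ge 2$, the medial vertex $m_e$ of $e$ is a cut vertex, so $\delta(\{m_e\})$ is not a bond and no noose encloses $e$ alone.
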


\paragraph{Non-Crossing Partitions and Matchings.}
Let $S = \{1,\dots,n\}$ and let $P(n) = \{P_1,\dots,P_m\}$ be a partition of $S$.
Consider a geometric representation of $P(n)$ 
such that the elements of $S$ are represented by $n$ points labeled $1,\dots,n$ in clockwise order along a circle,
and each set $P_i$ is represented by the convex polygon (possibly consisting of a single point or of a single edge) whose vertices correspond to the elements of $P_i$. The partition $P(n)$ is \emph{non-crossing} if the polygons representing the sets do not intersect. It is easy to see that $P(n)$ is non-crossing if there do not exist $a < b < c < d$ such that $a,c \in P_i$ and $b,d \in P_j$ for some $i \neq j$. 
A partition in which every set $P_i$ has size two is a \emph{matching}. Referring to the above geometric representation, each matched pair is connected by a straight-line segment.
It is well-known that the numbers of non-crossing partitions and matchings on $n$ elements are the $n$-th and $n/2$-th Catalan numbers, respectively~\cite{Kre72}.

Non-crossing partitions and non-crossing matchings can be uniquely represented by balanced bracket sequences in the following way \cite{HMRT86}.
Lay out the circle and the $n$ points on a horizontal line segment.
For a matching, draw each edge as an arc connecting the corresponding points above this segment.
For a partition, each point is assigned a pair of brackets, one directly to the left and one to the right.
For a set $P_i$ of the partition, the first point is assigned a pair of brackets $(\texttt{[}, \texttt{[})$, the last point $(\texttt{]}, \texttt{]})$, and all points in the middle are assigned $(\texttt{]}, \texttt{[})$.
The sides of the polygon corresponding to $P_i$ then are drawn as arcs between the corresponding brackets, see \Cref{fig:brackets}.
The only exception is when $P_i$ is a singleton, in which case the corresponding point is assigned a pair $(\texttt{[},\texttt{]})$.
The resulting bracket sequence in both cases is balanced.
The place where the circle is cut defines the starting point of the bracket labeling.
In our applications, $P_i$ will often represent a connected subgraph in the interior of the circle that touches the circumference only at the points of $P_i$.
Thus, the shaded regions of the figure represent the topology of the connectivity of the subgraphs.

\begin{figure}[ht!]
    \centering
    \begin{subfigure}[t]{0.475\textwidth}
        \centering
        \begin{tikzpicture}
            \draw (-0.5,0) -- (5.5,0);
            \draw (-0.5,-.1) -- (-0.5,+.1);
            \draw (+5.5,-.1) -- (+5.5,+.1);
        
            \foreach \x in {0,1,2,3,4,5}
                \fill (\x,0) circle (2pt);
        
            \draw (0,0) to[out=60,in=120] (3,0);
            \draw (1,0) to[out=60,in=120] (2,0);
            \draw (4,0) to[out=60,in=120] (5,0);

            \small
            
            \node[below=4pt] at (0,0) {\texttt{[}};
            \node[below=4pt] at (1,0) {\texttt{[}};
            \node[below=4pt] at (2,0) {\texttt{]}};
            \node[below=4pt] at (3,0) {\texttt{]}};
            \node[below=4pt] at (4,0) {\texttt{[}};
            \node[below=4pt] at (5,0) {\texttt{]}};
        \end{tikzpicture}
    \end{subfigure}
    \begin{subfigure}[t]{0.475\textwidth}
        \centering
        \begin{tikzpicture}[scale=1]
            \draw (-0.5,0) -- (5.5,0);
            \draw (-0.5,-.1) -- (-0.5,+.1);
            \draw (+5.5,-.1) -- (+5.5,+.1);
        
            \foreach \x in {0,1,2,3,4,5}
                \fill (\x,0) circle (2pt);
        
            \def\eps{0.2}

            \coordinate (L1) at (-\eps,0);
            \coordinate (R1) at (\eps,0);
            \coordinate (L2) at (1-\eps,0);
            \coordinate (R2) at (1+\eps,0);
            \coordinate (L3) at (2-\eps,0);
            \coordinate (R3) at (2+\eps,0);
            \coordinate (L4) at (3-\eps,0);
            \coordinate (R4) at (3+\eps,0);
            \coordinate (L5) at (4-\eps,0);
            \coordinate (R5) at (4+\eps,0);
            \coordinate (L6) at (5-\eps,0);
            \coordinate (R6) at (5+\eps,0);
            
            \draw (L1) to[out=70,in=110] (R5);
            \draw (R1) to[out=60,in=120] (L4);
            \draw (L2) to[out=80,in=100] (R3);
            \draw (R2) to[out=60,in=120] (L3);
            \draw (R4) to[out=60,in=120] (L5);
            \draw (L6) arc[start angle=180,end angle=0,radius=\eps] (R6);
        
            \fill[pattern=north east lines, even odd rule]
                (L1) to[out=70,in=110] (R5) -- cycle
                (R1) to[out=60,in=120] (L4) -- cycle
                (R4) to[out=60,in=120] (L5) -- cycle;
                
            \fill[pattern=north east lines, even odd rule]
                (L2) to[out=80,in=100] (R3) -- cycle
                (R2) to[out=60,in=120] (L3) -- cycle;
                
            \fill[pattern=north east lines, even odd rule]
                (L6) arc[start angle=180,end angle=0,radius=\eps] (R6) -- cycle;

            \small
            
            \node[below=4pt] at (L1) {\texttt{[}};
            \node[below=4pt] at (R1)  {\texttt{[}};
        
            \node[below=4pt] at (L2) {\texttt{[}};
            \node[below=4pt] at (R2) {\texttt{[}};
        
            \node[below=4pt] at (L3) {\texttt{]}};
            \node[below=4pt] at (R3) {\texttt{]}};
        
            \node[below=4pt] at (L4) {\texttt{]}};
            \node[below=4pt] at (R4) {\texttt{[}};
        
            \node[below=4pt] at (L5) {\texttt{]}};
            \node[below=4pt] at (R5) {\texttt{]}};
        
            \node[below=4pt] at (L6) {\texttt{[}};
            \node[below=4pt] at (R6) {\texttt{]}};
        \end{tikzpicture}
    \end{subfigure}
    
    \caption{The bracket labelings of a non-crossing matching (left) and a non-crossing partition (right).}
    \label{fig:brackets}
\end{figure}

\paragraph{Meanders.}
A \emph{meander} of order $n$ is a Jordan curve that crosses a horizontal line in $2n$ points, see \Cref{fig:systems}.
Two meanders are equivalent if one can be obtained from the other by stretching and moving the curves without creating new intersections.
The number of meanders of order $n$ is called the \emph{meandric number} and is denoted by $M_n$.
An explicit formula for the meandric numbers is not known; nonetheless, close lower and upper bounds are known.
Albert and Paterson showed that the limit $M := \lim_{n \to \infty} M_n^{1/n}$ exists and $11.380 \leq M \leq 12.901$ \cite{AP05}.
Using numeric calculations, Jensen and Guttmann gave an estimate $M \approx 12.26$ \cite{JG00}.

A \emph{meandric system} with $n$ crossings is a collection of pairwise non-intersecting Jordan arcs and Jordan curves that cross a horizontal line in $n$ points, see again \Cref{fig:systems}. Additionally, the open ends of the Jordan arcs must extend either to the left or the right direction; if the ends are allowed to extend in both directions, we call a system \emph{bidirectional}; if the ends are allowed to extend only in one direction, we call a system \emph{unidirectional}, and w.l.o.g.~assume that the ends are extended to the right.
We will use the terminology of \cite{BS10} with a slight deviation: we will call a meandric system \emph{open} if Jordan arcs are allowed, and \emph{closed} otherwise; \emph{acyclic} if it contains no Jordan curves (the \cite{BS10} definition of open systems includes the acyclicity).

\begin{figure}[!ht]
    \begin{subfigure}[t]{0.475\textwidth}
        \centering
        \tiny
        \begin{tikzpicture}[scale=.7]
            \draw (-0.5,0) -- (7.5,0);
        
            \foreach \x in {0,1,2,3,4,5,6,7}
                \fill (\x,0) circle (2pt);
        
            \draw
            (0,0) to[out=90,in=90,looseness=1.2] (3,0)
                  to[out=-90,in=-90,looseness=1.3] (2,0)
                  to[out=90,in=90,looseness=1.3] (1,0)
                  to[out=-90,in=-90,looseness=1.1] (4,0)
                  to[out=90,in=90,looseness=1.3] (5,0)
                  to[out=-90,in=-90,looseness=1.3] (6,0)
                  to[out=90,in=90,looseness=1.3] (7,0)
                  to[out=-100,in=-80,looseness=0.9] (0,0);

            \node[left, yshift=+0.2cm] at (0,0) {\texttt{[}};
            \node[left, yshift=+0.2cm] at (1,0) {\texttt{[}};
            \node[right, yshift=+0.2cm] at (2,0) {\texttt{]}};
            \node[right, yshift=+0.2cm] at (3,0) {\texttt{]}};
            \node[left, yshift=+0.2cm] at (4,0) {\texttt{[}};
            \node[right, yshift=+0.2cm] at (5,0) {\texttt{]}};
            \node[left, yshift=+0.2cm] at (6,0) {\texttt{[}};
            \node[right, yshift=+0.2cm] at (7,0) {\texttt{]}};

            \node[left, yshift=-0.2cm] at (0,0) {\texttt{[}};
            \node[left, yshift=-0.2cm] at (1,0) {\texttt{[}};
            \node[left, yshift=-0.2cm] at (2,0) {\texttt{[}};
            \node[right, yshift=-0.2cm] at (3,0) {\texttt{]}};
            \node[left, yshift=-0.2cm] at (4,0) {\texttt{]}};
            \node[left, yshift=-0.2cm] at (5,0) {\texttt{[}};
            \node[right, yshift=-0.2cm] at (6,0) {\texttt{]}};
            \node[right, yshift=-0.2cm] at (7,0) {\texttt{]}};
        \end{tikzpicture}
    \end{subfigure}
    \quad
    \begin{subfigure}[t]{0.475\textwidth}
        \centering
        \tiny
        \begin{tikzpicture}[scale=.7]
            \draw (-0.5,0) -- (7.5,0);
        
            \foreach \x in {0,1,2,3,4,5,6,7}
                \fill (\x,0) circle (2pt);
        
            \draw
            (-0.5,+1) to[out=0,in=90,looseness=1] (2,0)
                      to[out=-90,in=-90,looseness=1.2] (5,0)
                      to[out=90,in=90,looseness=1.2] (6,0)
                      to[out=-90,in=180,looseness=1.2] (+7.5,-1);
            \draw
            (+7.5,-1.5) to[out=180,in=-90,looseness=.8] (1,0)
                      to[out=90,in=90,looseness=1.2] (0,0)
                      to[out=-90,in=180,looseness=.8] (+7.5,-2);
            \draw
            (3,0)     to[out=90,in=90,looseness=1.2] (4,0)
                      to[out=-90,in=-90,looseness=1.2] (3,0);
            \draw
            (+7.5,+.5) to[out=180,in=90,looseness=1.2] (7,0)
                      to[out=-90,in=180,looseness=1.2] (+7.5,-.5);

            \node[left, yshift=+0.2cm] at (0,0) {\texttt{[}};
            \node[right, yshift=+0.2cm] at (1,0) {\texttt{]}};
            \node[right, yshift=+0.2cm] at (2,0) {\texttt{]}};
            \node[left, yshift=+0.2cm] at (3,0) {\texttt{[}};
            \node[right, yshift=+0.2cm] at (4,0) {\texttt{]}};
            \node[left, yshift=+0.2cm] at (5,0) {\texttt{[}};
            \node[right, yshift=+0.2cm] at (6,0) {\texttt{]}};
            \node[left, yshift=+0.2cm] at (7,0) {\texttt{[}};

            \node[left, yshift=-0.2cm] at (0,0) {\texttt{[}};
            \node[left, yshift=-0.2cm] at (1,0) {\texttt{[}};
            \node[left, yshift=-0.2cm] at (2,0) {\texttt{[}};
            \node[left, yshift=-0.2cm] at (3,0) {\texttt{[}};
            \node[right, yshift=-0.2cm] at (4,0) {\texttt{]}};
            \node[right, yshift=-0.2cm] at (5,0) {\texttt{]}};
            \node[left, yshift=-0.2cm] at (6,0) {\texttt{[}};
            \node[left, yshift=-0.2cm] at (7,0) {\texttt{[}};
        \end{tikzpicture}
    \end{subfigure}
    
    \caption{A meander of order $4$ (left) and a bidirectional open meandric system that is not acyclic (right), with the respective bracket labelings.}
    \label{fig:systems}
\end{figure}

We can use the bracket labeling of non-crossing matchings also for the meandric systems; each crossing point has one bracket on each half-plane defined by the line.
Consider the half-plane above the line; the one below is analogous. An opening bracket \texttt{[} (resp.\ closing bracket \texttt{]}) besides the point denotes that either the next intersection of the arc with the line is to the right (resp.\ left) or that the arc extends to the right (resp. left).
Note that for an arbitrary meandric system the labelings on each half-plane will not necessarily be balanced bracket sequences; unmatched brackets correspond to the open ends of the arcs extending to the left and to the right.

\paragraph{Complexity Notation.}

To denote the complexity of our algorithms, we use the standard big-O notation, which we denote by $\Oh$.
We write $\Oh^*(f(n))$ to denote $\Oh(f(n) \poly(n))$ and $\widetilde\Oh(f(n))$ to denote $\Oh(f(n) \poly \log(n))$.
We study the classical algorithms in the standard word RAM model, in which if the size of the input is $N$, the arithmetic operations on $\Oh(\log N)$-bit integers and memory operations are assumed to take unit time.


\section{Faster Dynamic Programming over Sphere Cut Decomposition}

In this section, we first describe the dynamic programming scheme of Dorn, Penninkx, Bodlaender and Fomin \cite{DPBF10} on the sphere cut decomposition and its enhancement using fast matrix multiplication, as shown by Dorn~\cite{Dor06}.
Our review of these tools and of their analysis is necessary to describe our approach to speed up both these frameworks using meandric systems. 
A reader familiar with the frameworks of \cite{DPBF10,Dor06} may continue to \Cref{sec:speedup}, after getting familiar with our notation introduced in the first part of this section. 
To facilitate the understanding of the general scheme and of our meandric framework, in this section, we describe them for solving the \textsc{Planar Travelling Salesman} problem.
This approach can be applied to other cycle optimization problems (a class of planar graph problems in which we need to find an optimal cycle in the graph), for example, we adapt it to \textsc{Planar Longest Cycle/Path} at the end of the section.
In the subsequent sections, we will show how to employ the framework to design improved exact algorithms for several other planar graph problems.

\subsection{The Dynamic Programming Algorithm} \label{sec:dp}

In this section, we describe the dynamic programming algorithm of~\cite{DPBF10}, with slight modifications for our purposes.
The algorithm first computes an optimal sphere cut decomposition $(T,\mu,\pi)$, see \Cref{thm:sc-decomposition}.
Let $\ell = \bw(G)$ be the width of this decomposition.
The algorithm then will iterate over all edges $r \in G$ and compute the length of the optimal Hamiltonian cycle that passes through $r$ if such exists; this will add only $\Oh(n)$ multiplicative factor in the complexity.
Let $e_{\rho} = \{\rho_1,\rho_2\} \in T$ be the edge such that $\rho_1$ is a leaf in $T$ and $\mu(\rho_1) = r$.
The noose $O_{\rho}$ that corresponds to this edge partitions the sphere into two disks, where one of those disks contains only the edge $r$, and the other contains $G \setminus r$.
We call $O_{\rho}$ the \emph{root noose}.

Root the tree $T$ at the vertex $\rho_1$.
The algorithm recursively computes a \emph{DP table} for each edge of $T$ using bottom-up dynamic programming.
Each of those tables stores information about partial solutions computed on the subgraph lying inside the noose of the corresponding edge.
We have a small difference from \cite{DPBF10} in that we specifically choose the root noose to be of constant length, because the processing of it is slightly different from processing the internal nooses (\cite{DPBF10} instead pick an arbitrary edge in $T$ and append an additional fictitious root edge).
In this way we avoid dealing with the complexity of processing the root noose, as it will take constant time.

Let $e$ be an edge of $T$, and denote by $G_e$ the subgraph of $G$ corresponding to the subtree of $T$ bounded by the noose $O_e$.
A \emph{partial Hamiltonian cycle} of $G_e$ is a collection of disjoint paths (possibly consisting of a single vertex) in $G_e$ such that each vertex of $G_e$ is in some path and each of the paths has both endpoints in $\midset(e)$. \Cref{fig:hamiltonian} shows (left side) two adjacent nooses and (right side) their partial Hamiltonian Cycles.
Partial Hamiltonian cycles are used to construct, bottom-up, other partial Hamiltonian cycles of increasing size, with the purpose that at the root, they can be used to construct a full Hamiltonian cycle of $G$.
To compose a partial Hamiltonian cycle of $G_e$ with other partial Hamiltonian cycles, it is important only to retain the topology of its intersection with $\midset(e)$.
To model this topology we introduce labels. A \emph{labeling} of a vertex subset $X \subseteq V(G)$ is a mapping $s : X \to \Sigma$, where $\Sigma$ is the chosen set of labels.

For a partial Hamiltonian cycle $H$ of $G_e$, the labeling is defined as follows.
Each vertex of $\midset(e)$ is assigned one of the four symbols $\Sigma = \{\texttt{0},\texttt{2},\texttt{[},\texttt{]}\}$.
A vertex $v$ is assigned $\texttt{0}$ or $\texttt{2}$ if the degree of $v$ in $H$ is $0$ or $2$, respectively.
Otherwise, the vertex is assigned either the label \texttt{[} or \texttt{]}.
These labels define a non-crossing matching of the vertices of $\midset(e)$ labeled by the bracket symbols.
The cyclic ordering over the vertices is prescribed by $\pi(e)$.
\Cref{fig:hamiltonian} shows labelings for two adjacent nooses (right).
The arc of the matching between two vertices labeled by \texttt{[} and \texttt{]} corresponds to a path of~$H$ that connects them.
Between the two endpoint vertices, a path may pass through other vertices of $\midset(e)$, labeled by \texttt{2}.

For a labeling $s$ of $\midset(e)$, define the DP table value $D[e][s]$ as the minimum weight of a partial Hamiltonian cycle of $G_e$ compatible with $s$, or $+\infty$ if none exists.
Now we describe how to construct the DP table for an edge $e_P$, and denote $G_{e_P}$ by $G_P$ for simplicity.
If $e_P \neq e_{\rho}$ is an edge incident to a leaf of $T$, $O_P$ encloses a single edge $\{u,v\}$ of $G$, so the valid labelings are $(\texttt{0},\texttt{0})$ (with weight $0$) and $(\texttt{[},\texttt{]})$ (with weight $w(\{u,v\})$).
Otherwise, we look at its two child edges $e_L$ and $e_R$.
We assume that the algorithm has already computed the DP tables for $e_L$ and $e_R$.
Let the nooses of $e_P$, $e_L,$ and $e_R$ be $O_P$, $O_L$ and $O_R$, respectively.
By the definition of the sphere cut decomposition, the union of the disks $\Delta_L$ and $\Delta_R$ bounded by $O_L$ and $O_R$, respectively, is equal to the disk $\Delta_P$ bounded by $O_P$.
Following the notation of \cite{Dor06}, we define the following sets of vertices on these nooses (illustrated in \Cref{fig:hamiltonian} on the left):
\begin{enumerate*}[label=(\roman*)]
    \item the \emph{left} set $L = \midset(e_P) \setminus \midset(e_R)$;
    \item the \emph{right} set $R = \midset(e_P) \setminus \midset(e_L)$;
    \item the \emph{forget} (which we also call \emph{frontier}) set $F = (\midset(e_L) \cup \midset(e_R)) \setminus \midset(e_P)$;
    \item the \emph{intersection} set $I = \midset(e_L)\cap \midset(e_R) \cap \midset(e_P)$.
\end{enumerate*}

\begin{figure}[ht!]
    \centering
    \adjustbox{valign=c}{
    \begin{tikzpicture}[scale=1.2]
        \input{fig-nooses}
    \end{tikzpicture}
    }
    \hspace{1cm}
    \adjustbox{valign=c}{
    \begin{tikzpicture}[scale=1.2]
        \input{fig-labeling}
    \end{tikzpicture}
    }
    \caption{On the left, the left, right, forget and intersection sets $L$, $R$, $F$ and $I$.
    In green, the order of the labeling for $O_P$, and in blue, the order of the labelings on $O_L$ and $O_R$.
    On the right, a partial Hamiltonian cycle of $G_P$ and the corresponding vertex labelings for the nooses $O_L$ and $O_R$.
    The paths of the partial Hamiltonian cycle pass through all vertices of $G_P$, internal in $O_L$ and $O_R$, which are not depicted for clarity.
    In green, the labeling $s_P$ on the boundary on $O_P$, and in blue, feasible labelings $s_L$ and $s_R$ on $O_L$ and $O_R$.}
    \label{fig:hamiltonian}
\end{figure}

The nooses $O_L$ and $O_R$ enclose the subgraphs $G_L = G_{e_L}$ and $G_R=G_{e_R}$ of $G_P$, respectively.
Note that $|I| \leq 2$, since there are at most two points on the sphere where the noose $O_P$ intersects with both $O_L$ and $O_R$ ($I_1$ and $I_2$).
In the rest of the paper, we will ignore $I$ in the complexity analysis for simplicity, since it does not impact the asymptotic complexity. 
The left side of \Cref{fig:hamiltonian} graphically depicts the subsets $L$, $R$, $F$, and $I$.
In the example in the figure, there is a single intersection vertex.
We fix the clockwise ordering in $O_L$ and counter-clockwise ordering in $O_R$, both starting from the point $I_1$.
The right side of \Cref{fig:hamiltonian} shows a partial Hamiltonian cycle of $G_P$, which induces labelings $(\texttt{2},\texttt{[},\texttt{[},\texttt{[},\texttt{2},\texttt{0},\texttt{]},\texttt{]},\texttt{]})$ and $(\texttt{0},\texttt{[},\texttt{]},\texttt{0},\texttt{2},\texttt{[},\texttt{]})$ for $G_L$ and $G_R$, respectively.

The DP table for $D[e]$ is then constructed as follows.
We additionally call $B = \midset(e_P) = L \cup R \cup I$ the \emph{boundary}.
We first bruteforce all possible label assignments to the boundary and, for each such labeling $s_P$, we determine whether it agrees with some partial Hamiltonian cycle of $G_P$.
To check that, we proceed as follows.

First, we bruteforce all possible \emph{double} label assignments on the frontier and on the intersection vertices, i.e., each of these vertices is assigned a label on both sides of the intersection $F \cup I$.
Furthermore, we proceed with this double labeling only if at each vertex in $I$, its pair of labels is consistent with its label in $s_P$ (for example, a pair of labels $(\texttt{[},\texttt{0})$ is not consistent with the boundary labels \texttt{2} or \texttt{0}).
Then, when all the labels are fixed, we first check whether such a labeling is \emph{valid}, i.e., whether it can, in principle, define a feasible partial Hamiltonian cycle.
This is true if, firstly, the bracket sequences in each of the nooses $O_L$ and $O_R$ are correct.
Secondly, the labelings on the frontier must not induce a cycle.
Finally, the vertices with label $\texttt{2}$ can always be joined to some path without changing the topology (except only if the noose contains a label $\texttt{2}$ but no bracket labels).
The vertices with label $\texttt{0}$ can be ignored.
All these conditions can be checked in polynomial time in terms of $\ell$.

Second, if the labeling is valid, we proceed as follows. Note that the labeling at the frontier and at the intersection vertices together with the labels from $s_P$ for the vertices in $L$ and $R$ completely define a labeling $s_L$ of $O_L$ and a labeling $s_R$ of $O_R$, respectively (see \Cref{fig:hamiltonian} for an example).
We can check efficiently whether $G_L$ and $G_R$ admit partial Hamiltonian cycles corresponding to $s_L$ and $s_R$, respectively, by checking whether $D[e_L][s_L]\neq +\infty$ and $D[e_R][s_R]\neq +\infty$.
The table $D$ needs to be implemented as an efficient search data structure, e.g., a balanced binary tree.
If both tests succeed, we update $D[e_P][s_P]$ to be the minimum of its current value and $D[e_L][s_L]+D[e_R][s_R]$.

Finally, since we assumed that $r$ belongs to the Hamiltonian cycle, we look up the value of $D[e_\rho][s]$ for $s=(\texttt{[},\texttt{]})$, so $D[e_\rho][s] + w(r)$ is the corresponding solution candidate.

Before giving the time complexity of this algorithm for Planar TSP, we use this analysis to outline the general complexity result that is obtained by applying this dynamic programming strategy for an arbitrary connectivity problem.
For a given problem, suppose an appropriate noose labeling scheme has been chosen.
Denote the number of labeling options considered on the boundary and on the frontier by $\mathcal B$ and $\mathcal F$, respectively (we will use this notation in all of our algorithms further on), and assume that the corresponding labelings can be generated in $\Oh^*(\mathcal B)$ and $\Oh^*(\mathcal F)$ time, respectively.
Define the constants $c_1 = \mathcal B^{1/|B|}$ and $c_2 = \mathcal F^{1/|F|}$.
Then we have the following general theorem, which is implicit in \cite{DPBF10}.

\begin{theorem} \label{thm:general-dp}
    Assume that a problem on a weighted graph can be solved by the above deterministic dynamic programming algorithm for a certain labeling scheme with constants $c_1$ and $c_2$, where $c_1^2 \geq c_2$.
    The complexity of the algorithm then is
    \begin{equation}
        \Oh^*((c_1\sqrt{c_2})^{\bw}).
    \end{equation}
\end{theorem}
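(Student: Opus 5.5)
The plan is to bound the work at a single internal node $e_P$ of $T$, with children $e_L$ and $e_R$, and then multiply by the $\Oh(n)$ nodes of $T$ and the $\Oh(n)$ choices of the root edge $r$. Both factors are absorbed into $\Oh^*$. At $e_P$ the algorithm enumerates every boundary labeling $s_P$ (about $c_1^{|B|}$ of them) and, for each, every double labeling of the frontier (about $c_2^{|F|}$ of them). Each pair then costs polynomial time: validity checking is $\poly(\ell)$, and the lookups in $D[e_L]$ and $D[e_R]$ are logarithmic in the table sizes. Assuming the labelings can be generated in $\Oh^*(\mathcal B)$ and $\Oh^*(\mathcal F)$ time, the cost at $e_P$ is $\Oh^*(c_1^{|B|}c_2^{|F|})$, where we ignore $|I|\le 2$ as the paper does.

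Next we express the sizes through $|L|,|R|,|F|$. Up to the constant-size $I$:
\begin{itemize}
\item $|\midset(e_P)| = |L|+|R|\le \ell$;
\item $|\midset(e_L)| = |L|+|F|\le \ell$;
\item $|\midset(e_R)| = |R|+|F|\le \ell$.
\end{itemize}
Writing $a=|L|$, $b=|R|$, $f=|F|$, we need to maximize $\log c_1\cdot(a+b)+\log c_2\cdot f$ subject to $a+b\le\ell$, $a+f\le\ell$, $b+f\le\ell$, $a,b,f\ge 0$. This is a linear program.

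At the symmetric vertex $a=b=f=\ell/2$ the objective equals $\ell(\log c_1+\tfrac12\log c_2)$, which gives $(c_1\sqrt{c_2})^{\ell}$. To prove this is the maximum, sum the three constraints to get $a+b+f\le 3\ell/2$. Then split the objective as
\[
(\log c_1-\tfrac12\log c_2)(a+b)+\tfrac12\log c_2\,(a+b+f).
\]
The hypothesis $c_1^2\ge c_2$ makes the first coefficient nonnegative, so the first term is at most $(\log c_1-\tfrac12\log c_2)\ell$. The second term is at most $\tfrac12\log c_2\cdot\tfrac32\ell$. Adding the two bounds gives exactly $\ell(\log c_1+\tfrac12\log c_2)$. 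Combined with $\ell=\bw(G)$ from \Cref{thm:sc-decomposition}, this proves the claimed bound. The $\Oh(n^3)$ decomposition time is polynomial and disappears into $\Oh^*$.

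The main obstacle is this linear-programming step, in particular seeing why the hypothesis $c_1^2\ge c_2$ is needed. Without it the optimum moves to a vertex with large $f$ (for example $a=b=0$, $f=\ell$), giving $c_2^{\ell}$ instead. The remaining points are routine but should be stated:
\begin{itemize}
\item Leaf tables and the root noose have constant size.
\item The DP table at $e_L$ has at most $c_1^{|\midset(e_L)|}$ entries, so lookups cost $\Oh(\ell)$ comparisons.
\item The constant-size set $I$ contributes only a constant factor.
\end{itemize}
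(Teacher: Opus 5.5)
Your overall plan matches the paper's: bound the work at $e_P$ by $\Oh^*(c_1^{|L|+|R|}c_2^{|F|})$, then maximize over the three middle-set constraints. The step that is supposed to prove the maximum is wrong, though. Your split
\[
(\log c_1-\tfrac12\log c_2)(a+b)+\tfrac12\log c_2\,(a+b+f)
\]
equals $\log c_1\,(a+b)+\tfrac12\log c_2\,f$. That is not the objective $\log c_1\,(a+b)+\log c_2\,f$. Also, adding your two bounds gives $(\log c_1-\tfrac12\log c_2)\ell+\tfrac34\ell\log c_2=\ell(\log c_1+\tfrac14\log c_2)$, not $\ell(\log c_1+\tfrac12\log c_2)$. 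That value lies strictly below the objective at the feasible point $a=b=f=\ell/2$, so the argument cannot be right.

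The problem is not only arithmetic: $a+b+f\le 3\ell/2$ is the wrong aggregate inequality. With only that and $a+b\le\ell$, the point $a=b=0$, $f=3\ell/2$ is allowed. So when $c_2>c_1$, the best these two inequalities can give is $c_2^{3\ell/2}$, which exceeds the claimed bound. This case is the relevant one: every application in the paper has $c_2>c_1$ (e.g.\ $c_1=4$, $c_2=6$ for TSP).

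The correct certificate sums only the two child constraints, $(a+f)+(b+f)\le 2\ell$, and pairs this with $a+b\le\ell$:
\[
\log c_1(a+b)+\log c_2\,f=(\log c_1-\tfrac12\log c_2)(a+b)+\tfrac12\log c_2\,(a+b+2f)\le(\log c_1-\tfrac12\log c_2)\ell+\log c_2\,\ell .
\]
Here $c_1^2\ge c_2$ makes the first coefficient nonnegative. This is exactly the paper's argument. Write $|L|+|R|=\ell-2t$ with $t\ge 0$; the two child constraints give $|F|\le \ell/2+t$. The count is then at most $c_1^{\ell}c_2^{\ell/2}(c_2/c_1^2)^t\le (c_1\sqrt{c_2})^{\ell}$. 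With this replacement the rest of your proposal goes through: polynomial cost per labeling pair, $\Oh(n)$ tree edges and root choices, and constant-size $I$.
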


We give a very short proof which considerably simplifies the analysis of \cite{DPBF10}.

\begin{proof}
The complexity of an iteration of the algorithm can be calculated by examining the total number of labelings.
This number is at most 
\begin{equation}
    \mathcal{B\cdot F} = c_1^{|B|} c_2^{|F|} = \Oh(c_1^{|L|+|R|}c_2^{|F|}).
\end{equation}

By the definition of the branch decomposition, with $\ell=\bw$, we have $|L|+|F| \leq \ell$ and $|R|+|F|\leq \ell$.
Therefore, if $|L|+|R| = \ell-2t$ for some $t \geq 0$, then $|F| \leq \ell/2 + t$.
Then the total number of options is asymptotically upper bounded by
\begin{align}
    c_1^{|L|+|R|} \cdot c_2^{|F|} = c_1^{\ell-2t} c_2^{\ell/2 + t} = c_1^{\ell} c_2^{\ell/2} \left(\frac{c_2}{c_1^2}\right)^t \leq (c_1\sqrt{c_2})^\ell.
\end{align}
Therefore, the worst case is achieved when $t = 0$ and $|F| = \ell/2$, $|L|+|R| = \ell$.
Since the size of $T$ is $\Oh(n)$, the total running time for the dynamic programming is $\Oh^*((c_1\sqrt{c_2})^\ell)$.
\end{proof}

This theorem implies the following result for the \textsc{Planar Travelling Salesman} problem.

\begin{theorem}[{\cite[Lemma 1]{DPBF10}}] \label{thm:tsp-dp}
    There exists a deterministic algorithm that solves the \textup{\textsc{Planar Travelling Salesman} problem} in $\Oh^*((4\sqrt 6)^{\bw}) = \Oh(2^{6.987\sqrt{n}})$ time.
\end{theorem}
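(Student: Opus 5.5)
We prove the statement by plugging concrete constants for the Hamiltonian-cycle labeling scheme of \Cref{sec:dp} into \Cref{thm:general-dp}. We need $c_1 = 4$ and $c_2 = 6$. Then we check $c_1^2 = 16 \geq 6 = c_2$ and convert the bound using the branchwidth estimate $\bw \leq 2.122\sqrt n$.

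\textbf{Boundary count.} Each vertex of $B$ receives one of the four labels $\{\texttt{0},\texttt{2},\texttt{[},\texttt{]}\}$. Hence $\mathcal B \leq 4^{|B|}$, and all these labelings are enumerated in $\Oh^*(4^{|B|})$ time, with invalid ones discarded in polynomial time. This gives $c_1 = 4$.

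\textbf{Frontier count.} Each vertex $v \in F$ receives a pair of labels, one for its side in $O_L$ and one for its side in $O_R$. Naively this gives $16$ pairs, but consistency at $v$ in a partial Hamiltonian cycle of $G_P$ forces the total degree to be exactly $2$, because $v \notin \midset(e_P)$ is interior to $\Delta_P$. The degree contributions are $\texttt{0}\mapsto 0$, $\texttt{2}\mapsto 2$ and a bracket $\mapsto 1$. The admissible pairs are therefore:
\begin{itemize}
\item $(\texttt{0},\texttt{2})$ and $(\texttt{2},\texttt{0})$;
\item the four pairs with a bracket on each side, $(\texttt{[},\texttt{[})$, $(\texttt{[},\texttt{]})$, $(\texttt{]},\texttt{[})$, $(\texttt{]},\texttt{]})$.
\end{itemize}
That is $6$ pairs, so $\mathcal F \leq 6^{|F|}$ and $c_2 = 6$. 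The enumeration runs in $\Oh^*(6^{|F|})$ time, and validity (correct bracketings, no closed cycle formed on the frontier) is checked in $\poly(\ell)$ per labeling. The set $I$ has size at most $2$ and contributes only a constant factor.

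\textbf{Conclusion.} \Cref{thm:general-dp} now yields running time $\Oh^*((4\sqrt6)^{\bw})$. Iterating over the root edge $r$ costs an extra factor of $n$, and computing the decomposition costs $\Oh(n^3)$; both are polynomial. For the final numeric form, $\log_2(4\sqrt6) = 2 + \tfrac12\log_2 6 \approx 3.2925$. Multiplying by $2.122$ gives about $6.987$, so the bound is $2^{6.987\sqrt n}$ with slack absorbing the polynomial factors.

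The main obstacle is the frontier analysis: justifying rigorously that only these $6$ pairs can arise. This amounts to showing that a frontier vertex labelled $\texttt{0}$ on both sides, or $\texttt{2}$ with a bracket, violates the Hamiltonian degree condition. It also requires that the correspondence between partial solutions of $G_P$ and compatible pairs $(s_L, s_R)$ is exact, so that the DP is correct. I would argue this directly from the definition of the labels as degrees in $H \cap G_L$ and $H \cap G_R$.
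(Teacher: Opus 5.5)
Your proposal is correct and follows the paper's proof: you plug $c_1=4$ (the four boundary labels) and $c_2=6$ (the pairs $(\texttt{0},\texttt{2})$, $(\texttt{2},\texttt{0})$ and the four bracket pairs) into \Cref{thm:general-dp}, and enumeration costs the same as the count because labels are chosen independently per vertex. Your degree-$2$ justification of the six pairs, using that frontier vertices are interior to $\Delta_P$, and your explicit check $c_1^2=16\ge 6=c_2$ are details the paper leaves implicit, and both are right.
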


\begin{proof}
The label (or a pair of labels) for a vertex is picked independently of the others, so the generation of the labelings has the same asymptotic complexity as the number of labelings.

On the boundary, we have $\mathcal B = 4^{|B|} = \Oh(4^{|L|+|R|})$.
On the frontier, there are $6$ possible label pair options for each vertex, $(\texttt{0},\texttt{2}),(\texttt{2},\texttt{0}),(\texttt{[},\texttt{[}),(\texttt{[},\texttt{]}),(\texttt{]},\texttt{[})$, and $(\texttt{]},\texttt{]})$, hence $\mathcal F \leq 6^{|F|}$.
Thus, $c_1=4$ and $c_2=6$ and the bound in the statement follows.    
\end{proof}

Dorn showed that one can use fast matrix multiplication to obtain the following improvement \cite{Dor06}, see~\cref{app:fmm} where we provide more background on fast matrix multiplication and revisit the proof of the following theorem for completeness.

\begin{theorem}[{\cite[Theorem 4]{Dor06}}] \label{thm:general-fmm}
    Assume that a problem on a weighted graph with weights at most $W$ can be solved by the above deterministic dynamic programming algorithm for an appropriate labeling scheme with constants $c_1$ and $c_2$, where $c_1 \leq c_2$.
    The complexity of the algorithm then is
    \begin{equation}
        \Oh^*\Big(W\cdot \max(c_1^{\frac{\omega-1}{2}} \sqrt{c_2}, c_2)^{\bw}\Big).
    \end{equation}
\end{theorem}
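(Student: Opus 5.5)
Following Dorn's scheme, I would rewrite the merging step at each internal node of $T$ as a (min,+) matrix product and then reduce that product to ordinary fast matrix multiplication, at the price of a factor $W$.

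Fix an edge $e_P$ with children $e_L,e_R$ and the sets $L,R,F,I$ as above. Since $|I|\le 2$, I first enumerate the constantly many labelings of $I$ (boundary label together with the pair of labels on the two sides) and treat each case separately. Once $I$ is fixed, a labeling $s_L$ of $O_L$ is a labeling of $L$ together with a labeling of $F$ seen from the $L$-side. Likewise, $s_R$ is a labeling of $R$ together with a labeling of $F$ seen from the $R$-side. The valid double labelings of $F$ are the compatible pairs of these one-sided labelings, and there are at most $\mathcal F$ of them. Index the rows of a matrix $A$ by labelings of $L$ and its columns by double labelings $f$ of $F$, and set $A[s_{L}][f]=D[e_L][s_L]$, where $s_L$ is assembled from the row index and the $L$-side of $f$. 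Define $B[f][s_{R}]=D[e_R][s_R]$ in the same way. Then $D[e_P][(s_L|_L,s_R|_R,\cdot)]$ is exactly the distance product $(A\star B)[s_L][s_R]=\min_f A[s_L][f]+B[f][s_R]$, so one DP step is a single min-plus product of an $N_L\times N_F$ matrix by an $N_F\times N_R$ matrix. Here $N_L\le c_1^{|L|}$, $N_R\le c_1^{|R|}$ and $N_F\le c_2^{|F|}$. Validity conditions that involve only $F$ (for example, no closed cycle) are enforced by setting the offending entries to $+\infty$.

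Since all finite entries are integers of size at most $nW$, the standard trick replaces $x$ by $\lambda^{x}$ with $\lambda$ a large base (or uses the bounded-entry distance product via polynomial entries). This gives the min-plus product in time $\Oh^*(W\cdot \mathrm{MM}(N_L,N_F,N_R))$, where $\mathrm{MM}$ denotes the cost of rectangular matrix multiplication. If $N_F\ge \max(N_L,N_R)$, I split into square $N\times N$ blocks with $N=\max(N_L,N_R)$. This gives cost $N^{\omega-2}N_LN_FN_R/N\cdot$, roughly $N_F\cdot N^{\omega-1}$ in the balanced case. If instead $N_F$ is the smaller dimension, I split into $N_F\times N_F$ blocks and get cost $N_LN_RN_F^{\omega-2}$.

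It remains to optimize over the sizes, and I expect this to be the main obstacle, since the worst case differs between regimes. As in the proof of \Cref{thm:general-dp}, $|L|+|F|\le\ell$ and $|R|+|F|\le\ell$. By symmetry the worst case has $|L|=|R|=a$ and $|F|=\ell-a$.

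In the regime $N_F\le N_L$, the cost is $c_1^{2a}c_2^{(\ell-a)(\omega-2)}$. This expression increases with $a$ only while the constraint holds, and its maximum occurs at the balance point $|F|=\ell/2$. There the bound becomes $(c_1^{(\omega-1)/2}\sqrt{c_2})^{\ell}$ after using $c_1\le c_2$ to place the balance point correctly; alternatively, I would directly maximize $\mathrm{MM}$ at $a=\ell/2$, which gives $c_1^{\ell}\cdot$ (block count) and yields $c_1^{(\omega-1)\ell/2}c_2^{\ell/2}$.

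In the regime $N_F\ge N_L$, the cost is at most $N_F\cdot N_L^{\omega-1}$, which is maximized at $a\to0$, giving $c_2^{\ell}$.

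Taking the maximum of the two regimes and multiplying by the $\Oh(n)$ nodes of $T$ and the $\Oh(n)$ choices of root edge $r$ gives $\Oh^*(W\cdot\max(c_1^{\frac{\omega-1}{2}}\sqrt{c_2},c_2)^{\bw})$. To finish, I would carefully verify that the intermediate values of $a$ never exceed this maximum, using convexity of the exponent in $a$.
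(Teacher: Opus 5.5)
Your reduction of the merge step to one min-plus product of an $N_L\times N_F$ by an $N_F\times N_R$ matrix, evaluated by fast matrix multiplication at a factor $W$, is the same as the paper's. The optimization over $(|L|,|R|,|F|)$, however, which is the real content of the statement, does not go through as written.

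First, you never use $|L|+|R|\le\ell$, which holds because $L\cup R\subseteq\midset(e_P)$. With only $|L|+|F|\le\ell$ and $|R|+|F|\le\ell$, your parameter $a$ runs up to $\ell$, and the claimed bound is then false. For example, at $|F|=0$, $|L|=|R|=\ell$ the output matrix alone has $c_1^{2\ell}$ entries, which is $16^\ell$ for TSP.

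Second, in the regime $N_F\ge N_L$ the cost $c_2^{\ell-a}c_1^{(\omega-1)a}$ is log-linear in $a$. It is therefore maximized at $a=0$ only if $c_1^{\omega-1}\le c_2$. For $c_1=4$, $c_2=6$ one has $4^{\omega-1}\approx 6.7>6$, so the maximum is at the upper end $a=\ell/2$, and this is exactly where the term $c_1^{\frac{\omega-1}{2}}\sqrt{c_2}$ comes from. Conversely, on your line $|L|=|R|=a\le\ell/2$, $|F|=\ell-a$, the assumption $c_1\le c_2$ gives $N_F\ge c_2^{\ell/2}\ge c_1^{\ell/2}\ge N_L$. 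So the regime $N_F\le N_L$, to which you attribute that term, is empty apart from the boundary case $a=\ell/2$, $c_1=c_2$. The roles of the two regimes are reversed. Your final maximum comes out right only because of the side remark that evaluates $\mathrm{MM}$ at $a=\ell/2$.

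Third, once $|L|+|R|\le\ell$ is imposed, ``by symmetry'' no longer follows from monotonicity. Points with $|L|=|F|=f<\ell/2$, $|R|=\ell-f$ are not dominated by any feasible symmetric point, and for them your block scheme matters. Splitting into blocks of side $\max(N_L,N_R)$ pads the smaller side and costs $N_F N_R^{\omega-1}$; your displayed formula is not what this scheme costs. At $c_2^f=c_1^{\ell-f}$ this equals $c_1^{\omega(\ell-f)}\approx 6.38^\ell$ for $c_1=4$, $c_2=6$. That exceeds the claimed $(4^{\frac{\omega-1}{2}}\sqrt6)^\ell\approx 6.34^\ell$.

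You need blocks whose side is the smallest dimension, giving cost $N_LN_RN_F\min(N_L,N_R,N_F)^{\omega-3}$. This is what the paper does: with $\mathcal B_L\le\mathcal B_R$ and $\mathcal B_L\le\mathcal F$ it uses $\mathcal B_L^{\omega-2}\mathcal B_R\mathcal F$. When $c_1^{\omega-1}>c_2$, it argues the worst case has $|L|+|R|=\ell$ and $|L|=|F|=\ell-|R|$. There the cost $(c_1^{\omega-2}c_2)^{|F|}c_1^{\ell-|F|}$ increases up to $|F|=\ell/2$. Otherwise the worst case is $|F|=\ell$, and the case $\mathcal F<\mathcal B_L$ is dominated.

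Your convexity idea also works once these points are fixed. The logarithm of $N_LN_RN_F\min(N_L,N_R,N_F)^{\omega-3}$ is linear minus a positive multiple of a concave function, hence convex in $(|L|,|R|,|F|)$. Its maximum over the polytope cut out by the three constraints is therefore at a vertex: $0$, $(\ell,0,0)$, $(0,\ell,0)$, $(0,0,\ell)$ or $(\ell/2,\ell/2,\ell/2)$. These give $c_1^\ell$, $c_2^\ell$ and $c_1^{(\omega-1)\ell/2}c_2^{\ell/2}$, which is exactly the stated bound since $c_1\le c_2$.
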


Compared to \Cref{thm:general-dp}, the improvement is in the reduction of the constant in the power of $c_1$ from $1$ to $(\omega-1)/2 < 0.686$.
Here, $\omega < 2.372$ is the fast matrix multiplication constant \cite{ADWXXZ25}.
However, this comes with an additional factor of $W$ in the complexity.

This can be leveraged to obtain the following result:
\begin{theorem} \label{thm:tsp-fmm}
    There exists a deterministic algorithm that solves the \textup{\textsc{Planar Travelling Salesman} problem} in $\Oh^*(W\cdot (4^{\frac{\omega-1}{2}} \sqrt{6})^{\bw}) = \Oh(W\cdot 2^{5.655\sqrt{n}})$ time.
\end{theorem}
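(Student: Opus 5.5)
The statement should follow directly from \Cref{thm:general-fmm} once we plug in the labeling scheme used for \Cref{thm:tsp-dp}. The work is to confirm that this scheme satisfies the theorem's hypotheses, to identify the constants $c_1$ and $c_2$, and then to turn the resulting bound in $\bw$ into a bound in $\sqrt n$.

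\textbf{The labeling scheme and its constants.} We reuse the scheme from the proof of \Cref{thm:tsp-dp}.
\begin{itemize}
    \item Each boundary vertex gets one label from $\{\texttt{0},\texttt{2},\texttt{[},\texttt{]}\}$, so $\mathcal B = 4^{|B|}$ and $c_1 = 4$.
    \item Each frontier vertex gets one of the six consistent label pairs $(\texttt{0},\texttt{2})$, $(\texttt{2},\texttt{0})$, $(\texttt{[},\texttt{[})$, $(\texttt{[},\texttt{]})$, $(\texttt{]},\texttt{[})$, $(\texttt{]},\texttt{]})$, so $\mathcal F \le 6^{|F|}$ and $c_2 = 6$.
    \item Labels are chosen independently per vertex, so all labelings can be enumerated in $\Oh^*(\mathcal B)$ and $\Oh^*(\mathcal F)$ time, as the framework requires.
    \item The edge weights are at most $W$, and the DP combines partial solutions by taking a minimum over sums. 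This is the $(\min,+)$ setting in which Dorn's fast matrix multiplication reduction applies.
\end{itemize}
One step needs care. \Cref{thm:general-fmm} is applied to the DP merging step viewed as a matrix product indexed by labelings of $L$, $R$ and $F$. I will check that the validity tests of the merge (correct bracket sequences, no cycles created on the frontier, handling of \texttt{2} labels) can be folded into the matrix entries. Each test depends only on the combined labeling, so invalid combinations can be assigned weight $+\infty$ without changing the product structure. This is exactly what the proof of \Cref{thm:general-fmm} in \cref{app:fmm} assumes.

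\textbf{Applying \Cref{thm:general-fmm}.} The hypothesis $c_1 \le c_2$ holds since $4 \le 6$. The bound is therefore
\[
\Oh^*\big(W\cdot\max(4^{\frac{\omega-1}{2}}\sqrt6,\,6)^{\bw}\big).
\]
Next we determine which term of the max dominates. With $\omega<2.372$ we get $4^{(\omega-1)/2} = 2^{\omega-1} < 2^{1.372} \approx 2.588$. Hence $4^{(\omega-1)/2}\sqrt6 \approx 2.588\cdot 2.449 \approx 6.34 > 6$. The first term dominates, and the bound becomes $\Oh^*(W\cdot(4^{\frac{\omega-1}{2}}\sqrt6)^{\bw})$.

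\textbf{Converting to a bound in $n$.} We use $\bw \le \sqrt{4.5n}$ from \cite{FT06}. The exponent is then
\[
\sqrt{4.5}\,\log_2\!\big(2^{\omega-1}\sqrt6\big)\sqrt n = 2.1213\cdot(1.372+1.2925)\sqrt n \approx 5.652\sqrt n.
\]
Rounding up to absorb the polynomial factors hidden in $\Oh^*$, this gives the claimed $\Oh(W\cdot 2^{5.655\sqrt n})$.

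The only non-mechanical point is the justification that the merge validity checks fit into Dorn's matrix formulation. I expect this to be routine given \cref{app:fmm}. The rest is arithmetic.
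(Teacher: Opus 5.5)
Your proposal is correct and follows the paper's route. The paper obtains this theorem by instantiating \Cref{thm:general-fmm} with $c_1=4$, $c_2=6$ from the scheme of \Cref{thm:tsp-dp}; in \cref{app:fmm} it checks $4^{\omega-1}\approx 6.7>6$, which is equivalent to your check $4^{\frac{\omega-1}{2}}\sqrt6\approx 6.34>6$ that the first term of the maximum dominates, and your conversion via $\bw\le\sqrt{4.5n}$ giving $5.652$ fits within the stated $5.655$. One imprecision is your remark on validity checks: it is not enough that each check depends on the combined labeling; to encode it by $+\infty$ entries it must split into a condition on the left boundary labeling and frontier labeling together, and one on the frontier and right boundary labeling together, which is exactly how the appendix defines the matrices $A$ and $B$.
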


\subsection{Speedup by Meandric Systems} \label{sec:speedup}

Let us focus again on the \textsc{Planar Travelling Salesman} problem.
With the above labeling scheme, we have $\mathcal F = 6^{|F|}$.
The quantity of the total number of labelings to be considered can be improved by excluding some \emph{invalid} labelings.
In particular, we can exclude the frontier labeling pairs $s = (s_L,s_R)$ that induce a cycle, since they cannot be completed to a Hamiltonian cycle of $G$; see, e.g., \cref{fig:systems} (right).

The number of labelings $\mathcal F$ can be calculated in the following way: first pick a subset of vertices $S \subseteq F$ for which to assign any of the $4$ pairs of labels $(\texttt{[}, \texttt{]})$, $(\texttt{[}, \texttt{[})$, $(\texttt{]}, \texttt{]})$,  and $(\texttt{]}, \texttt{[})$.
Each of the other $F \setminus S$ vertices is assigned either the pair of labels $(\texttt{0},\texttt{2})$ or $(\texttt{2}, \texttt{0})$.
The two bracket sequences in $S$ then should define a meandric system.
As mentioned in~\cref{se:intro}, Bobier and Sawada were the first to notice that the valid frontier configurations are exactly the unidirectional open acyclic meandric systems \cite{BS10}.
The unclosed arcs of these systems extend only to the right, since (see \Cref{fig:hamiltonian}) the labelings of $O_L$ and $O_R$ start at $I_1$.
The acyclicity implies no separate disconnected cycles.
Let $U_m$ be the number of such meandric systems on $m=|S|$ crossings and let $z$ be the number such that $U_m = \Theta^*(z^m)$.

Trivially, $z \leq 4$, as each crossing has $2$ bracket labels on both sides.
In \cite{DPBF10}, the authors already gave a custom upper bound of $z \leq 3.68133$, without explicitly relating the configurations to the meandric systems.
They proved it by counting the number of different classes of frontier labelings by iteratively multiplying a small matrix, and analyzing the largest eigenvalue of this matrix.
This improves the asymptotic upper bound on $\mathcal F$:
\begin{equation}
    \sum_{S \subseteq F} z^{|S|} 2^{|F|-|S|} = \sum_{i=0}^{|F|} \binom{|F|}{i} z^i 2^{|F|-i} =
    2^{|F|} \cdot \left(1+\frac{z}{2}\right)^{|F|} = (2+z)^{|F|}.
\end{equation}

Therefore, $c_2 = 2+z < 5.682$.
Hence, for the \textsc{Planar Travelling Salesman} problem, by plugging this value into
\Cref{thm:tsp-dp}, the authors of \cite{DPBF10} obtain an algorithm having $\Oh^*(9.535^{\bw}) = \Oh(2^{6.904\sqrt n})$ time complexity, and by plugging this value into \Cref{thm:tsp-fmm} the work \cite{Dor06}, using fast matrix multiplication, obtains an algorithm having $\Oh^*(W\cdot 6.170^{\bw}) = \Oh(W\cdot 2^{5.571\sqrt{n}})$ time complexity.

Bobier and Sawada have numerically calculated the exact values of $U_m$ up to $m = 28$ \cite{BS10}.
However, the authors did not improve the upper bound on the value of $z$.
\jvchange{As a warm-up, first} we point out a marginally better bound that we can get by only using their results.
They have also calculated the exact numbers $S_m$ of \emph{bidirectional} open acyclic meandric systems up to $m = 28$ (however, $S_{27}$ is claimed to be incorrect by \cite{oeis:A000108}).
Relying on the largest correct value $S_{26} = 434578910451203$, we can partition the $m$ vertices of the unidirectional acyclic system into $\Oh(m/26)$ consecutive groups of size at most $26$, let then the labeling in the $i$-th part be $s^{(i)} = (s^{(i)}_L,s^{(i)}_R)$.
Then, the meandric system defined by $s^{(i)}$ is bidirectional and acyclic.
Hence,
\begin{equation}
    z \leq S_{26}^{1/26} = \sqrt[26]{434578910451203} \leq 3.656.
\end{equation}

\jvchange{Next, we show that a stronger upper bound follows from the known upper bound on the meandric numbers $M_n$ \cite{AP05}.}
\begin{theorem} \label{thm:z-bound}
$z \leq 3.592$.
\end{theorem}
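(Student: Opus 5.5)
We want $z \le 3.592$, i.e.\ $z^2 \le 12.901$. This is exactly the Albert--Paterson bound $M \le 12.901$, so the plan is to show $U_m \le \Oh^*(M_{m}^{1/2}\cdot \text{poly})$, or more precisely $z^2 \le M$, by an injection from acyclic open meandric systems into closed meanders, and then quote \cite{AP05}.

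\textbf{Step 1: closing a system into a meander.} Take a unidirectional open acyclic meandric system on $m$ crossings, given by its upper and lower bracket words $(s_L,s_R)$. It consists of Jordan arcs only, each with both ends extending to the right. Since there are no closed curves, every component is an arc. I would close each arc to the right, taking them in nested order; planarity lets the arcs be capped by adding a bounded number of extra crossings per arc to the right of position $m$. The goal is to obtain a \emph{single} Jordan curve, which requires chaining the arcs together.

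The unmatched brackets form a suffix pattern of the form $\texttt{]}^a\texttt{[}^b$ with ends going right. Concatenating the arcs in their vertical order at the far right, using the standard ``zig-zag'' connection, gives one curve crossing the line $m+k$ times, with $k=\Oh(m)$ in the worst case.

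\textbf{Step 2: keeping the extension cheap.} The main obstacle is that $k$ may be linear in $m$, which would spoil the exponent. To handle this, I would instead use the fact that unidirectional open acyclic systems on $m$ crossings are a subclass of the objects counted by Albert--Paterson's upper-bound method. Their bound works by representing a meander as a pair of bracket sequences (an upper and a lower arch system) on a common line, and bounding the number of pairs whose glued system forms a single component.

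The same glued-pair viewpoint applies here. Cut a meander of order $N$ at a vertical line: the left part is a unidirectional open meandric system with no closed loops, since any closed loop would already make the whole system disconnected. Conversely, I would expect their counting argument, which proceeds block by block through a transfer-matrix or generating-function bound over prefixes, to bound prefixes of connected systems with only polynomial loss.

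So the concrete plan is:
\begin{enumerate*}[label=(\roman*)]
\item show every acyclic unidirectional system is a prefix of some meander, up to an extension that is injective and adds few crossings;
\item note that Albert--Paterson's bound on $M$ is itself derived as a bound on exactly such prefix configurations, so $U_m \le \Oh^*(12.901^{m/2})$.
\end{enumerate*}

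\textbf{Step 3: numerics.} This yields $z\le\sqrt{12.901}<3.592$.

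I expect step (ii) to be the delicate part. Rather than attempting a black-box reduction, I would have to check that the Albert--Paterson estimate applies verbatim to open acyclic systems, by rerunning their argument with open ends allowed to the right.
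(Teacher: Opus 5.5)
There is a genuine gap: the step that actually produces the bound is left as an expectation. Step~(i) fails as stated. For a system to be a prefix of a meander's word, every unmatched bracket needs a partner to its right. The acyclic unidirectional system $\texttt{U}^m$ has all $2m$ brackets unmatched \texttt{[}, so any such meander has at least $m$ extra crossings, hence order at least $m$, and comparing with $M_n$ gives nothing useful.

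Step~(ii) asserts the right conclusion, but you support it with a description of Albert--Paterson that is not what they do. They neither count pairs of arch systems that glue into a single component, nor bound prefixes of meanders by a transfer matrix. You also end by saying their argument would have to be rerun with open ends. Your opening target $z^2\le M$ is, moreover, stronger than the theorem and is not needed.

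The missing observation, which is the paper's entire proof, is that $12.901$ bounds a relaxed language. Albert and Paterson encode each crossing by its bracket pair as a letter of $\Sigma=\{\texttt{U},\texttt{D},\texttt{L},\texttt{R}\}$. They take a finite set $\mathcal M'$ of words, each encoding a closed submeander through its first and last crossing, and observe that every meander of order $n$ lies in $\texttt{U}\mathcal L_{\mathcal M'}\texttt{D}$, where $\mathcal L_{\mathcal M'}$ is the set of words having no factor in $\mathcal M'$. The Goulden--Jackson cluster method then shows that $\mathcal L_{\mathcal M'}$ has $\Oh^*(12.901^{m/2})$ words of length $m$.

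Connectivity is never used, only the absence of certain closed loops. Unmatched brackets (open ends) are permitted automatically, since $\mathcal L_{\mathcal M'}$ contains arbitrary words over $\Sigma$. An acyclic system has no factor in $\mathcal M'$: bracket matching inside a factor agrees with the matching in the whole word, so such a factor would be a closed loop of the system. Hence $U_m\le S_m\le|\mathcal L_{\mathcal M'}\cap\Sigma^m|=\Oh^*(12.901^{m/2})$, i.e.\ $z\le\sqrt{12.901}<3.592$. This is a black-box use of their bound, with no extension to meanders and nothing to rerun.
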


\begin{proof}
    To obtain the best currently-known upper bound on the number of meanders $M_n$ of order~$n$, Albert and Paterson used the \emph{\jvchange{Goulden-Jackson} cluster method} \cite{GJ83}.
    For an alphabet $\Sigma$ and a set $\mathcal W \subset \Sigma^*$, a $\mathcal W$-factor-free word in $\Sigma^*$ is a word that does not contain any word from $\mathcal W$ as a substring.
    Formally, the language of $\mathcal W$-factor-free words is $\mathcal L_{\mathcal W} = \Sigma^* \setminus (\Sigma^* \mathcal W \Sigma^*)$.
    The cluster method can be used to obtain an asymptotic estimate on the number of words in $\mathcal L$ of fixed length.

    The authors encode meandric systems with the alphabet $\Sigma = \{ \texttt{U}, \texttt{D}, \texttt{L}, \texttt{R} \}$.
    Each of these symbols describes the labeling on both sides of a crossing.
    This corresponds to our bracket labeling as $\texttt{U} = (\texttt{[},\texttt{[})$, $\texttt{D} = (\texttt{]},\texttt{]})$, $\texttt{L} = (\texttt{[},\texttt{]})$, and $\texttt{R} = (\texttt{]},\texttt{[})$.
    \jvchange{ Let $\mathcal M \subset \Sigma^*$ be the set of words that correspond to words containing a closed submeander which passes through the first and the last crossing (so the word is of the form \texttt{U}\ldots\texttt{D}).
    Other strands may intersect the interior of the horizontal line, for example, $\texttt{ULLRRD} \in \mathcal M$.
    }
    
    The language $\texttt{U} \mathcal L_{\mathcal M} \texttt{D}$ contains $\mathcal M$, as meanders do not contain a disconnected loop. 
    An upper bound on the number of length-$n$ words in $\texttt{U} \mathcal L_{\mathcal M} \texttt{D}$ then also gives an upper bound on the number of length-$n$ words in~$\mathcal M$.
    In this way, they were able to show an upper bound of $M_n = \Oh^*(12.901^n)$.
    Instead of the whole language $\mathcal M$, they found a representation of a smaller set $\mathcal M' \subset \mathcal M$ by a computer program and forbade the words from $\mathcal M'$. 
    Now note that $\mathcal L_{\mathcal M}$ includes the language of bidirectional acyclic meandric systems, since all that was done was to forbid the submeanders.
    Since each unidirectional meandric system is also bidirectional and since $m=2n$, we have $U_m \leq S_m = \Oh^*(\sqrt{12.901^m}) = \Oh^*(3.592^m)$, and the statement follows.\end{proof}

\paragraph{Improved Upper Bound on the Meandric Numbers.}

The upper bound $M_n  = \Oh^*(12.901^n)$ is not tight, as already mentioned in \cite{AP05}, and the authors conjectured that the techniques could be pushed further.
By considering larger sets $\mathcal M'$, the upper bound provided by the Goulden-Jackson cluster method is certain to be tighter.
Considering that the work \cite{AP05} was developed over 20 years ago, it is clear that we can obtain an improved upper bound by running their algorithm on a computer with modern resources.

We do not describe the algorithm itself here, as it is already laid out in \cite{AP05}, but we provide some intuition.
The language of forbidden words $\mathcal M'$ is a finite language which can be constructed as follows.
First, the algorithm generates a set of normalized forbidden words of $\mathcal M'$ of fixed length, called \emph{standard representatives}, using backtracking.
Each such word contains a meander as a factor.
A fragment is called a \emph{shift} if its effect on a vertical cross-section of a meandric system is a vertical shift either upwards or downwards by some offset (for example, \texttt{ULD} or \texttt{LRR}).
The standard representatives are minimal, in the sense that each shift factor of such words is in its simplest form (i.e., consist either only of \texttt{L} or only of \texttt{R} symbols).
Each word in $\mathcal M'$ can be obtained from one standard representative by substituting shifts with more complex shifts.
Each complex shift in turn consists of \emph{jumps}, which are simplest shifts.
The jumps are also generated by backtracking, and then are used to build shifts.

Instead of simply generating the words in $\mathcal M'$, the algorithm builds generating functions for standard representatives that describe all the words that can arise from each of them.
These are then used for the Goulden-Jackson cluster method, which constructs a linear system of equations.
This system is then solved iteratively and numerically; \cite{AP05} show that such a process converges to the (unique) solution.
This solution is then used to construct the generating polynomial of $\mathcal M'$, the convergence radius of which defines the corresponding upper bound.

To implement the procedure, we have used ChatGPT-5.6 Sol.
The model was able to implement the algorithm, along with a few algorithmic optimizations.
The full code and results are available at \cite{BLCV26}.
By increasing the length of the standard representatives $n$, we obtain better upper bounds.
Albert and Paterson have calculated the upper bound up to $n = 16$, which is $12.901$.
Using our program, we have calculated the values up until $n = 24$, which gives $12.805332\ldots$
Thus, we have an improved upper bound on $z$:

\upperbound*

\Cref{tab:meandric-numbers} lists our upper bounds for various values of $n$.
The results of our program are slightly better than \cite{AP05}, which can be explained by more extensive generation of words (longer jumps, shifts are constructed of a larger number of jumps, more iterations etc.).
The values are close, but exact reproduction of their results cannot be made because \cite{AP05} do not describe all the parameters of their calculation.
\begin{table}[ht!]
    \centering
    \begin{tabular}{c|c|c}
        $n$ & This paper & \cite{AP05} \\ \hline
        2 & 13.511115 & ---\\
        4 & 13.351046 & ---\\
        6 & 13.171304 & 13.171 \\
        8 & 13.084026 & 13.086 \\
        10 & 13.015332 & 13.018 \\
        12 & 12.965070 & 12.970 \\
        14 & 12.925273 & 12.931 \\
        16 & 12.892914 & 12.901 \\
        18 & 12.865872 & ---\\
        20 & 12.842819 & ---\\
        22 & 12.822854 & ---\\
        24 & 12.805332 & ---
    \end{tabular}
    \caption{Upper bounds on the meandric numbers.}
    \label{tab:meandric-numbers}
\end{table}

Our results confirm the suggestion of \cite{AP05} that the upper bound can be pushed to around $12.8$.
However, we also see that the convergence of the upper bound is still not very fast; the obtained upper bounds closely follow the curve $12.34 + 1.8/n^{0.426}$, as we can see in \Cref{fig:extrapolation}.
Proving a similar upper bound with the current method would be infeasible, since it would require $n$ to be of order of several thousands.
Such an upper bound would give $z \leq 3.513$.

\begin{figure}[ht!]
\begin{center}
\input{graph.tex}
\end{center}
\caption{Upper bounds for the meandric number $M$ depending on the length of the standard representatives $n$ for the Goulden-Jackson cluster method.
The dashed approximating curve is the function $12.34 + 1.8/n^{0.426}$.}
\label{fig:extrapolation}
\end{figure}

\paragraph{\jvchange{Improved Complexity Upper Bounds for Planar TSP.}}

\jvchange{Let us now return to the \textsc{Planar Travelling Salesman} problem.}
Note that the list of unidirectional acyclic meandric systems of length $m$ can be generated in $\Oh^*(U_m) = \Oh^*(z^m)$ time as follows.
Start with an empty string
and generate labeling pairs for the meandric system using backtracking, each time trying to label the last unlabeled vertex with all of the $4$ options $(\texttt{[},\texttt{[})$, $(\texttt{]},\texttt{]})$, $(\texttt{[},\texttt{]})$, and $(\texttt{]},\texttt{[})$.
Then check if the new labels create a contradiction or a closed disjoint loop in $\Oh(\poly m)$ time.
If not, then continue to generate the labeling recursively.

The time complexity of this algorithm is still $\Oh^*(U_m)$, because for any valid prefix of the labeling, if it contains no contradictions or closed loops, labeling all the remaining vertices with $(\texttt{[},\texttt{[})$ creates a valid unidirectional acyclic meandric system.
Hence, the size of the backtracking tree is $\Oh^*(U_m)$, where at each step we perform some $\Oh(\poly m)$-time computation.
Then we have the following~result.

\begin{theorem} \label{thm:tsp-meanders}
    There exist deterministic algorithms that solve the \textup{\textsc{Planar Travelling Salesman} problem} in $\Oh^*((4 \sqrt{2+z})^{\bw}) = \Oh(2^{6.876\sqrt{n}})$ and $\Oh^*(W\cdot (4^{\frac{\omega-1}{2}} \sqrt{2+z})^{\bw}) = \Oh(W\cdot 2^{5.543\sqrt{n}})$ time.
\end{theorem}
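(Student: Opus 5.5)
The plan is to reuse the DP scheme of \Cref{sec:dp} with the same labeling scheme as in \Cref{thm:tsp-dp}. The only change is on the frontier: we enumerate only pairs $(s_L,s_R)$ whose bracket positions form a unidirectional open acyclic meandric system. Then we invoke the general complexity bounds \Cref{thm:general-dp} and \Cref{thm:general-fmm} with $c_1=4$ and $c_2=2+z$. The proof has three parts: restricting the frontier labelings loses no solution, the reduced set can be generated in time $\Oh^*(\mathcal F)$, and the numerical constants come out as claimed.

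\textbf{Correctness.} Take a frontier double labeling coming from an optimal Hamiltonian cycle $H$ through $r$, restricted to $G_L$ and $G_R$. The union of the two partial solutions is a partial Hamiltonian cycle of $G_P$, so it contains no closed cycle unless $G_P=G$. That case is excluded, because the root noose encloses only $r$.

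Drawing the frontier as the horizontal line, with $G_L$ above and $G_R$ below, the bracket vertices form a meandric system in which every closed curve corresponds to a cycle of $H\cap G_P$. Hence the system is acyclic. Its open arcs leave through $L\cup R\cup I$, i.e.\ past the end of the frontier ordering that starts at $I_1$, so the system is unidirectional, as observed in \cite{BS10}.

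The \texttt{0}/\texttt{2} vertices of $F$ take exactly the pairs $(\texttt{0},\texttt{2})$ or $(\texttt{2},\texttt{0})$, since a frontier vertex has degree $2$ in $H$ and all its edges lie on one side. Therefore discarding every other labeling never discards the pair realizing an optimum, and the DP recurrence is unchanged.

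\textbf{Generation and counting.} We choose $S\subseteq F$, label $F\setminus S$ with the two non-bracket pairs, and run the backtracking generator described just before the statement on $S$. Its running time is $\Oh^*(U_{|S|})=\Oh^*(z^{|S|})$: every valid prefix extends to a valid system by appending $(\texttt{[},\texttt{[})$ labels, so the backtracking tree has no dead branches of superpolynomial size. The binomial computation above then gives $\mathcal F=\Oh^*((2+z)^{|F|})$, so $c_2=2+z$, while $\mathcal B=4^{|B|}$ gives $c_1=4$.

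\textbf{Constants.} The hypotheses of both general theorems hold. \Cref{thm:general-dp} needs $c_1^2=16\ge c_2$, and \Cref{thm:general-fmm} needs $c_1=4\le c_2$, which holds because $z\ge 2$ trivially. For the FMM bound we must also check that the maximum is attained by $4^{(\omega-1)/2}\sqrt{2+z}$ rather than by $c_2$. This amounts to $4^{\omega-1}\ge 2+z$, i.e.\ $6.5\gtrsim 5.58$, which holds.

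We then plug in $z<3.579$ from \Cref{thm:upper-bound}, $\bw\le 2.122\sqrt n$ and $\omega<2.372$. This gives $\log_2(4\sqrt{5.579})\cdot 2.122\approx 6.876$ and $\log_2(4^{0.686}\sqrt{5.579})\cdot 2.122\approx 5.543$.

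The main obstacle is the FMM case. The proof of \Cref{thm:general-fmm} (\cref{app:fmm}) multiplies matrices indexed by labelings of $L\cup F$ and $F\cup R$, and there the frontier set must be enumerated per choice of matched structure. We need the restricted frontier set to factor compatibly with that matrix formulation. The first approach to try is to treat each admissible full frontier pair as a single summation index, over which the $(\min,+)$ product already ranges, so that only the count $\mathcal F$ enters the bound, exactly as Dorn does with the $z\le 3.682$ count.
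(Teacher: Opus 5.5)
Your proposal is correct and follows the paper's route. You restrict the frontier double labelings to unidirectional open acyclic meandric systems, generate them with the prefix-extendable backtracking in $\Oh^*(U_{|S|})$ time, obtain $\mathcal F=\Oh^*((2+z)^{|F|})$, and plug $c_1=4$, $c_2=2+z$ into \Cref{thm:general-dp} and \Cref{thm:general-fmm}.

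The FMM issue you flag is not a real obstacle: in \cref{app:fmm} the inner dimension of the distance product is already indexed by the valid frontier pairs $s=(s_L,s_R)$, with rows and columns indexed by labelings of $L\cup I$ and $R\cup I$, so shrinking that set only shrinks $\mathcal F$. A small numerical slip: $4^{\omega-1}\approx 6.70$ rather than $6.5$, which does not affect your conclusion.
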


This improves the best-known complexity $\Oh^*(W\cdot 2^{5.571\sqrt{n}})$ for the \textsc{Planar Travelling Salesman} problem in the bounded-weight setting by \cite{Dor06} (updated with the improved $\omega \leq 2.372$).
In the unbounded-weight setting, the best complexity is given by a more recent work improving the complexity of some planar connectivity problems, which is $\Oh(2^{6.570\sqrt{n}})$ \cite{PBvR17}, which our classical algorithm does not improve.

As a corollary, our theorem also improves the best-known complexity of the \textsc{Planar Hamiltonian Cycle}, since it is just the unweighted \textsc{Planar Travelling Salesman} problem.
\begin{theorem} \label{thm:hc-meanders}
    There exists a deterministic algorithm that solves the \textup{\textsc{Planar Hamiltonian Cycle}} problem in $\Oh^*((4^{\frac{\omega-1}{2}} \sqrt{2+z})^{\bw}) = \Oh(2^{5.543\sqrt{n}})$ time.
\end{theorem}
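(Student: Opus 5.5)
Hamiltonian Cycle is the special case of Planar TSP in which every edge has weight $1$, or equivalently in which we only care about feasibility. The plan is therefore to run the fast-matrix-multiplication variant of \Cref{thm:tsp-meanders} on an unweighted instance and check that the factor $W$ becomes a constant. I would do this in three steps.

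\textbf{Reduction.} Given a planar graph $G$, I would first use \cref{app:problems} to make $G$ bridgeless. For Hamiltonian Cycle this is immediate: a graph with a bridge has no Hamiltonian cycle, except in trivial small cases, so we can reject at once. Then I would assign $w(e)=1$ to every edge, so that $W=1$. Alternatively, the DP table could store Boolean values instead of minimum weights; then the ``min-plus'' product in Dorn's scheme becomes an ordinary Boolean matrix product, and no dependence on $W$ appears at all.

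\textbf{Algorithm.} I would take the DP over the sphere cut decomposition given by \Cref{thm:sc-decomposition}, keeping the labeling scheme of \Cref{thm:tsp-dp}. On the frontier, only labelings whose bracket part forms a unidirectional open acyclic meandric system are enumerated. These are generated by the backtracking procedure described before \Cref{thm:tsp-meanders}, in $\Oh^*(U_m)$ time. Summing over subsets $S\subseteq F$ as in the computation of $\mathcal F$ gives $c_2 = 2+z$, and the boundary still gives $c_1=4$. Since $4 \le 2+z$, the condition $c_1\le c_2$ holds, so \Cref{thm:general-fmm} applies. With $W=1$ it yields running time $\Oh^*\big(\max(4^{\frac{\omega-1}{2}}\sqrt{2+z},\,2+z)^{\bw}\big)$.

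\textbf{Numerics and main obstacle.} It remains to check that the first term of the maximum dominates and to evaluate it. With $\omega<2.372$ we get $4^{0.686}\approx 2.588$, and with $z<3.579$ from \Cref{thm:upper-bound} we get $\sqrt{2+z}<2.362$. Their product is about $6.11$, which exceeds $2+z<5.58$. Combining this with $\bw\le 2.122\sqrt n$ gives $\log_2 6.11\cdot 2.122\approx 5.543$, matching the claimed bound.

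The step needing care is the middle one: we must confirm that Dorn's matrix-multiplication step, and thus \Cref{thm:general-fmm}, still works when the frontier labelings are restricted to meandric systems. The point is that the restriction only removes entries from the matrices being multiplied; it does not change their dimensions, which are governed by the boundary. So the analysis is identical to that behind \Cref{thm:tsp-meanders}, and the bound for Hamiltonian Cycle follows directly from that theorem with $W=\Oh(1)$.
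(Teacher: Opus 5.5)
Your proposal is correct and follows the paper's route: Hamiltonian Cycle is unweighted Planar TSP, so the fast-matrix-multiplication bound of \Cref{thm:tsp-meanders} (that is, \Cref{thm:general-fmm} with $c_1=4$, $c_2=2+z$) applies with $W=\Oh(1)$, and your numerics check out. One correction to your last paragraph: the meander restriction does change the matrix shapes, because it shrinks the inner, frontier-indexed dimension to $\mathcal F=\Oh^*((2+z)^{|F|})$ (only the outer dimensions are governed by the boundary); this shrinkage, rather than merely blanking out entries, is exactly what yields $c_2=2+z$ instead of $6$.
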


Finally, we mention that simple modifications of the algorithms imply the same complexity for the \textsc{Planar Hamiltonian Path} problem (for example, see \cite[Section 5]{DPBF10}).

\subsection{Planar Longest Cycle/Path}

Similarly to \cite{DPBF10}, we can adopt the previous framework, slightly changing the algorithm that we have used to solve the \textsc{Planar Travelling Salesman} problem, to tackle the \textsc{Planar Longest Cycle} problem (the \textsc{Planar Longest Path} problem follows similarly as in~\cite{DPBF10}).

The restrictions on the meandric systems here are the same as we have for the \textsc{Planar Travelling Salesman} problem.
The main difference is that instead of processing partial Hamiltonian cycles, we process partial cycles, allowing some vertices to be left out of the partial solution.
Such vertices on the frontier will have labels $(\texttt{0}, \texttt{0})$.
In the complexity analysis, this has the consequence of $c_2$ changing from $2+z$ to $3+z$. Hence, we obtain the following.

\begin{theorem} \label{thm:plc-meanders}
    There exist deterministic algorithms that solve the \textup{\textsc{Planar Longest Cycle/Path}} problem in $\Oh^*((4 \sqrt{3+z})^{\bw}) = \Oh(2^{7.128\sqrt{n}})$ and $\Oh^*(W\cdot (4^{\frac{\omega-1}{2}} \sqrt{3+z})^{\bw}) = \Oh(W\cdot 2^{5.796\sqrt{n}})$ time.
\end{theorem}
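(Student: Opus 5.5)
The plan is to reuse the Planar TSP algorithm of \Cref{sec:dp} with the partial-solution notion relaxed, and then read off both running times from \Cref{thm:general-dp} and \Cref{thm:general-fmm}. For \textsc{Planar Longest Cycle}, as before, we iterate over the edge $r$ assumed to lie on the cycle and root $T$ at the leaf holding $r$. A \emph{partial cycle} of $G_e$ is a collection of vertex-disjoint paths in $G_e$ with endpoints in $\midset(e)$; unlike the TSP case, vertices of $G_e$ may be left uncovered. The boundary alphabet stays $\{\texttt{0},\texttt{2},\texttt{[},\texttt{]}\}$, so $c_1=4$. $D[e][s]$ now stores the \emph{maximum} weight of a compatible partial cycle, with $-\infty$ meaning infeasible, and the merge takes maxima instead of minima. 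At the root we read $D[e_\rho][(\texttt{[},\texttt{]})]+w(r)$. For \textsc{Planar Longest Path} we follow the modification of \cite[Section 5]{DPBF10}, which only adds a constant number of extra labels for path endpoints and does not change the exponential base. The acyclicity test on the frontier, which forbids merges that close a cycle other than the final one, is exactly as in TSP.

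Correctness: a longest cycle through $r$ restricts to a partial cycle in every $G_e$, and its labelings on $O_L$, $O_R$ and $O_P$ are consistent. Conversely, any valid combined labeling yields a union of disjoint paths with no closed loop, so optimal substructure holds as in \cite{DPBF10}.

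Frontier count. Each frontier vertex $v \in F$ gets a label pair. Either $v$ is unused, giving $(\texttt{0},\texttt{0})$, which is the new option. Or it is interior to a path lying entirely on one side, giving $(\texttt{0},\texttt{2})$ or $(\texttt{2},\texttt{0})$. Or it is a crossing of the curve, giving one of the four bracket pairs. The crossings must form a unidirectional open acyclic meandric system, exactly as in \Cref{sec:speedup}, so there are $U_m=\Oh^*(z^m)$ choices on $m$ crossings. Summing over the crossing set $S\subseteq F$ gives
\[
\mathcal F=\sum_{i=0}^{|F|}\binom{|F|}{i}\,\Oh^*(z^i)\,3^{|F|-i}=\Oh^*\big((3+z)^{|F|}\big),
\]
so $c_2=3+z$. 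These labelings are generated in $\Oh^*(\mathcal F)$ time: choose $S$ and the non-crossing labels, then enumerate the meandric part by the backtracking described before \Cref{thm:tsp-meanders}. That backtracking has no dead ends, since completing any valid prefix with $(\texttt{[},\texttt{[})$ always gives a valid system.

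Plugging in, with $z<3.579$ from \Cref{thm:upper-bound}:
\begin{itemize}
\item $c_1^2=16\ge c_2\approx 6.579$, so \Cref{thm:general-dp} gives $\Oh^*((4\sqrt{3+z})^{\bw})$.
\item $c_1\le c_2$, and $4^{(\omega-1)/2}\sqrt{3+z}\ge 3+z$ because $4^{0.686}\approx 2.59\ge\sqrt{6.58}\approx 2.565$. So \Cref{thm:general-fmm} gives $\Oh^*(W(4^{(\omega-1)/2}\sqrt{3+z})^{\bw})$.
\end{itemize}
With $\bw\le 2.122\sqrt n$ these become $2^{7.128\sqrt n}$ and $W 2^{5.796\sqrt n}$. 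This last comparison inside the max is the step to verify carefully: the margin is small (about $2.59$ versus $2.565$), so we check it numerically with $\omega<2.372$.

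The main subtlety is justifying that the frontier count factorizes as claimed, namely that $(\texttt{0},\texttt{0})$ vertices do not interact with the meandric structure. Such vertices are absent from both partial solutions, so deleting them leaves the crossing structure unchanged, and the acyclicity condition depends only on the bracket pairs, just as with the $\texttt{2}$-labels. For the path version we would confirm that the extra endpoint labels affect only $\Oh(1)$ vertices per noose and so contribute only polynomial factors.
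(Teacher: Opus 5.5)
Your proposal is correct and matches the paper's argument. Both reuse the TSP framework with partial cycles in place of partial Hamiltonian cycles; the new frontier pair $(\texttt{0},\texttt{0})$ raises $c_2$ from $2+z$ to $3+z$ while $c_1=4$ stays, and both bounds then follow from \Cref{thm:general-dp} and \Cref{thm:general-fmm}. Your explicit check that $4^{\omega-1}\ge 3+z$, so that the first term of the maximum in \Cref{thm:general-fmm} dominates, is a detail the paper leaves implicit.
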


The previous results by Dorn, Penninkx, Bodlaender and Fomin achieved complexities $\Oh(2^{7.152\sqrt{n}})$ and $\Oh(W\cdot 2^{5.819\sqrt{n}})$, adjusted for the current-best value of $\omega$ \cite{Dor06,Dor07,DPBF10}.
These were also obtained by excluding forbidden configurations, which implicitly correspond to meandric systems, with the upper bound $z \leq 3.682$.
Our results give a tighter bound on their algorithms.

In the next sections, we use meanders to develop new algorithms with improved complexity for other problems. We also examine another type of meandric systems for which the bound of \Cref{thm:z-bound} does not hold, and apply a simplified version of the cluster method to derive an upper bound on the number of such systems.

\section{Tree Optimization Problems}

In this section, we examine planar problems in which we need to find an optimal subtree of the given planar graph $G$ that satisfies some criteria.
We illustrate our algorithms for solving the \textsc{Planar Steiner Tree} and \textsc{Planar Connected Dominating Set} problems.

We adapt the method of excluding meandric systems to the search for an optimal subtree. It is worth noting that, in contrast to the cycle optimization problems of the previous section, the existing tree optimization algorithms did not really use the meandric systems to exclude certain configurations to~improve their~running~time.

\subsection{Eulerian Outline}

To use the meandric systems in such tree optimization problems, we introduce the following notion of a topological outline of a tree on a sphere.
Consider a spherical drawing of $G$.
Let $\varepsilon$ be a small constant such that the geodesic distance between any two points on two edges with no common endpoints is greater than $2\varepsilon$.
For a subtree $S$ of $G$, let $\mathcal E(S)$ be the set of points $P$ such that the minimum geodesic distance from $P$ to the edges of $S$ is equal to $\varepsilon$.
We call $\mathcal E(S)$ the \emph{Eulerian outline} of $S$.

Essentially, $\mathcal E(S)$ is the perimeter of $S$ offset at distance $\varepsilon$; see \Cref{fig:eulerian} (left).
This outline corresponds to the Eulerian tour of $S$, where each edge is traversed twice, once in each direction.
More formally, it is defined as the boundary of the Minkowski sum of the drawing of the tree and a disk of radius $\varepsilon$, called $\varepsilon$-offset in \cite{BHKP12}.
\begin{figure}[ht!]
    \centering
    \adjustbox{valign=c}{
    \begin{tikzpicture}[
        scale=.7,
        vertex/.style={circle, fill=black, inner sep=1.5pt},
        tube/.style={line cap=round, line join=round}
    ]
    
    \coordinate (v1)  at (0,0);
    \coordinate (v2)  at (-2,1.3);
    \coordinate (v3)  at (0,1.5);
    \coordinate (v4)  at (2,1.2);
    \coordinate (v5)  at (-3.3,1);
    \coordinate (v6)  at (-1.2,2.4);
    \coordinate (v7)  at (0,2.8);
    \coordinate (v8)  at (1.1,2.4);
    \coordinate (v9)  at (2.9,2.3);
    \coordinate (v10) at (3.4,3.5);
    \coordinate (v11) at (3.3,.7);
    
    \def\treepath{
        (v1) -- (v2)
        (v1) -- (v3)
        (v1) -- (v4)
        (v2) -- (v5)
        (v2) -- (v6)
        (v3) -- (v7)
        (v4) -- (v8)
        (v4) -- (v9)
        (v9) -- (v10)
        (v4) -- (v11)
    }
    
    \draw[tube, line width=0.45cm, draw=RoyalBlue] \treepath;
    
    \draw[tube, line width=0.4cm, draw=white] \treepath;
    
    \draw \treepath;
    
    \foreach \p in {v1,v2,v3,v4,v5,v6,v7,v8,v9,v10,v11}
        \node[vertex] at (\p) {};
    
    \end{tikzpicture}
    }
    \hspace{1cm}
    \adjustbox{valign=c}{
    \begin{tikzpicture}
        [tube/.style={line cap=round, line join=round}]
        \tikzstyle{blackv} = [circle, draw, fill=black, inner sep=1.5pt]
        \tikzstyle{whitev} = [circle, draw, fill=white, inner sep=1.5pt]

        \def\r{1.7}
    
        \foreach \i in {1,...,7} {
            \coordinate (v\i) at ({90 + (\i-1)*360/7}:\r);
        }

        \coordinate (A) at (-0.45,0.35);
        \coordinate (B) at (0,-0.7);
        \coordinate (C) at (0.6,0.3);
        \coordinate (D) at (0.8,-0.5);
        
        \def\tpa{
            (v1) -- (A)
            (A) -- (v3)
            (A) -- (B)
            (B) -- (v4)
        }
    
        \draw[tube, line width=0.45cm, draw=RoyalBlue] \tpa;
        \draw[tube, line width=0.4cm, draw=white] \tpa;
        
        \def\tpb{
            (v7) -- (C)
            (C) -- (D)
            (C) -- (v6)
        }
    
        \draw[tube, line width=0.45cm, draw=RoyalBlue] \tpb;
        \draw[tube, line width=0.4cm, draw=white] \tpb;
    
        \draw (0,0) circle (\r);

        \node[blackv] at (A) {};
        \node[blackv] at (B) {};
        \node[blackv] at (C) {};
        \node[blackv] at (D) {};
    
        \draw (v1) -- (A);
        \draw (A) -- (B);
        \draw (A) -- (v3);
        \draw (B) -- (v4);
    
        \draw (v7) -- (C);
        \draw (C) -- (v6);
        \draw (C) -- (D);
        
        \draw[line width=0.38cm, white] (0,0) circle (\r+0.2);
    
        \foreach \i in {1,3,4,6,7} {
            \node[whitev] at (v\i) {};
        }
        \foreach \i in {2,5} {
            \node[blackv] at (v\i) {};
        }

        \scriptsize
        \node[xshift=-.05cm,yshift=.3cm] at (v7) {$\texttt{[}$};
        \node[xshift=.25cm,yshift=-.15cm] at (v7) {$\texttt{[}$};
        \node[xshift=.2cm,yshift=.25cm] at (v6) {$\texttt{]}$};
        \node[xshift=.05cm,yshift=-.3cm] at (v6) {$\texttt{]}$};
        \node[xshift=.2cm,yshift=-.3cm] at (v4) {$\texttt{[}$};
        \node[xshift=-.3cm,yshift=-.1cm] at (v4) {$\texttt{[}$};
        \node[xshift=-.05cm,yshift=-.3cm] at (v3) {$\texttt{]}$};
        \node[xshift=-.2cm,yshift=.25cm] at (v3) {$\texttt{[}$};
        \node[xshift=-.2cm,yshift=.2cm] at (v1) {$\texttt{]}$};
        \node[xshift=.25cm,yshift=.2cm] at (v1) {$\texttt{]}$};

        \draw[line width=1pt] ({450/7}: {\r-0.12}) -- ({450/7}: {\r+0.12});
    \end{tikzpicture}
    }
    
    \caption{On the left, the Eulerian outline of a spherical drawing of a tree.
    On the right, a balanced bracket labeling of a partial Eulerian outline (the order starts from the small segment cutting an arc in the top right and continues in the clockwise direction).
    The vertices not belonging to the subtree are not labeled.}
    \label{fig:eulerian}
\end{figure}

Consider a noose $O_e$ of a sphere cut decomposition of $G$ and the intersection of its enclosed disk~$\Delta_e$ with $\mathcal E(S)$.
If $\mathcal E(S)$ intersects with $O_e$, then this partial Eulerian outline consists of a set of Jordan arcs, where each arc has two endpoints on $O_e$.
If the number of vertices of $S$ in $\midset(e)$ is~$m$ (white-filled vertices in \cref{fig:eulerian} (right)), this naturally defines a non-crossing partition on these~$m$ vertices.

This non-crossing partition then can be represented by a balanced bracket labeling with $2m$ brackets, see \Cref{fig:eulerian} (right).
If a vertex $v$ is in $S$, then the Eulerian outline intersects $O_e$ at two points, one to the left and one to the right of $v$, which we call $\varepsilon$-points.
Each $\varepsilon$-point is then assigned one of the brackets \texttt{[} or \texttt{]}.

\subsection{Planar Connected Dominating Set}
\label{sec:cds}
We illustrate the use of the Eulerian outline combined with meandric systems to solve the \textsc{Planar Connected Dominating Set} problem.
We exploit the fact that this problem is equivalent to the \textsc{Planar Max Leaf} problem: for a given vertex-weighted planar graph, find a spanning tree $S$ with the largest total weight on its leaves (see e.g.~\cite{CWY00}).

In this case, we label the vertices on the noose $O_e$ with labels in $\{\texttt{0}_{\text{in}}, \texttt{0}_{\text{out}}, (\texttt{[},\texttt{[}), (\texttt{[},\texttt{]}), (\texttt{]},\texttt{[}), (\texttt{]},\texttt{]})\}$.
The label \texttt{0} means that the labeled vertex $v$ is a leaf in $S$; the subscripts in and out mean that~$v$ is connected to $S$ from inside or from outside of $O_e$, respectively.
The bracket pair labels are assigned to the non-leaf vertices.
The whole sequence of brackets then defines the topology of the partial Eulerian outline of $S$ inside $O_e$.
Note that, technically, the vertices labeled by \texttt{0} also are outlined by $\mathcal E(S)$, but we can ignore them in assigning the brackets, since they are disconnected in $S$ either from the subgraph inside $O_e$ or from the subgraph outside of $O_e$.
Intuitively, we can imagine that the vertices labeled by $\texttt{0}_{\text{in}}$ are pushed inside $O_e$ and the vertices labeled by $\texttt{0}_{\text{out}}$ are pushed outside of $O_e$ so that $\mathcal E(S)$ does not intersect $O_e$ in the radius $\varepsilon$ around them.

Now we can calculate the constants $c_1$ and $c_2$ to apply \cref{thm:general-dp,thm:general-fmm}.
\begin{itemize}
    \item On the boundary, each vertex can be assigned one of the six label options.
    Hence, $c_1 = 6$.
    \item On the frontier, we can count the number of options as follows.
    First, we bruteforce the set $A$ of vertices in $F$ that will be internal vertices in $S$.
    The vertices in $F \setminus A$ then will be leaves and labeled by \texttt{0}.
    The vertices in $A$ must be assigned a pair of brackets on each side of the frontier.
    We do not bruteforce all possible $4^{2m}$ bracket labelings but use meandric systems.

    Examine the meandric system defined by the bracket double labeling $s$ on $2m = 2|A|$ points.
    A closed loop (submeander) is not allowed, since it means one of the two things:
    \begin{itemize}
        \item If there is a submeander that starts at a point at an odd position of the labeling and ends at an even position, then there is a disconnected subgraph in $S$; see~\cref{fig:CDS-no-instance} blue submeander.
        \item If there is a submeander that starts at a point at an even position of the labeling and ends at an odd position, then it induces a cycle in $S$; see~\cref{fig:CDS-no-instance} orange submeander.
    \end{itemize}

    The first is not allowed, since we are looking for a spanning tree.
    The second is not allowed too, since $S$ must be a tree.
    Therefore, we need to examine only acyclic meandric systems on $2m$ points, the number of which is bounded by $\Oh^*(z^{2m})$.

    Therefore, in total, we examine $\Oh^*(z^{2|A|})$ labelings on $A$.
    The total number of labelings considered is
    \begin{equation}
        \mathcal F = \Oh^*\Bigl(\sum_{A \subseteq F} z^{2|A|}2^{|F|-|A|}\Bigr) =  \Oh^*\left((2+z^2)^{|F|}\right).
    \end{equation}
    Hence, $c_2 = 2+z^2 < 14.806$.
\end{itemize}
    
\begin{figure}[ht!]

    \centering
    \includegraphics[page=5,width=0.4\textwidth]{figures/noose.pdf}

\caption{ 
Illustration of forbidden meandric configurations for the {\sc Connected Dominating Set} problem. The blue submeander (left) starts at a point at an odd position of the labeling (green labels) and ends at an even position (red labels), hence it identifies a disconnected component. The orange submeander (right) starts at a point at an even position of the labeling (red labels) and ends at an odd position (green labels), hence it identifies a cycle. All other labels are colored in dark grey.}
\label{fig:CDS-no-instance}
\end{figure}

We obtain the following result:
\begin{theorem} \label{thm:cds}
    There exist deterministic algorithms that solve the \textup{\textsc{Planar Connected Dominating Set}} problem in $\Oh^*((6 \sqrt{2+z^2})^{\bw}) = \Oh(2^{9.611\sqrt{n}})$ and $\Oh^*(W\cdot (2+z^2)^{\bw}) = \Oh(W\cdot 2^{8.251\sqrt{n}})$ time.
\end{theorem}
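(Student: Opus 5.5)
The plan is to instantiate \Cref{thm:general-dp} and \Cref{thm:general-fmm} with the constants $c_1 = 6$ and $c_2 = 2+z^2$ computed just above. This requires three things: correctness of the labeling scheme, efficient generation of the labelings, and the final arithmetic.

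\textbf{Correctness.} First, use the equivalence between \textsc{Planar Connected Dominating Set} and \textsc{Planar Max Leaf}: minimizing the weight of a connected dominating set $D$ is equivalent to maximizing the weight of the leaves of a spanning tree $S$, whose non-leaves form $D$. Handle degenerate cases such as $|D|=1$ separately in polynomial time. By \cref{app:problems}, we may assume $G$ is bridgeless, so \Cref{thm:sc-decomposition} applies.

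A DP state on $O_e$ stores a labeling from the six-letter alphabet. The brackets encode the non-crossing partition induced by the partial Eulerian outline of $S\cap G_e$, and $\texttt{0}_{\text{in}}/\texttt{0}_{\text{out}}$ record on which side a leaf attaches. The table value is the best total leaf weight inside.

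In the merge step, the pairs of labels on a frontier vertex are:
\begin{itemize}
\item $(\texttt{0}_{\text{in}},\texttt{0}_{\text{out}})$ or $(\texttt{0}_{\text{out}},\texttt{0}_{\text{in}})$ for a leaf attached through $G_L$ or $G_R$;
\item bracket pairs for internal vertices.
\end{itemize}
A leaf cannot be attached from both sides, since that would create a cycle, and $\texttt{0}_{\text{out}}$ on both sides is forbidden because the vertex is forgotten. This explains the factor $2$. The domination/leaf attachment is then enforced at each forgotten vertex.

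Next, argue that the frontier brackets of the internal vertices, read on both sides, form an open unidirectional meandric system with $2|A|$ crossings. Every closed loop must be excluded, as in the two cases illustrated in \cref{fig:CDS-no-instance}: it either separates a component of $S$ or encloses a cycle. Hence only acyclic systems are needed, and their number is $U_{2|A|}=\Oh^*(z^{2|A|})$ by \Cref{thm:upper-bound}.

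\textbf{Generation.} The valid frontier labelings must be enumerable in $\Oh^*(\mathcal F)$ time. Enumerate the subset $A$, then generate the acyclic systems on the $2|A|$ $\varepsilon$-points by the same backtracking as before \Cref{thm:tsp-meanders}. Its prefix-extendability argument still holds: filling the remaining points with $(\texttt{[},\texttt{[})$ never closes a loop. Summing over $A$ gives $\mathcal F=\Oh^*((2+z^2)^{|F|})$. Boundary labelings are enumerated trivially in $\Oh^*(6^{|B|})$.

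\textbf{Arithmetic.} Check the side conditions: $c_1^2=36\ge c_2\approx14.806$ and $c_1=6\le c_2$. \Cref{thm:general-dp} then gives $(6\sqrt{2+z^2})^{\bw}$. For \Cref{thm:general-fmm}, the maximum is $c_2$, because $6^{(\omega-1)/2}\sqrt{c_2}\approx 3.42\cdot 3.85 <14.806$; this gives $(2+z^2)^{\bw}$.

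Finally, plug in $\bw\le 2.122\sqrt n$ and $z<3.579$. This gives $\log_2(6\sqrt{14.806})\cdot 2.122\approx 4.529\cdot 2.122\approx 9.61$ and $\log_2 14.806\cdot 2.122\approx 3.888\cdot2.122\approx 8.25$, matching the stated exponents.

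The main obstacle is the correctness of the forbidden-configuration claim. The parity argument (a loop from an odd position to an even one separates a component, while one from even to odd encloses a cycle) must be made rigorous for Eulerian outlines, including the vertices in $I$ and the $\texttt{0}$-labelled vertices. I would prove it topologically: a closed component of the merged outline is a Jordan curve that avoids $S$'s outline elsewhere. So the curve either bounds a region containing a whole component of $S$, or its interior side is a face enclosed by $S$, i.e.\ a cycle.
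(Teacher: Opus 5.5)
Your proposal follows the paper's proof essentially verbatim. It uses the same six-label scheme giving $c_1=6$, and the same frontier count $\sum_{A\subseteq F} z^{2|A|}2^{|F|-|A|}=(2+z^2)^{|F|}$ via two leaf options plus acyclic meandric systems on the $2|A|$ $\varepsilon$-points. It then instantiates \Cref{thm:general-dp} and \Cref{thm:general-fmm} in the same way, with $c_2$ dominating in the latter; your explicit side-condition and enumeration checks are correct additions that the paper leaves implicit.

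One small slip: $(\texttt{0}_{\text{in}},\texttt{0}_{\text{in}})$ is excluded because the vertex would then have two incident tree edges and so would not be a leaf. It is not excluded because it necessarily creates a cycle.
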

Note that in the complexity of the second algorithm, $c_2$ dominates $c_1^{\frac{\omega-1}{2}} \sqrt{c_2}$.
The previous best-known complexities for deterministic algorithms are $\Oh(2^{8.778\sqrt{n}})$ by \cite{PBvR17} and $\Oh^*(W\cdot 2^{8.111\sqrt{n}})$ by \cite{Dor06}.
However, the latter complexity seems to be incorrect, since it corresponds to the~$c_1^{\frac{\omega-1}{2}} \sqrt{c_2}$ part, but the dominating part in their algorithm also comes from $c_2$, which for them is equal to~$c_2 = 17$, see \Cref{app:mistakes}.
This means that the correct upper bound on the complexity of their algorithm is $\Oh^*(W\cdot 2^{8.674\sqrt{n}})$.
Therefore, our result improves the best-known complexity in the bounded-weight setting.

\subsection{Planar Steiner Tree}

We can also apply meandric systems to the \textsc{Planar Steiner Tree} problem.

In this case, we use the following labeling.
If the vertex is not in $S$, we label it by \texttt{0}.
Otherwise, it is labeled by a pair of brackets, which correspond to intersections of the Eulerian outline of $S$ at the left and right $\varepsilon$-points of the vertex.
We can calculate the relevant quantities to apply \Cref{thm:general-dp} and \Cref{thm:general-fmm}.
\begin{itemize}

    \item Since each vertex labeled from the set $\{\texttt{0}, (\texttt{[},\texttt{[}), (\texttt{[},\texttt{]}), (\texttt{]},\texttt{[}), (\texttt{]},\texttt{]})\}$, we have $c_1 = 5$.
    
    \item On the frontier, we consider the possible options as follows.
    Bruteforce the set of vertices $A \subseteq F$ that are included in the Steiner tree.
    All other vertices then have labels \texttt{0} on both sides of the frontier.
    The $m = |A|$ vertices in $A$ are labeled by two pairs of brackets on each side of the frontier.
    This defines a double labeling $s = (s_L,s_R)$ on $2m$ vertices that may correspond to a partial Eulerian labeling of a subtree.

    First, we check whether all terminals $X$ are strictly contained inside $O_P$ (the parent noose that contains $O_L$ and $O_R$).
    If so, then we bruteforce all possible meanders on $A$ and check, using the DP tables of $O_L$ and $O_R$, whether we obtain a solution.
    In total, this contributes $M_m = \Oh^*(z^{2m})$ labelings.

    Otherwise, a closed loop on the frontier is not allowed by the same arguments as for the \textsc{Planar Connected Dominating Set} problem.
    By a similar calculation, we obtain
    \begin{equation}
        \mathcal F = \Oh^*\Bigl(\sum_{A \subseteq F} z^{2|A|}\Bigr) =  \Oh^*\left((1+z^2)^{|F|}\right).
    \end{equation}
    Therefore, $c_2 = 1+z^2 < 13.806$.
\end{itemize}

By applying \Cref{thm:general-dp} and \Cref{thm:general-fmm}, we obtain the following theorem:
\begin{theorem} \label{thm:steiner}
    There exist deterministic algorithms that solve the \textup{\textsc{Planar Steiner Tree}} problem in $\Oh^*((5 \sqrt{1+z^2})^{\bw}) = \Oh(2^{8.946\sqrt{n}})$ and $\Oh^*(W\cdot (1+z^2)^{\bw}) = \Oh(W\cdot 2^{8.037\sqrt{n}})$ time.
\end{theorem}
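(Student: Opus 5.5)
The plan is to apply \Cref{thm:general-dp} and \Cref{thm:general-fmm} with $c_1 = 5$ and $c_2 = 1+z^2$. Before that, three things need checking: correctness of the labeling scheme, that the frontier labelings can be enumerated in $\Oh^*(\mathcal F)$ time, and that the hypotheses on $c_1,c_2$ hold.

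\textbf{Correctness.} By \cref{app:problems} we may assume $G$ is bridgeless, so \Cref{thm:sc-decomposition} gives a sphere cut decomposition of width $\bw$. An optimal Steiner solution can be taken to be a tree $S$. For each noose $O_e$, the partial Eulerian outline $\mathcal E(S)\cap\Delta_e$ determines a balanced bracket labeling of the vertices of $S$ on $\midset(e)$, with label \texttt{0} on the other vertices. The table $D[e][s]$ then stores the minimum weight of a forest in $G_e$ whose outline realises $s$. It must also contain every terminal of $G_e$ that is not on the noose, and every component must reach the noose unless the whole solution lies inside.

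In the merge step, a double labeling of $F\cup I$ glues the outlines from $O_L$ and $O_R$ along the frontier. The glued curves form a meandric system on the $2|A|$ $\varepsilon$-points, and its arcs continue to the $L\cup R$ part of the boundary. I would argue as in \Cref{sec:cds} that a closed submeander corresponds either to a cycle in $S$ or to a component of $S$ fully enclosed in $\Delta_P$. The parity of the endpoint positions decides which case occurs.

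A cycle is never optimal and can be discarded. An enclosed component is only admissible when it is the entire solution, which forces all terminals to lie strictly inside $O_P$. In that case the whole frontier system must be a single closed loop, i.e.\ a meander on $2m$ points. Since every point is then matched internally, no boundary labels remain to combine with, and this contributes $M_m=\Oh^*(z^{2m})$ options; the bound uses $z^2\ge \sqrt{M}$ from \Cref{thm:upper-bound}. Otherwise only acyclic systems on $2m$ points occur, and there are $\Oh^*(z^{2m})$ of them by \Cref{thm:z-bound} and \Cref{thm:upper-bound}.

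\textbf{Enumeration.} Backtracking over prefixes generates these systems in time proportional to their count up to polynomial factors, by the same completion argument used before \Cref{thm:tsp-meanders}. Summing over $A\subseteq F$ gives $\mathcal F=\Oh^*\bigl(\sum_A z^{2|A|}\bigr)=\Oh^*((1+z^2)^{|F|})$, because both the meander and the acyclic cases are bounded by $z^{2|A|}$. On the boundary there are $5^{|B|}$ labelings.

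\textbf{Constants.} We have $c_1^2=25\ge c_2\approx13.81$, so \Cref{thm:general-dp} yields $\Oh^*((5\sqrt{1+z^2})^{\bw})$. We also have $c_1\le c_2$, and $5^{(\omega-1)/2}\sqrt{c_2}<c_2$ since $5^{0.686}\approx 3.02<\sqrt{13.81}$, so \Cref{thm:general-fmm} yields $\Oh^*(W(1+z^2)^{\bw})$. Substituting $z<3.579$ and $\bw\le 2.122\sqrt n$ gives the stated exponents.

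\textbf{Main obstacle.} The delicate step is the corner case in which the solution closes up inside a noose. This is exactly the flaw mentioned for \cite{DPBF10}. The argument has to show that the table semantics, which require components either to reach the boundary or to be the complete solution, together with the test ``all terminals strictly inside $O_P$'', correctly separate the admissible single-meander closures from the forbidden loops. It must also confirm that this case adds no more than $z^{2m}$ labelings, and that the root noose, having constant size, is handled directly.
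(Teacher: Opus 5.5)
Your proposal is correct and follows the paper's proof essentially step for step: five labels give $c_1=5$, and the frontier count splits on whether all terminals lie strictly inside $O_P$ (closed meanders, $M_m=\Oh^*(z^{2m})$) or not (acyclic systems, $\Oh^*(z^{2m})$), giving $c_2=1+z^2$, after which \Cref{thm:general-dp} and \Cref{thm:general-fmm} apply with $c_2$ dominating in the FMM bound. One small slip: the inequality you need is $z^2\ge M$, i.e.\ $M_m=\Oh^*(z^{2m})$, which is exactly what \Cref{thm:upper-bound} states; it is not $z^2\ge\sqrt{M}$.
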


The papers \cite{DPBF10,Dor06} claim complexities $\Oh(2^{8.288\sqrt{n}})$ and $\Oh(W\cdot 2^{7.164\sqrt{n}})$; however, we point out a mistake in their proofs, see \Cref{app:mistakes}.
Fixing that results in larger complexities $\Oh(2^{9.264\sqrt{n}})$ and $\Oh(W \cdot 2^{8.674\sqrt{n}})$.
Using the improved upper bound for $z$, our result improves these older algorithms, and is slightly faster than the best-known deterministic complexity $\Oh(2^{8.038\sqrt n})$ in the unbounded-weight setting by \cite{PBvR17}.

\section{Forest Optimization Problems}

In this section, we examine how we can apply meandric systems to the problems in which we have to find an optimal subgraph of the given graph that is a forest.

\subsection{Forest Meandric Systems}

Similarly to the tree optimization problems, we can define the Eulerian outline $\mathcal E(S)$ of a forest $S$ as the collection of the Eulerian outlines of its subtrees.
The intersection of $\mathcal E(S)$ with a noose $O_e$ still can be encoded by brackets as a non-crossing partition.
However, in the dynamic programming, the restrictions for the double labelings on the frontier are looser because now a submeander can enclose a separate subtree of $S$.

Next, we examine the constraints
imposed on a double bracket labeling $s = (s_L,s_R)$ on the frontier.
Suppose that $m$ is the number of crossings of $\mathcal E(S)$ with the frontier.
Now, the forbidden submeanders are the ones that correspond to a cycle in $S$.
As described in \Cref{sec:cds}, such submeanders start at crossings at even positions and end at crossings at odd positions in $s$.
We call such configurations \emph{forest meandric systems}; see
\Cref{fig:Forest-no-instance} for an example of a forbidden meandric system which is induced by a subgraph with a cycle.

\begin{figure}[ht!]
    \centering
    \includegraphics[page=6,width=0.4\textwidth]{figures/noose_v2.pdf}

    \caption{Illustration of a forbidden forest meandric system.}
    \label{fig:Forest-no-instance}
\end{figure}

As far as we know, unfortunately, there is only one previous work that examined similar systems \cite{Bac99}, studying the {\em closed} forest meandric systems.
In our case, we are interested in {\em open} forest meandric systems, for which we have found no calculations or asymptotic bounds.

Let $\Theta^*(\phi^{2m})$ be the asymptotic number of forest meandric systems on $2m$ crossings (we look at even numbers of crossings, since each vertex on the frontier generates two brackets on each side).
Trivially, $\phi \leq 4$, but already by the following simple argument, we can see that it is strictly smaller.
Partition $s$ into consecutive double labeling sequences of length~$4$, and let $s^{(i)} = (s^{(i)}_L,s^{(i)}_R)$ be the $i$-th sequence.
Among the $16^2$ possible labeling options for $s^{(i)}$, we can exclude any double labeling $((*,\texttt{[},\texttt{]},*),(*,\texttt{[},\texttt{]},*))$, where $*$ is any bracket.
The number of such options is $2^4 = 16$.
Thus, we have $\phi \leq \sqrt[4]{16^2-16} < 3.936$.

Similarly to \cite{AP05}, we can use the cluster method to get a better upper bound on $\phi$.
Below, we give a self-contained proof of an improved bound using a simplified version of the cluster method.
A similar argument is also outlined in \cite[Section~5]{AP05}.

\begin{theorem} \label{thm:phi-bound}
    $\phi \leq 3.851$.
\end{theorem}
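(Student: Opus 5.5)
Our plan is to refine the simple block argument preceding the statement into a Goulden--Jackson style cluster bound over the alphabet $\Sigma=\{\texttt{U},\texttt{D},\texttt{L},\texttt{R}\}$ of double labels, with $\texttt{U}=(\texttt{[},\texttt{[})$, $\texttt{D}=(\texttt{]},\texttt{]})$, $\texttt{L}=(\texttt{[},\texttt{]})$, $\texttt{R}=(\texttt{]},\texttt{[})$. We will fix a small finite set $\mathcal W$ of forbidden factors. Each factor in $\mathcal W$ encodes a closed submeander that starts at an even crossing and ends at an odd crossing, so that it encloses a cycle of $S$.

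The simplest such factor is the length-two loop $\texttt{U}\texttt{D}$ placed at positions $(2j,2j+1)$; this is exactly the pair excluded in the $\sqrt[4]{240}$ argument. Next come longer loops $\texttt{U}w\texttt{D}$ in which $w$ is a shift such as $\texttt{L}^k$ or $\texttt{R}^k$, again placed from an even to an odd position. Since the parity constraint makes the pattern position-dependent, we will work over the doubled alphabet $\Sigma^2$ of crossing pairs (one vertex of the frontier). A forest meandric system on $2m$ crossings is then a word of length $m$ over $16$ letters that avoids the forbidden factors. These factors are the encoded loops, possibly straddling block boundaries; we will take the straddling ones to be the aligned $\texttt{U}w\texttt{D}$ patterns of even length.

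We then apply the cluster method in its simplest form. The generating function of words avoiding a finite set $\mathcal W$ is $1/(1-16x-C(x))$, where $C(x)$ is the cluster generating function. $C$ is computed from the self-overlaps and mutual overlaps of the patterns via a finite linear system, which we solve exactly because $\mathcal W$ is small. Since avoiding only $\mathcal W$ is a weaker constraint than being a forest meandric system, the number of forest systems is at most the coefficient of $x^m$. Hence $\phi^2\le 1/x_0$, where $x_0$ is the smallest positive root of the denominator. It suffices to choose $\mathcal W$ so that $1/x_0\le 3.851^2\approx 14.83$. A convenient choice is the family $\{\texttt{U}\texttt{L}^{2k}\texttt{D},\ \texttt{U}\texttt{R}^{2k}\texttt{D}\}$ for all $k$ with the correct parity. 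Its generating function is rational, so $C(x)$ stays rational and $x_0$ can be certified numerically.

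The main obstacle is getting the overlap structure right so that the cluster equations are correct. We must check that every word in $\mathcal W$ really forces a cycle, meaning the loop closes and encloses an even-to-odd pattern. We must also check that the overlaps among these patterns, which can share $\texttt{U}$ or $\texttt{D}$ endpoints and nest, are enumerated completely. If the first family does not reach $14.83$, we will add the next layer of shifts, namely mixed shifts $\texttt{L}\texttt{U}\ldots$, until the root crosses the threshold.
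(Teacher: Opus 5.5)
\textbf{The gap is in the forbidden family.} For $k\ge 1$, the words $\texttt{U}\texttt{L}^{2k}\texttt{D}$ and $\texttt{U}\texttt{R}^{2k}\texttt{D}$ do not contain a closed curve. Each $\texttt{L}$ carries the innermost strand below the line to just above it, so $\texttt{L}^{2k}$ is a shift by offset $2k$, not $0$. Inside $\texttt{U}\texttt{L}^{2k}\texttt{D}$, the top bracket of $\texttt{D}$ closes the arc opened by the last $\texttt{L}$. The top arc of $\texttt{U}$ is still open at the end of the factor, so the curve through $\texttt{U}$ does not close there.

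Concretely, take $\texttt{UUUULLDD}$. Its top brackets are $\texttt{[[[[[[]]}$ and its bottom brackets are $\texttt{[[[[]]]]}$. Tracing the matchings gives two open arcs, through crossings $1,8,5,4$ and $2,7,6,3$, and no closed curve. So it is a legitimate (acyclic, hence forest) unidirectional system. Yet it contains $\texttt{ULLD}$ from the even crossing $4$ to the odd crossing $7$, which is exactly your block pattern $(\ast,\texttt{U})(\texttt{L},\texttt{L})(\texttt{D},\ast)$.

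Hence your key inequality (``avoiding $\mathcal W$ is a weaker constraint than being a forest system'') is false for your $\mathcal W$, and $1/x_0$ does not bound $\phi^2$. The only valid member of your family is $\texttt{UD}$ ($k=0$). Forbidding it alone gives only $\phi\le\sqrt{15}\approx 3.873$.

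\textbf{The fix.} The missing idea is that the interior of a loop must have \emph{zero net shift}. This is the paper's choice: forbid $\texttt{U}w\texttt{D}$ for every $w\in\{\texttt{L},\texttt{R}\}^*$ with equally many $\texttt{L}$s and $\texttt{R}$s, starting at an even crossing. Then the line returns to the gap between the two strands created by $\texttt{U}$, and $\texttt{D}$ joins them.

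There are $\binom{2i}{i}$ such words of length $2i+2$, so $G_{\mathcal W}(x)=x^2/\sqrt{1-4x^2}$, which is not rational. Over the four-letter alphabet these words pairwise do not overlap, since the only $\texttt{U}$ is the first letter and the only $\texttt{D}$ the last. So plain inclusion--exclusion suffices, and parity is tracked by a $2\times 2$ transfer matrix that subtracts $G_{\mathcal W}$ in its odd-to-odd entry.

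This gives $G_{\mathcal L}(x)=(1+g)/(1-16x^2+g)$ with $g=x^2/\sqrt{1-4x^2}$, whose radius of convergence is $r\approx 0.259676$, hence $\phi\le 1/r\approx 3.851$.

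Your 16-letter block encoding is unnecessary here and actually creates overlaps: two patterns can share a block $(\texttt{D},\texttt{U})$, which forces you into the full cluster method. Finally, your proposal never carries out the root computation, and the stated constant has essentially no slack. Dropping even the short balanced loops $\texttt{ULRD},\texttt{URLD}$ already pushes the bound above $3.851$.
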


\begin{proof} \label{phi-bound}
    For a language $\mathcal L \subseteq \Sigma^*$, examine the generating function $G_{\mathcal L}(x)$, where the coefficient at $x^i$ is the number of words of length $i$ in $\mathcal L$.
    If for each $i$ the number of words of length $i$ is $\Oh(c^i)$ for some constant $c$, then the radius of convergence of $G_{\mathcal L}(x)$ is at least $1/c$.
    Thus, the task of determining the asymptotic number of words in $\mathcal L$ of fixed length is reduced to constructing the generating function of $\mathcal L$ and calculating its convergence radius.

    Let $\mathcal W \subset \Sigma^*$ be a set of words that are forbidden as substrings.
    In the following simple scenario, suppose that no two words in $\mathcal W$ overlap (that is, for any two $a, b \in \mathcal W$, $a$ is not contained in $b$ as a substring and no proper suffix of $a$ is a proper prefix of $b$). 
    Examine the generating function
    \begin{equation} \label{eq:gf}
        \sum_{i=0}^{\infty} (G_{\Sigma}(x) - G_{\mathcal W}(x))^i = \frac{1}{1-G_{\Sigma}(x) + G_{\mathcal W}(x)}.
    \end{equation}
    By the principle of inclusion-exclusion, this is equal to $G_{\mathcal L}(x)$ as follows.
    If a word contains $k > 0$ occurrences of forbidden words, then it will contribute $2^k$ times to the series; for a subset of size~$s$ of~$k$ of these words, this will contribute a coefficient $(-1)^s$ to the sum.
    Thus, in total, it will contribute~$(1-1)^k = 0$.
    In the case when the word does not contain forbidden words, it contributes~$1$.
    Hence, the coefficient at $x^i$ correctly counts the number of words in $\mathcal L_{\mathcal W} = \Sigma^* \setminus (\Sigma^* \mathcal W \Sigma^*)$.

    As in \Cref{thm:z-bound}, we will use the alphabet $\{\texttt{U}, \texttt{D}, \texttt{L}, \texttt{R}\}$ for $(\texttt{[},\texttt{[})$, $(\texttt{]},\texttt{]})$, $(\texttt{[},\texttt{]})$ and $(\texttt{]},\texttt{[})$ to encode our meandric systems.
    We will forbid the words $\mathcal W = \texttt{U}\mathcal S \texttt{D}$, where $\mathcal S \subset \{\texttt{L},\texttt{R}\}^*$ is the language of words that contain an equal number of \texttt{L}s and \texttt{R}s.
    Each such word corresponds to a submeander for the following reason.
    Consider the vertical cross-section of the system at a point labeled with \texttt{L} or \texttt{R}; the effect of this label is simply shifting the endpoints of the curves on the cross-section up or down, respectively, creating a single new crossing with the horizontal line.
    The condition of the equal number of \texttt{L} and \texttt{R} ensures that the overall shift is zero.
    Thus, the curves passing through the starting and the ending crossings, labeled by \texttt{U} and \texttt{D}, respectively, are parts of the same submeander; see~\cref{fig:forbidden} for an example.
    Therefore, forbidding the words in $\cal W$ that start at an even position will give the desired upper bound.
    
    \begin{figure}[ht!]
        \centering
        \scriptsize
        \begin{tikzpicture}[scale=.7]
                \draw (-0.5,0) -- (7.5,0);
            
                \foreach \x in {0,1,2,3,4,5,6,7}
                    \fill (\x,0) circle (2pt);
                    
                \draw
                (0,0)   to[out=90,in=90,looseness=1.3] (3,0)
                        to[out=-90,in=-90,looseness=1.3] (6,0)
                        to[out=90,in=90,looseness=1.3] (7,0)
                        to[out=-90,in=-90,looseness=1.3] (2,0)
                        to[out=90,in=90,looseness=1.3] (1,0)
                        to[out=-90,in=-90,looseness=1.3] (0,0);
            
                \draw
                (-0.5,+1.5) to[out=0,in=100,looseness=1.3] (4,0)
                          to[out=-80,in=-100,looseness=1.3] (5,0)
                          to[out=80,in=180,looseness=1.3] (+7.5,+1);
    
                \node[left, yshift=+0.2cm] at (0,0) {\texttt{[}};
                \node[left, yshift=+0.2cm] at (1,0) {\texttt{[}};
                \node[right, yshift=+0.2cm] at (2,0) {\texttt{]}};
                \node[right, yshift=+0.2cm] at (3,0) {\texttt{]}};
                \node[right, yshift=+0.2cm] at (4,0) {\texttt{]}};
                \node[left, yshift=+0.2cm] at (5,0) {\texttt{[}};
                \node[left, yshift=+0.2cm] at (6,0) {\texttt{[}};
                \node[right, yshift=+0.2cm] at (7,0) {\texttt{]}};
    
                \node[left, yshift=-0.2cm] at (0,0) {\texttt{[}};
                \node[right, yshift=-0.2cm] at (1,0) {\texttt{]}};
                \node[left, yshift=-0.2cm] at (2,0) {\texttt{[}};
                \node[left, yshift=-0.2cm] at (3,0) {\texttt{[}};
                \node[left, yshift=-0.2cm] at (4,0) {\texttt{[}};
                \node[right, yshift=-0.2cm] at (5,0) {\texttt{]}};
                \node[right, yshift=-0.2cm] at (6,0) {\texttt{]}};
                \node[right, yshift=-0.2cm] at (7,0) {\texttt{]}};
            \end{tikzpicture}
        \caption{A meandric system corresponding to a forbidden factor $\texttt{ULRRRLLD} \in \mathcal W$.}
        \label{fig:forbidden}
    \end{figure}

    To deal with the parity of positions, we will construct the generating function using the following $2 \times 2$ matrix:
    \begin{equation}
        A = \begin{pmatrix}
            0 & 4x \\
            4x & 0
        \end{pmatrix} - 
        \begin{pmatrix}
            0 & 0 \\
            0 & G_{\mathcal W}(x)
        \end{pmatrix}.
    \end{equation}
    The idea is that separate generating functions of the strings of even and odd length are encoded in the first and the second coordinates, respectively.
    Adding one of the four coordinates corresponds to multiplying by $4x$ (which changes the length parity); adding a forbidden word multiplies by $G_{\mathcal W}(x)$ (which does not change length parity).
    Similarly to \Cref{eq:gf}, the whole generating function of our language then is given by
    \begin{equation}
        G_{\mathcal L}(x) = 
        \begin{pmatrix}
            1 & 0
        \end{pmatrix}
        \cdot \frac{1}{I - A} \cdot
        \begin{pmatrix}
            1 \\ 0
        \end{pmatrix}.
    \end{equation}
    
    First we determine $G_{\mathcal W}(x)$.
    By the definition of forbidden factors, there are $\binom{2i}{i}$ distinct words of length $2i+2$ in $\mathcal W$.
    Hence,
    \begin{equation}
        G_{\mathcal W}(x) = \sum_{i=0}^{\infty} \binom{2i}{i} x^{2i+2} = \frac{x^2}{\sqrt{1-4x^2}},
    \end{equation}
    where the second equality comes from the central binomial generating function.

    Therefore, we are interested in
    \begin{equation}
        G_{\mathcal L}(x) = 
        \begin{pmatrix}
            1 & 0
        \end{pmatrix} \cdot
        \begin{pmatrix}
            1 & -4x \\
            -4x & 1+\frac{x^2}{\sqrt{1-4x^2}}
        \end{pmatrix}^{-1} \cdot
        \begin{pmatrix}
            1 \\ 0
        \end{pmatrix}.
    \end{equation}
    Calculating this gives
    \begin{equation}
        G_{\mathcal L}(x) = \frac{1+\frac{x^2}{\sqrt{1-4x^2}}}{1-16x^2+\frac{x^2}{\sqrt{1-4x^2}}}.
    \end{equation}
    The convergence radius of this series is $r \approx 0.259676$, hence $\phi \leq 1/r \approx 3.851$.
\end{proof}

We can obtain a better upper bound by using the Goulden-Jackson cluster method, similarly to how we did for the meandric numbers.
For that, we adapt the language of forbidden words to include only submeanders that start at even positions and end at odd positions.
Using the program provided by ChatGPT-5.6 Sol, with the length of standard representatives $n = 24$, we obtain an improved upper bound:
\begin{theorem}
    $\phi \leq 3.801179$.
\end{theorem}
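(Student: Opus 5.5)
The bound will come from the Goulden--Jackson cluster machinery of \cite{AP05}, which we already reimplemented for \Cref{thm:upper-bound}, but applied to a parity-sensitive forbidden language. We work over the alphabet $\Sigma=\{\texttt{U},\texttt{D},\texttt{L},\texttt{R}\}$, as in \Cref{thm:z-bound} and \Cref{thm:phi-bound}.

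\textbf{Forbidden set.} Let $\mathcal M_{\mathrm{even}}$ be the set of factors that encode a closed submeander whose first crossing lies at an even position and whose last crossing lies at an odd position. Any word occurring in an open forest meandric system on $2m$ crossings avoids every such factor at such positions. Hence the number of these systems is bounded by the number of length-$2m$ words avoiding a finite subfamily $\mathcal M'\subset\mathcal M_{\mathrm{even}}$, counted with position parity. This is the same monotonicity argument used in \Cref{thm:z-bound}: forbidding fewer words can only increase the count, so any finite $\mathcal M'$ yields a valid upper bound.

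\textbf{Constructing $\mathcal M'$.} We generate standard representatives of length $n=24$ by the backtracking procedure of \cite{AP05}. Since a submeander has an even number of crossings, its two endpoints automatically have opposite parity, so we keep only those occurrences anchored at an even start. Each representative is then expanded into its generating function by substituting the \texttt{L}/\texttt{R} shifts with general shifts assembled from jumps, exactly as for meanders. This enlarges \Cref{thm:phi-bound}, where only the representatives $\texttt{U}\,\mathcal S\,\texttt{D}$ were used.

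\textbf{Cluster method with parity.} We apply the Goulden--Jackson cluster method to the overlapping set $\mathcal M'$, which now replaces the non-overlap inclusion--exclusion of \Cref{thm:phi-bound}. The parity constraint is handled by lifting the cluster generating functions to $2\times 2$ matrices indexed by starting parity, in the same way the matrix $A$ was used in \Cref{thm:phi-bound}. Writing $C(x)$ for the matrix cluster generating function, we obtain
\[
G_{\mathcal L}(x)=(1\ 0)\,\bigl(I-4x\,P-C(x)\bigr)^{-1}(1\ 0)^{T},
\]
where $P$ is the parity-swap matrix. The cluster equations form a linear system, which we solve iteratively; convergence to the unique solution follows as shown in \cite{AP05}. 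The bound is then $\phi\le 1/r$, where $r$ is the smallest positive root of $\det(I-4xP-C(x))$. Running the program with $n=24$ should give $r$ with $1/r\le 3.801179$.

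\textbf{Main obstacle.} The difficulty is making this numerical computation rigorous. The iteration must be truncated (bounded jump lengths and a bounded number of jumps per shift), and the value of $r$ must be certified. Truncation is safe in the right direction, since omitting forbidden words only weakens the upper bound. To certify the root, I would first try to bracket it with interval arithmetic: evaluate the determinant at $x=1/3.801179$ and check that it is positive and has no sign change on $(0,1/3.801179]$, using the monotonicity of the iteration argued in \cite{AP05} to justify that the truncated solution lies on the correct side of the true one.
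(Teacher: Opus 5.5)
Your proposal is correct and follows the paper's route. The paper likewise runs its reimplementation of the Albert--Paterson Goulden--Jackson cluster computation, with standard representatives of length $n=24$, on the submeanders that start at even positions and end at odd positions. It simply reports the program's numerical output \cite{BLCV26}, without the interval-arithmetic certification you sketch.

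One small repair: the opposite-parity claim needs the whole factor from the \texttt{U} to the \texttt{D} to have even length. That holds by a Jordan-curve argument, because every other strand (closed, or an arc running off to infinity) crosses that span an even number of times; it does not follow merely from the loop itself crossing the line an even number of times.
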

The full code and results are available at \cite{BLCV26}.

\subsection{Planar Feedback Vertex Set}

We apply the meandric systems to the \textsc{Planar Feedback Vertex Set} problem, which can be equivalently formulated as the problem of finding a subset of vertices $S$ with the largest total weight such that $S$ induces a forest in $G$.

We use a similar labeling as in the tree problems with $5$ options in $\{\texttt{0}, (\texttt{[},\texttt{[}), (\texttt{[},\texttt{]}), (\texttt{]},\texttt{[}), (\texttt{]},\texttt{]})\}$ for each vertex.
The label \texttt{0} means that the vertex is not in $S$, and the bracket pairs denote the intersection topology of the Eulerian outline with the noose.
We can then analyze the relevant parameters $c_1$ and $c_2$:
\begin{itemize}
    \item Since there are $5$ possible labels for each vertex, $c_1 = 5$.
    \item For the frontier, we can bruteforce the set $A$ to include in $S$.
    Each vertex in $F \setminus A$ is assigned the label \texttt{0}.
    On the set $A$, the number of options is upper bounded by $\Oh^*(\phi^{2|A|})$, since we are forbidding cycles in $S$.
    Similarly as before, the total number of options is
    \begin{equation}
        \mathcal F = \Oh^*\Bigl(\sum_{A \subseteq F} \phi^{2|A|}\Bigr) =  \Oh^*\left((1+\phi^2)^{|F|}\right).
    \end{equation}
    Therefore, $c_2 = 1+\phi^2$.
\end{itemize}

By applying \Cref{thm:general-dp} and \Cref{thm:general-fmm}, we obtain the following theorem:
\begin{theorem} \label{thm:fvs}
    There exist deterministic algorithms that solve the \textup{\textsc{Planar Feedback Vertex Set}} problem in $\Oh^*((5 \sqrt{1+\phi^2})^{\bw}) = \Oh(2^{9.119\sqrt{n}})$ and $\Oh^*(W\cdot (1+\phi^2)^{\bw}) = \Oh(W\cdot 2^{8.382\sqrt{n}})$ time.
\end{theorem}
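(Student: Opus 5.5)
The proof is a direct application of \Cref{thm:general-dp} and \Cref{thm:general-fmm}, with $c_1 = 5$ and $c_2 = 1+\phi^2$ from the analysis above. Before invoking these theorems, I would check three things.

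\textbf{Labeling scheme.} First, the labeling scheme must actually support the dynamic programming of \Cref{sec:dp}. For a forest $S$ (the complement of the feedback vertex set) and a noose $O_e$, the partial Eulerian outline of $S$ inside $\Delta_e$ induces a non-crossing partition of $\midset(e)\cap S$. This partition, together with the $\texttt{0}$ labels, is exactly what is needed to combine partial solutions. The table $D[e][s]$ stores the maximum weight of $S\cap V(G_e)$ such that $G_e[S]$ is a forest compatible with $s$. In the merge step, a frontier double labeling is accepted iff the glued outline contains no submeander starting at an even position and ending at an odd position. Such a submeander corresponds exactly to a cycle in $S$, by the argument of \Cref{sec:cds}. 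Submeanders of the other parity only enclose a complete tree component, which is allowed in a forest. Both the validity check and the induced boundary labeling $s_P$ are computable in $\poly(\ell)$ time. Via \Cref{app:problems}, the graph can be assumed bridgeless, so \Cref{thm:sc-decomposition} applies.

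\textbf{Enumeration cost.} Second, the frontier labelings must be enumerable in $\Oh^*(\mathcal F)$ time. I would reuse the backtracking argument given for acyclic systems. Extend a prefix by the four letters $\texttt{U},\texttt{D},\texttt{L},\texttt{R}$ and prune whenever a forbidden (even-to-odd) loop closes or brackets conflict. Every valid prefix extends to a valid system by appending $\texttt{U}$'s, so the search tree has $\Oh^*(\phi^{2|A|})$ nodes. Summing over subsets $A\subseteq F$ gives $\mathcal F=\Oh^*((1+\phi^2)^{|F|})$. The boundary labelings are enumerated trivially in $5^{|B|}$ time.

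\textbf{Plugging in.} Third, apply the two theorems and compute the constants. For \Cref{thm:general-dp}, the condition $c_1^2=25\ge 1+\phi^2$ holds, giving $\Oh^*((5\sqrt{1+\phi^2})^{\bw})$. For \Cref{thm:general-fmm}, the condition $c_1\le c_2$ holds. Moreover, $5^{(\omega-1)/2}\sqrt{c_2}\le c_2$ is equivalent to $5^{\omega-1}\le c_2$, and $5^{1.372}\approx 9.1<1+3.8012^2\approx 15.45$. So the maximum is $c_2$, giving $\Oh^*(W(1+\phi^2)^{\bw})$.

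With $\phi\le 3.801179$ and $\bw\le 2.122\sqrt n$, the exponents evaluate as follows:
\begin{itemize}
\item $\log_2(5\sqrt{15.449})\cdot 2.122 \approx 4.297\cdot 2.122\approx 9.119$;
\item $\log_2 15.449\cdot 2.122\approx 3.95\cdot 2.122\approx 8.382$.
\end{itemize}
These match the statement.

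\textbf{Main obstacle.} The step needing the most care is the first: justifying that the parity-based loop criterion exactly characterizes cycles. Concretely, a frontier loop glued from the $O_L$ and $O_R$ outlines must enclose a cycle of $S$ precisely when it runs from an even to an odd $\varepsilon$-point. I would prove this by tracking which side of each $\varepsilon$-point the outline region lies on: odd points are left $\varepsilon$-points and even points are right ones, so the orientation of the loop determines whether it bounds tree material from outside or a face from inside.
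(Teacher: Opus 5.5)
Your proposal is correct and follows the paper's argument: $c_1=5$ from the five labels, $c_2=1+\phi^2$ from summing $\phi^{2|A|}$ over $A\subseteq F$, and then \Cref{thm:general-dp} and \Cref{thm:general-fmm}. The extra points you check are left implicit in the paper and are sound: the hypotheses $c_1^2\ge c_2$ and $c_1\le c_2$, the dominance of $c_2$ in the maximum (since $5^{\omega-1}\approx 9.1<1+\phi^2$), the $\Oh^*(\phi^{2|A|})$ backtracking enumeration, and the orientation argument for the even-to-odd loop criterion (an odd-start loop can also be the outer boundary of an enclosed component containing a cycle, but that component then also produces an even-start face loop, so your acceptance rule is unaffected).
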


While these results do not improve the best-known deterministic complexity in the unbounded-weight setting $\Oh^*(2^{8.038\sqrt{n}})$ from \cite{PBvR17}, they give an illustration of how meandric systems can be used for the forest connectivity problems.

\section{Conclusions and Open Problems}
In this research, we revisited exact algorithms for planar connectivity problems based on sphere cut decomposition from the perspective of meandric systems. This allowed us to identify and exclude frontier configurations that cannot be extended to a feasible global solution. By combining this idea with the fast-matrix multiplication of Dorn~\cite{Dor06}, we obtained improved deterministic bounds in the bounded-weight setting for several classical planar problems, namely \textsc{Planar TSP}, \textsc{Planar Longest Cycle/Path}, \textsc{Planar Connected Dominating Set} and \textsc{Planar Steiner Tree}. Furthermore, we extended the meandric interpretation beyond cycle problems, using Eulerian outlines, and applied related counting arguments to tree and forest optimization problems, such as \textsc{Planar Steiner Tree}, \textsc{Planar Connected Dominating Set}, and \textsc{Planar Feedback Vertex Set}.

A central question left open by our work is to better understand the asymptotic growth of the meandric systems that arise in these dynamic programming algorithms. 
Our running times depend directly on upper bounds for the growth constants of acyclic unidirectional open meandric systems and the forest open meandric systems introduced here. Any improvement on these constants would immediately translate into improved algorithmic bounds.
%

\section{Acknowledgements}

The ideas and proofs presented in this paper were developed by the authors without the use of LLMs.
We have used ChatGPT-5.6 Sol to implement the Goulden-Jackson cluster method for the meandric system upper bounds, with a few conceptual corrections to the code.
We then checked the correctness of the code manually.
We have also used LLMs to check the correctness of the manuscript and to validate the corrections of \Cref{app:mistakes}.

J.V.~is supported by the 1.1.1.9 Research application No 1.1.1.9/LZP/2/25/206 of the Activity ``Post-doctoral Research'' ``Practical applications and limitations of quantum search''.

\printbibliography

\newpage

\begin{appendices}

\crefalias{section}{appendix}

\section{Reductions to Bridgeless Instances} \label{app:problems}
In this appendix, we show that, for each of the aforementioned problems, assuming that the input graph is bridgeless entails no loss of generality. 
In particular, we describe how to transform the input instance into an equivalent bridgeless instance while preserving both the existence of a solution and its optimality.

\paragraph{Planar Travelling Salesman Problem.} The case of the {\sc Planar Travelling Salesman} problem is the simplest to address. In fact, a bridge does not belong to any cycle. Hence, a graph containing a bridge is a trivially infeasible instance.

\paragraph{Planar Longest Cycle Problem.} Since bridges do not belong to any cycle, the graph $G'=(V,E \setminus B)$, where $B$ denotes the set of all bridges in $G$, contains exactly the same set of cycles as $G$.
Consequently, computing the longest cycle in $G'$ is equivalent to computing a longest cycle in $G$.

\paragraph{Planar Longest Path Problem.} 
Let $(G,w)$ be an instance of the {\sc Planar Longest Path} problem.
Observe that a bridge $e=\{u,v\}$ may belong to a longest path $P$ of $G$.
We perform the following reduction. For each bridge $e=\{u,v\}$ in $G$, we introduce a new vertex $z_e$ and add two new edges $\{u,z_e\}$ and $\{z_e,v\}$, each assigned weight $0$. 
After applying this construction, the resulting graph $G'$ is bridgeless and planar. Also, let $w': E(G') \rightarrow \mathbb{N}_0$ denote the resulting weight function for the edges of $G'$. 
We next show that $(G,w)$ admits a longest path $P$ of weight $k$ if and only if $(G',w')$ admits a longest path $P'$ of weight $k$.
Let $P$ be a longest path in $(G,w)$. Since $G$ is a subgraph of $G'$, $P$ is a path in $(G',w')$. Also, since all added edges have weight $0$, $P$ is a longest path in $(G',w')$.
For the opposite direction, let $P'$ be a longest path in $(G',w')$, we show that there exists a path in $(G',w')$ of the same weight as $P'$ that does not contain any vertex $z_e$ introduced in the reduction. Clearly, such a path is entirely contained in $G$.
We argue this in two steps.
First, suppose that $P'$ \emph{passes through} some added vertex $z_e$, i.e., the edges $\{u,z_e\}$ and $\{z_e,v\}$ belong to $P'$. We consider two cases: $e$ has weight $0$ or has weight larger than $0$. In the first case, since the sub-path $(u,z_e,v)$ contributes $0$ to the total weight, replacing it with the edge $\{u,v\}$, yields a path of equal length.
The second case is not possible since the same replacement would yield a path of length greater to the one of $P'$.
Second, suppose a longest path $P'$ in $G'$ uses at least a vertex $z_e$ as an endpoint. Then $P'$ begins with one of the two zero-weight edges incident to $z_e$, say $\{z_e,u\}$, and continues from $u$. Hence, the sub-path of $P'$ starting from $u$ is itself a longest path in $G'$. Thus, removing $z_e$ yields a path of the same length. 

\paragraph{Planar Steiner Tree Problem.} 
Let $(G,w)$ be an instance of the {\sc Planar Steiner Tree} problem.
In order to obtain an equivalent bridgeless instance, we perform the following reduction. For each bridge $e=\{u,v\}$ in $G$, we introduce a new vertex $z_e$ and add two edges $\{u,z_e\}$ and $\{z_e,v\}$, each assigned weight $w(e)$. 
After applying this construction, the resulting graph $G'$ is bridgeless and planar. Also, let $w': E(G') \rightarrow \mathbb{N}$ denote the resulting weight function for the edges of $G'$. 
We show next that $(G,w)$ admits Steiner tree $T$ of weight $k$ if and only if $(G',w')$ admits a Steiner tree $T'$ of weight $k$. 
The proof is analogous to the one for the {\sc Planar Longest Path} problem. Let $T$ be a minimum-weight Steiner tree in $(G,w)$. Note that $T$ is also a Steiner tree in $(G',w')$ since $G \subset G'$ and since no vertex in $z_e$ is a terminal vertex.
For the opposite direction, by the same arguments as for the {\sc Planar Longest Path} problem, one can easily see that there exists a Steiner tree in $(G',w')$ that uses no edge incident to an added vertex $z_e$. Clearly, such a Steiner tree belongs to $G$ as well and has the same weight as $T'$.

\paragraph{Planar Connected Dominating Set Problem.} 
Let $(G,w)$ be an instance of the {\sc Planar Connected Dominating Set} problem. Recall now that $w: V(G) \rightarrow \mathbb{N}$.
W.l.o.g., assume that $G$ does not consist of a single edge.
In order to obtain an equivalent bridgeless instance, we perform the following reduction. For each bridge $e=\{u,v\}$ in $G$, we introduce a new vertex $z_e$, with $w(z_e) = 1$, and add two edges $\{u,z_e\}$ and $\{z_e,v)\}$.
After applying this construction, the resulting graph $G'$ is bridgeless. Also, let $w': V(G') \rightarrow \mathbb{N}$ denote the resulting weight function for the vertices of $G'$. 
We argue next that $(G,w)$ admits a connected dominating set $D$ of weight $k$ if and only if $(G',w')$ admits a  connected dominating set $D'$ of weight $k$. 
Let $D$ be a minimum-weight connected dominating set in $(G,w)$. We argue that $D$ is also a connected dominating set in $(G',w')$. First, note that all the vertices in $V(G') \setminus \{z_e: e \text{ is a bridge in } G\}$ are dominated by $D$, since $G \subset G'$. Second, we show that, for each bridge $e$ of $G$, at least one of the two neighbors $u$ and $v$ of $z_e$ belongs to $D$; which implies that each vertex $z_e$ is also dominated by $D$. Suppose, for a contradiction, that neither $u$ nor $v$ belongs to $D$. Note that, removing the edge $e$ from $G$, results in two connected components $C_u$ and $C_v$ such that $u$ belongs to $C_u$ and $v$ belongs to $C_v$. Then $u$ must be dominated by some vertex in $V(C_u) \cap D$ and $v$ must be dominated by some vertex in $V(C_v) \cap D$. However, these vertices are not in the same connected component in the subgraph induced by $D$, a contradiction. 
Again, for the opposite direction, with similar arguments as for the {\sc Planar Longest Path} and the {\sc Planar Steiner Tree}  problem, which we elaborate hereafter, one can easily see that there exists a minimum-weight connected dominating set $D'$ in $(G',w')$ that uses none of the added vertices $z_e$. Let $D'$ be a connected dominating set of $(G',w')$ of minimum weight $k$, and suppose, for a contradiction, that $D'$ contains some vertex $z_e$, where $e = \{u,v\}$. Since $z_e$'s only neighbors in $G'$ are $u$ and $v$, the connectivity of $D'$ forces at least one of $u$ and $v$ to belong to $D'$. As $(u,v)$ is an edge of $G'$, the vertices dominated by $z_e$ (namely $u$ and $v$) are also dominated by whichever of $u$ and $v$ lies in $D'$, and the edge $\{u,v\}$ preserves the connectivity among the remaining vertices of $D'$ once $z_e$ is removed. Hence $D' \setminus \{z_e\}$ is still a connected dominating set of $(G',w')$ with smaller weight; a contradiction.

\paragraph{Planar Feedback Vertex Set Problem.} In the {\sc Planar Feedback Vertex Set} problem, since a bridge belongs to no cycle of $G$ by definition, to reduce to a bridgeless instance, one simply deletes all bridges from $G$. The resulting graph $G'$ has the same set of cycles as $G$, and, consequently, has the same minimum feedback vertex set.

\section{Complexity of the Planar Longest Cycle/Path} \label{app:longest}

In this appendix, we assume close familiarity with \cite{BCKN15,PBvR17} and calculate the complexity of the \textsc{Planar Longest Cycle/Path} in the latter work.

First, we note that the randomized algorithm for the \textsc{Planar Travelling Salesman} problem from \cite{PBvR17} can be adapted with minimal changes to the \textsc{Planar Longest Cycle} problem.

\begin{theorem}
    There exists a Monte Carlo algorithm that solves the \textup{\textsc{Planar Longest Cycle/Path}} in $\Oh^*((4^{\omega/2})^{\bw}) = \Oh(2^{5.034\sqrt{n}})$ time.
\end{theorem}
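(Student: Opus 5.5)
The plan is to follow the randomized \textsc{Planar Travelling Salesman} algorithm of \cite{PBvR17}, which combines Cut \& Count with branchwidth-based dynamic programming over the sphere cut decomposition and fast matrix multiplication. We then modify only the local constraints that force every vertex to be covered. Concretely, we will:
\begin{enumerate*}[label=(\roman*)]
    \item guess an edge $r$ of the cycle (a factor $\Oh(n)$), or for paths guess the two endpoints, or add a universal apex-free gadget as in \cite[Section 5]{DPBF10};
    \item apply the isolation lemma with random weights polynomial in $n$;
    \item count, modulo $2$, consistently cut pairs $(C,\text{cut})$ of weight exactly $k$ for each target weight $k$.
\end{enumerate*}
In Cut \& Count for TSP, a solution is a set of edges in which every vertex has degree exactly $2$. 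For \textsc{Longest Cycle}, we relax this to degree $0$ or $2$. The cut side is assigned only to vertices of degree $2$, and degree-$0$ vertices receive no side. Cuts counted with the relaxed degree constraint still cancel in pairs for disconnected solutions, since flipping the side of a component not containing $r$ is an involution. Therefore the parity argument of \cite{CNPPVRW22} goes through unchanged, provided we fix the side of $r$ as usual.

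Next we examine the state space on a noose. In \cite{PBvR17} each noose vertex gets a state from $\{\texttt{0},\texttt{1}_1,\texttt{1}_2,\texttt{2}\}$, i.e., degree and cut side for degree-$1$ vertices. Degree-$2$ vertices need no side information for the local consistency checks, because the side is only relevant when merging degree-$1$ halves. This gives $c_1=4$ on the boundary. With our relaxation the labels are identical: the label $\texttt{0}$ now simply means ``not used'' rather than ``used from outside''. The key observation is that TSP already allows the pair split $(\texttt{0},\texttt{2})$ and $(\texttt{2},\texttt{0})$ at frontier vertices, and longest cycle additionally allows $(\texttt{0},\texttt{0})$. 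Since in the Cut \& Count merge the frontier is summed out as a matrix product over frontier states, we must confirm that this extra pair does not enlarge the contracted dimension. I expect it does not: the per-side frontier state space stays $4$ on each side, and the join is a product over $4^{|F|}$ indices.

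The complexity then follows as in \cite{PBvR17}. The join at each node is a multiplication of matrices indexed by $4^{|L|}\times 4^{|F|}$ and $4^{|F|}\times 4^{|R|}$, with $W$ appearing through the weight dimension. This costs $\Oh^*(4^{\omega\bw/2})$ in the worst case $|L|=|R|=|F|=\bw/2$, which is the TSP calculation of \cite{PBvR17}. Finally, $\bw\le 2.122\sqrt n$ gives $\Oh(2^{5.034\sqrt n})$. For \textsc{Longest Path}, we attach the reduction of \cite[Section 5]{DPBF10}, or guess endpoints and allow them degree $1$. This adds only polynomial factors and $O(1)$ extra labels at two vertices.

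The main obstacle is the second step. We need to check that the \cite{PBvR17} state encoding for degree-$2$ and degree-$0$ vertices, which they may compress using Hamiltonicity (every vertex must eventually have degree $2$), does not rely on coverage. If it does, I would fall back to verifying directly that the five-way frontier pairs collapse under the transform they use into a $4$-dimensional contraction. This would keep the exponent at $\omega/2\cdot\log 4$.
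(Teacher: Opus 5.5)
Your proposal is correct and follows essentially the same route as the paper: adapt the randomized Cut \& Count TSP algorithm of \cite{PBvR17} (as analyzed in \cite{Pin16}) by also admitting the frontier pair $(\texttt{0},\texttt{0})$, note that each vertex still carries only $4$ labels so the merge cost stays $\Oh^*(4^{\omega\bw/2})$, and handle \textsc{Longest Path} by a cheap reduction (the paper uses \cite[Appendix A.4]{CNPPVRW22}, you guess the endpoints; both work). The step you flag as the main obstacle works out as the paper asserts: left label $\texttt{0}$ is compatible with right labels $\{\texttt{0},\texttt{2}\}$, so summing the right table over compatible labels, vertex by vertex, before the product keeps the contracted dimension at $4^{|F|}$.
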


\begin{proof}[Proof sketch.]
This theorem follows from the analysis of Theorem 4.11 in the Master's thesis \cite{Pin16}, which describes the TSP algorithm.
When calculating the table $A_x(i_x,w_x,s_x)$, we also need to consider compatible labeling $s_y(v) = \texttt{0}$ and $s_z(v) = \texttt{0}$ for a vertex $v \in F$, which corresponds to $v$ not being in the cycle.
However, this doesn't change the asymptotic complexity, since that is governed by the number of labels used for each vertex, which is still $4$.

The \textsc{Planar Longest Path} problem can be reduced to the \textsc{Planar Longest Cycle}, only with a constant increase in the branchwidth of the graph, see \cite[Appendix A.4]{CNPPVRW22}.
\end{proof}

However, we observe that for the deterministic algorithm, there is an increase in the complexity:

\begin{theorem}
    There exists a deterministic algorithm that solves the \textup{\textsc{Planar Longest Cycle/Path}} in $\Oh^*(((2+2^{\omega/2})\cdot \sqrt{5})^{\bw}) = \Oh(2^{6.912\sqrt{n}})$ time.
\end{theorem}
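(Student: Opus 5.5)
The plan is to take the deterministic rank-based algorithm of \cite{PBvR17} for \textsc{Planar Travelling Salesman} unchanged, except for one extra frontier state, and then redo its complexity count. First I would calibrate against their TSP bound. With $\omega < 2.372$ we have $2^{\omega/2} < 2.275$, and
\[
(2+2^{\omega/2})\cdot\sqrt{4} \approx 8.55 = 2^{3.096},
\]
which after multiplying by $\bw \le 2.122\sqrt n$ reproduces their $\Oh(2^{6.570\sqrt n})$. So I expect their join step over a parent noose to cost, up to polynomial factors,
\[
(2+2^{\omega/2})^{|L|+|R|}\cdot \mathcal F_0^{|F|},
\]
with per-frontier-vertex factor $\mathcal F_0 = 4$. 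The first term comes from the rank-based reduction (representative sets of size $2^{\Oh(\cdot)}$ per degree pattern on the boundary) combined with fast matrix multiplication in the join. The second term comes from the frontier enumeration.

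Second, I would modify the states for \textsc{Planar Longest Cycle}. Partial solutions become collections of vertex-disjoint paths, and a vertex may be absent from the cycle, as we did for \Cref{thm:plc-meanders}. On the boundary, the degree labels $\{\texttt{0},\texttt{1},\texttt{2}\}$ already allow degree-$0$ vertices, so the boundary alphabet and the rank-based matrix (whose rank governs the $2+2^{\omega/2}$ factor) are unchanged. On the frontier, a vertex must either be covered consistently from the two sides or be unused. The unused option is the pair of degrees $(\texttt{0},\texttt{0})$, which is not permitted in TSP, where every vertex must receive total degree $2$. This raises $\mathcal F_0$ from $4$ to $5$. The randomized proof sketch above notes the same phenomenon, but there the count is per vertex label and so does not change. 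In the deterministic algorithm the frontier pairs are enumerated explicitly, so the extra pair does count.

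Third, I would optimize over the noose sizes exactly as in the proof of \Cref{thm:general-dp}, using $c_1 = 2+2^{\omega/2}$ and $c_2 = 5$. The condition $c_1^2 \ge c_2$ holds, so the worst case is $|L|+|R| = \ell$ and $|F| = \ell/2$. This gives $\Oh^*\bigl(((2+2^{\omega/2})\sqrt5)^{\bw}\bigr)$. Numerically, $4.275\cdot\sqrt5 \approx 9.56 = 2^{3.257}$, hence $2^{6.912\sqrt n}$. For \textsc{Longest Path} I would reduce to \textsc{Longest Cycle} with constant branchwidth increase, via \cite[Appendix A.4]{CNPPVRW22}.

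The main obstacle is the second step: checking precisely where in the \cite{PBvR17} join the frontier factor $4$ arises. I need to confirm that allowing $(\texttt{0},\texttt{0})$ on the frontier does not also enlarge the representative sets or break the rank bound on the boundary. I would check this by writing the join as a sum over frontier assignments of products of rank-reduced tables, and verifying that each new assignment is just one more additive term of the same form.
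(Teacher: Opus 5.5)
Your route is the paper's: you adapt the deterministic \textsc{Planar TSP} algorithm of \cite{PBvR17} (as detailed in \cite{Pin16}) and keep the boundary alphabet, hence the factor $(2+2^{\omega/2})^{|L|+|R|}$. You then admit the extra frontier pair $(\texttt{0},\texttt{0})$, raise the per-vertex frontier factor from $4$ to $5$, and take the worst case $|L|+|R|=\bw$, $|F|=\bw/2$. The final bound and the numerics are correct.

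However, the step you flag as the main obstacle is left open, and the mechanism you propose for it is wrong. The TSP frontier factor $4$ is not the number of explicitly enumerated frontier pairs. With degree labels, the compatible pairs are only $(\texttt{1},\texttt{1})$, $(\texttt{2},\texttt{0})$ and $(\texttt{0},\texttt{2})$, so ``one more enumerated pair'' would predict $3\to 4$, not $4\to 5$. The $4$ is the weighted sum
\[
\sum_{s_y(f)+s_z(f)=2} 2^{[s_y(f)=1]/2}\,2^{[s_z(f)=1]/2} = 2+1+1 .
\]
Here $(\texttt{1},\texttt{1})$ carries weight $2^{1/2}\cdot 2^{1/2}=2$, reflecting the size of the rank-based representative sets, which grow by a factor $2^{1/2}$ per degree-$1$ vertex on each side.

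Calibrating against the TSP exponent cannot tell you the weight of the new term. What closes the argument is that $(\texttt{0},\texttt{0})$ has no degree-$1$ endpoint. Under the relaxed compatibility condition $s_y(f)+s_z(f)\in\{0,2\}$ it therefore enters with weight exactly $1$, and the sum becomes $2+1+1+1=5$. This short computation is precisely the content of the paper's proof sketch; once you insert it, your argument is complete.
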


\begin{proof}[Proof sketch.]
We can adapt the \textsc{Planar Travelling Salesman} algorithm from \cite{PBvR17} with small changes.
The corresponding algorithm is described in \cite{Pin16}.

Again, the most substantial change is allowing pairs of labels $(\texttt{0},\texttt{0})$ on $F$.
For a vertex $f \in F$ and children bags $y$ and $z$ of $x$, now we consider labels $s_y(f)$ and $s_z(f)$ compatible if $s_y(f) + s_z(f) \in \{0,2\}$ (compared to $s_y(f) + s_z(f) = 2$ in the TSP case).

This has an effect on page 43 in \cite{Pin16} in the expression $\sum_{s_y(f) + s_z(f) = 2} 2^{[s_y(f)=1]/2} 2^{[s_y(f)=1]/2}$, where $f \in F$ is a vertex in the forget set, which now becomes
\[\sum_{s_y(f) + s_z(f) \in \{0,2\}} 2^{[s_y(f)=1]/2} 2^{[s_z(f)=1]/2}.\]
Since now the compatible pairs of labels are $(\texttt{1},\texttt{1}), (\texttt{2},\texttt{0}), (\texttt{0},\texttt{2}), (\texttt{0},\texttt{0})$, this sum is equal to $5$ (compared to $4$ in the TSP case).

This changes the complexity to
\[\Oh^*((2+2^{\omega/2})^{|L|+|R|} 5^{|F|}) = \Oh^*(((2+2^{\omega/2})\cdot \sqrt{5})^{\bw}),\]
as $|L|+|R|=\bw$ and $|F|=\bw/2$ in the worst case.

Finally, by the arguments of \cite[Section 5]{DPBF10}, we can adapt the DP on the sphere cut decomposition to the \textsc{Planar Longest Path} without increasing the asymptotic complexity.
\end{proof}

\section[Proof of Theorem~\ref*{thm:general-fmm}]
        {Proof of \Cref{thm:general-fmm}} \label{app:fmm}

In this appendix, we describe and analyze the fast matrix multiplication method by~\cite{Dor06} for completeness.

First, we provide the background of fast matrix multiplication.
Two $n \times n$ matrices can be multiplied in $\Oh(n^\omega)$ time, where $\omega$ is the fast matrix multiplication constant.
The best current upper bound on the constant is $\omega < 2.371339$ \cite{ADWXXZ25}.
For $k \geq n$, to multiply two rectangular matrices with sizes $n \times k$ and $k \times n$, we first partition them into $\Oh(k/n)$ blocks of size $n \times n$, padding if necessary.
Then we multiply the corresponding blocks using fast matrix multiplication and sum up all the results.
This approach gives complexity $\Oh((k/n)\cdot n^{\omega}) = \Oh(kn^{\omega-1})$.

Suppose now that we need to multiply rectangular matrices with dimensions $n \times k$ and $k \times m$, where $k \geq n$.
We can assume that $m \geq n$, as the other case follows in the same way.
We then can partition the second matrix into $\Oh(m/n)$ matrices of size $k \times n$
and apply the above algorithm for each of them, afterwards concatenating the results.
This gives complexity $\Oh((m/n)\cdot k n^{\omega-1}) = \Oh(k m n^{\omega-2})$.
If $k < n$ and $m \geq n$, the same approach results in complexity $\Oh(m n^{\omega-1})$.

A related notion is the matrix \emph{distance product}, also known as the \emph{min-plus} matrix multiplication.
For two matrices $A$ and $B$ with sizes $n \times k$ and $k \times m$, it is defined as an $n \times m$ matrix $A\star B$, where
\begin{equation}
    (A\star B)_{ij} = \min_{s = 1}^k \{ A_{is} + B_{sj} \}.
\end{equation}
If the matrix entry values do not exceed $W$, then fast matrix multiplication can be applied to calculate the distance product of two $n \times n$ matrices in $\Oh(W\cdot n^{\omega})$ time~\cite{Zwi02}.
For our case of rectangular matrix multiplication, it implies $\Oh(W\cdot m k n^{\omega-2})$ complexity for $n \leq k$ and $\Oh(W\cdot m n^{\omega-1})$ complexity for $n > k$.

\begin{proof}
    We illustrate the algorithm on the \textsc{Planar TSP}, but the general principle is the same for other problems.
    We can speed up the merging step of the DP tables of $e_L$ and $e_R$ in the dynamic programming algorithm.
    Let $\mathcal B_L$ and $\mathcal B_R$ be the total number of possible labelings to the vertex set $L \cup I$ and $R \cup I$, respectively (essentially, the left and the right side of the boundary).
    For TSP, we have $\mathcal B_L = \Oh(4^{|L|})$ and $\mathcal B_R = \Oh(4^{|R|})$, as $|I| \leq 2$.
    
    For $e_L$, we construct a $\mathcal B_L \times \mathcal F$ matrix $A$ as follows.
    Let $i$ be a labeling for $L \cup I$ and let $s = (s_L,s_R)$ be a valid pair of labelings of the frontier, meaning it can potentially lead to a feasible solution, where $s_L$ and $s_R$ are the labelings on the left and the right sides of the frontier, respectively.
    We also require $i \cup s_L$ to be a valid labeling of $O_L$.
    Checking the validity in both cases requires time polynomial in $\ell$ as discussed in \Cref{sec:dp}.
    The value of $A_{is}$ then is equal to the minimum total weight of a partial Hamiltonian cycle in $G_L$ that agrees with the labeling $i \cup s_L$, which is equal to $D[e_L][i \cup s_L]$ (which is $+\infty$ if $s$ is not valid).
    Constructing $A$ thus takes $\Oh^*(\mathcal B_L \mathcal F)$ time.

    Similarly, for $e_R$, we construct an $\mathcal F \times \mathcal B_R$ matrix $B$.
    Then for the distance product $C = A \star B$, which is a matrix of size $\mathcal B_L \times \mathcal B_R$, the entry $C_{ij}$ denotes the weight of the shortest partial Hamiltonian cycle that agrees with the labeling $i \cup j$ on the boundary, or $+\infty$ if it does not exist.
    This is because for each labeling $s$ on the frontier, we constructed $A$ and $B$ by considering only valid labeling pairings $i \cup s$ for $e_L$ and $s \cup j$ for $e_R$.
    From the matrix $C$ then we can obtain $D[e_P]$ in $\Oh^*(\mathcal B_L \mathcal B_R) = \Oh^*(\mathcal B)$ time.

    Now we analyze the complexity; without loss of generality, assume that $\mathcal B_L \leq \mathcal B_R$.
    \begin{itemize}
    
        \item If $\mathcal B_L \leq \mathcal F$, the complexity of the matrix multiplication is
        \begin{equation}
            \Oh^*(W\cdot \mathcal B_L^{\omega-2} \mathcal B_R \mathcal F) = \Oh^*(W\cdot c_1^{|L|(\omega-2)} c_1^{|R|} c_2^{|F|}).
        \end{equation}
        We can again use $|L|+|F|\leq \ell$ and $|R|+|F|\leq \ell$.
        If $|F| > \ell/2$, then reducing $|F|$ by $1$ can increment both $|L|$ and $|R|$ by $1$.
        Since $c_1 = 4$ and $c_2 = 6$, we have $c_1^{\omega-2} c_1 = c_1^{\omega-1} \approx 6.7 > c_2$ when $\omega = 2.372$.
        Therefore, in the worst case, $|L|+|R| = \ell$ and $|F| \leq \ell/2$.

        Since $\mathcal B_L \leq \mathcal B_R$, we have $|L| \leq |R|$.
        Then in the worst case $|L| = |F| = \ell - |R|$.
        The total complexity is then 
        \begin{equation}
            \Oh^*(W \cdot c_1^{(\ell-|R|)(\omega-2)} c_1^{|R|} c_2^{\ell-|R|}) = \Oh^*((c_1^{\omega-2}c_2)^{\ell-|R|} c_1^{|R|}).
        \end{equation}
        As $c_1^{\omega-2} c_2 > c_1$, this is the largest when $|R|$ is minimal, so in the worst case $|L| = |R| = |F| = \ell/2$, and the overall complexity is
        \begin{equation}
            \Oh^*(W\cdot c_1^{(\ell/2)\cdot (\omega-1)} c_2^{\ell/2}) = \Oh^*(W\cdot (4^\frac{\omega-1}{2} \sqrt{6})^{\bw}).
        \end{equation}

        If instead it happened that $c_1^{\omega-1} < c_2$, then the worst case would occur when $|F| = \ell$ and this part would take time $\Oh^*(c_2^{\ell})$.
    
        \item In the other case, we have $\mathcal B_L > \mathcal F$.
        The complexity of the matrix multiplication then is
        \begin{equation}
            \Oh^*(W\cdot \mathcal B_L^{\omega-1} \mathcal B_R) = \Oh^*(W\cdot c_1^{|L|(\omega-1)} c_1^{|R|}).
        \end{equation}
        Once again, $|L|$ has to be maximized for the worst case, so then $|L| = |R| = \ell/2$.
        This results in complexity
        \begin{equation}
            \Oh^*(W\cdot c_1^{(\ell/2)\cdot (\omega-1)} c_1^{\ell/2}).
        \end{equation}
        This is dominated by the complexity of the previous case as $c_2 \geq c_1$.
        
    \end{itemize}

    In the case of the \textsc{Planar TSP}, the total complexity becomes $\Oh^*(W\cdot \max(4^\frac{\omega-1}{2} \sqrt{6}, 6)^{\bw}) = \Oh^*(W\cdot (4^\frac{\omega-1}{2} \sqrt{6})^{\bw})$.
\end{proof}

\section{Corrections to Previous Works} \label{app:mistakes}

In this appendix, we discuss several inaccuracies and inconsistencies that we identified in the relevant literature. Clarifying these issues is important for a fair comparison between our algorithms and previous results, which in some cases have been incorrectly stated.
We use the $\leadsto$ arrow to denote a correction.

Recapping, \cite{DPBF10} introduces dynamic programming on the sphere cut decomposition; \cite{Dor06} speeds these DP algorithms using fast matrix multiplication; \cite{Dor07} incorporates the results from the first two papers.
The work \cite{PBvR17} gives state-of-the-art deterministic and randomized complexities for connectivity problems in the unbounded-weight setting. 

\paragraph{Planar Travelling Salesman Problem.}

In \cite{DPBF10,Dor06,Dor07} the problem termed the \textsc{Travelling Salesman} problem is what is usually called the \textsc{Travelling Salesman Walk} problem \cite{LN08}.
In the first problem, as it is usually defined, we have to find the shortest cycle that visits each city \emph{exactly} once (a Hamiltonian cycle).
In the second problem, we need to visit each city \emph{at least} once; this corresponds to a closed \emph{walk} in a graph.

Most probably this has led to a confusion in Table 4 in \cite{PBvR17}, where the complexities of Dorn, Penninkx, Bodlaender and Fomin for \textsc{Planar Hamilton Cycle/TSP} are actually the complexities of the \textsc{Planar Travelling Salesman Walk} problem of \cite{DPBF10,Dor06,Dor07}, and should be
\begin{equation}
    \Oh(2^{9.86\sqrt{n}})
    \quad\leadsto\quad
    \Oh(2^{6.903\sqrt{n}})
\end{equation}
and
\begin{equation}
    \Oh(2^{8.15\sqrt{n}})
    \quad\leadsto\quad
    \Oh(2^{5.579\sqrt{n}}).
\end{equation}

Table 4 also states complexity $\Oh(2^{5.63\sqrt{n}})$ for the \textsc{Planar Travelling Salesman} problem, which seems to be a typo (the correct complexity is $\Oh(2^{6.570\sqrt n})$, and the algorithm is explained in \cite{Pin16}).

\paragraph{Planar Longest Cycle/Path.}

For \textsc{Planar Longest Cycle/Path}, \cite{DPBF10} gives the dynamic programming algorithm with complexity $\Oh(2^{7.223\sqrt{n}})$ (Theorem 5), which is consistent with $\Oh^*((4\sqrt{7})^{\bw})$, see \Cref{thm:plc-meanders}.

Then \cite{Dor07} uses a bound on acyclic meandric systems $z \leq 3.68133$ to obtain an $\Oh(2^{7.152\sqrt{n}})$ algorithm, which is consistent with $\Oh^*((4\sqrt{3+z})^{\bw})$.
Then they apply fast matrix multiplication to obtain complexity $\Oh(2^{5.828\sqrt{n}})$, which is consistent with $\Oh^*((4^{\frac{\omega-1}{2}}\sqrt{3+z})^{\bw})$.
The analysis of the last complexity includes a mistake.
The first equation in Lemma 3.2.6.~should say: 
\begin{equation}
    Z := \left(\frac{z}{2}+1\right)^{\frac{\ell}{2}}\cdot 3^{\frac{\ell}{2}}
    \quad\leadsto\quad
    Z := \left(\frac{z}{3}+1\right)^{\frac{\ell}{2}}\cdot 3^{\frac{\ell}{2}}.
\end{equation}
The corrected expression clearly is equal to $(3+z)^{\frac{\ell}{2}}$, which is consistent with \Cref{thm:plc-meanders}.

Finally, \cite{Dor06} states incorrect complexity for \textsc{Planar Longest Cycle/Path}.
The complexity of the DP algorithm of \cite{DPBF10} is stated to be $\Oh(2^{6.903\sqrt n})$, which is incorrect (possibly being taken from the \textsc{Planar Hamiltonian Cycle} problem),
\begin{equation}
    \Oh(2^{6.903\sqrt{n}})
    \quad\leadsto\quad
    \Oh(2^{7.223\sqrt{n}}).
\end{equation}
The complexity of the DP algorithm with fast matrix multiplication seems to suffer a typo,
\begin{equation}
    \Oh(2^{5.58\sqrt{n}})
    \quad\leadsto\quad
    \Oh(2^{5.828\sqrt{n}}).
\end{equation}

\paragraph{Planar Connected Dominating Set.}

Recall the labeling scheme for \textsc{Planar Connected Dominating Set} in \Cref{thm:cds}.
It is not written explicitly in \cite{DPBF10}, but the $\sqrt{2+z^2}$ part in the paper is bounded by $\sqrt{2+15}$.
Essentially, the number $15$ comes from forbidding the configuration $((\texttt{[},\texttt{]}),(\texttt{[},\texttt{]}))$ which leads to a disconnected cycle, out of $16$ possibilities for each vertex on the frontier.
Hence, $c_2 = 17$ in their analysis.
This correctly gives the complexity $\Oh^*((6\sqrt{17})^{\bw}) = \Oh(2^{9.823\sqrt{n}})$.

In \cite{Dor06}, it is stated that fast matrix multiplication improves the complexity to $\Oh(W\cdot 2^{8.11\sqrt{n}})$.
While $\Oh^*((6^{\frac{\omega-1}{2}}\sqrt{17})^{\bw})$ indeed is $\Oh(2^{8.111\sqrt{n}})$, their main Theorem 4 states that the complexity should be $\Oh^*(W\cdot \max( c_1^{(\omega-1)\cdot \frac{\bw}{2}} c_2^{\frac{\bw}{2}}, c_2^{\bw} ))$, consistent with \Cref{thm:general-fmm}.
We have $\Oh^*(c_2^{\bw}) = \Oh(2^{8.674\sqrt{n}})$, so the total complexity should be $\Oh(W\cdot 2^{8.674\sqrt{n}})$ instead,
\begin{equation}
    \Oh(2^{8.11\sqrt{n}})
    \quad\leadsto\quad
    \Oh(2^{8.674\sqrt{n}}).
\end{equation}
The work \cite{PBvR17} also references the value $8.11$.

\paragraph{Planar Steiner Tree.}

For the \textsc{Planar Steiner Tree} problem, it seems that the $4$-symbol labeling in \cite{DPBF10} is incomplete.
The authors use an alphabet of four symbols $\{\texttt{0}, \texttt{[}, \mathtt{\square}, \texttt{]}\}$.
The symbol \texttt{0} encodes either an isolated vertex or a component with the only intersection with the noose being this vertex.
The other symbols encode how a component touches multiple vertices on the noose; the symbol \texttt{[} marks the beginning, $\mathtt{\square}$ is an intermediate vertex and \texttt{]} is the final vertex of a component.

However, using the same symbol \texttt{0} for vertices isolated from the Steiner tree $S$ and being in a component that touches the noose only at this vertex loses some necessary information.
The problem arises, for example, when there are two vertices on the frontier vertex set, where each is assigned the label \texttt{0} on both sides:
\begin{itemize}
    \item It may be the case that both of these two vertices are isolated in the respective partial solution on both sides.
    Note that this situation can still lead to a feasible solution, since both of these vertices are not in the Steiner tree.
    \item It may also be the case that for each of these vertices, there is a connected component on some side that contains a vertex from $X$.
    Observe that this situation cannot lead to a feasible solution, since we are getting two disconnected components that both contain terminal vertices.
\end{itemize}
However, these two cases are identified by the same labels (a pair $(\texttt{0},\texttt{0})$ for each vertex), so it seems to that the described labeling is not sufficient to distinguish between the two. 

In our encoding in \Cref{thm:cds}, we use different labels for these cases (\texttt{0} and $(\texttt{[},\texttt{]})$, respectively).
The symbols $\texttt{[}, \mathtt{\square}, \texttt{]}$ in their encoding clearly match to the pairs $(\texttt{[},\texttt{[})$, $(\texttt{]},\texttt{[})$ and $(\texttt{]},\texttt{]})$, respectively, in our encoding.
Their claimed complexity $\Oh(2^{8.488\sqrt{n}})$ corresponds to $\Oh^*((4\sqrt{16})^{\bw})$ that comes from using $4$ labels.
By using $5$ labels instead of $4$, the complexity becomes $\Oh^*((5\sqrt{1+16})^{\bw}) = \Oh(2^{9.264\sqrt{n}})$,
\begin{equation}
    \Oh(2^{8.488\sqrt{n}})
    \quad\leadsto\quad
    \Oh(2^{9.264\sqrt{n}}).
\end{equation}

The same discrepancy has an effect in \cite{Dor06,Dor07}.
Using fast matrix multiplication, they claim complexity $\Oh(W\cdot 2^{7.16\sqrt{n}})$.
This corresponds to using $4$ labels, $\Oh^*((4^{\frac{\omega-1}{2}} \sqrt{16})^{\bw})$.
Using $5$ labels, the complexity becomes $\Oh^*((5^{\frac{\omega-1}{2}} \sqrt{1+16})^{\bw}) = \Oh(W\cdot 2^{7.717\sqrt{n}})$.
Moreover, similarly to \textsc{Planar Connected Dominating Set}, they omit taking the maximum in Theorem 4 in \cite{Dor06}.
Since here $c_2 = 17$, the complexity is dominated by $\Oh^*(W\cdot c_2^{\bw}) = \Oh(W\cdot 2^{8.674\sqrt{n}})$, so
\begin{equation}
    \Oh(2^{7.16\sqrt{n}})
    \quad\leadsto\quad
    \Oh(2^{8.674\sqrt{n}}).
\end{equation}

The work \cite{PBvR17} also references the faulty values, which means that their deterministic $\Oh(2^{8.039\sqrt{n}})$ algorithm actually beats the fast matrix multiplication algorithm of \cite{Dor06}.

\paragraph{Planar Feedback Vertex Set.}

Firstly, we were unable to reproduce the complexity $\Oh(W\cdot 2^{7.56\sqrt{n}})$ from \cite{Dor06,Dor07}.
The dynamic programming algorithm from \cite{DPBF10} without any meander speedup results in complexity $\Oh((5\sqrt{1+4^2})^{\bw}) = \Oh(2^{9.264\sqrt{n}})$; this corresponds to the same 5 labels we use \cref{thm:fvs}.
Similarly to other fast matrix multiplication results in \cite{Dor07}, we expect the complexity $\Oh(W\cdot (5^{\frac{\omega-1}{2}}\sqrt{1+4^2})^{\bw}) = \Oh(W\cdot 2^{7.717\sqrt{n}})$, which is larger than the claimed complexity.
Unfortunately, references \cite{Dor06,Dor07} do not explain how to obtain this complexity bound.

Secondly, it seems that the complexity for this problem was affected by the same mistake as for \textsc{the Planar Connected Dominating Set} and \textsc{Planar Steiner Tree}.
Since we have $c_2 = 17$, the complexity of the algorithm is dominated by $\Oh(W\cdot c_2^{\bw}) = \Oh(W\cdot 2^{8.674\sqrt{n}})$.
\begin{equation}
    \Oh(2^{7.56\sqrt{n}})
    \quad\leadsto\quad
    \Oh(2^{8.674\sqrt{n}}).
\end{equation}

\end{appendices}

\end{document}

%% file: fig-nooses.tex
\def\Barrow{1.9}
\coordinate (anc) at ({425:\Barrow});
\draw[ForestGreen,-{Latex[length=2mm]}, line width=0.9pt]
  (anc) arc[start angle=425, end angle=80, radius=\Barrow];
  
\small
\def\n{10}
\def\r{2}

\foreach \i in {1,...,\n} {
\ifnum\i=3\else
    \node[circle, fill=black, inner sep=2pt] (v\i)
      at ({360/\n * (\i-1)}:\r) {};
\fi
}

\coordinate (v3) at ({360/\n * (3-1)}:\r);

\foreach \i in {1,...,\n} {
\pgfmathtruncatemacro{\j}{mod(\i,\n)+1}
\draw[bend left=-14] (v\i) to (v\j);
}

\path (v3) -- (v9)
node[pos=0.25, circle, fill=black, inner sep=2pt] (c1) {}
node[pos=0.50, circle, fill=black, inner sep=2pt] (c2) {}
node[pos=0.75, circle, fill=black, inner sep=2pt] (c3) {};

\node[xshift=-.6cm] at (v6) {$L$};
\node[xshift=+.6cm] at (v1) {$R$};
\node[yshift=+.5cm,xshift=+.2cm] at (c1) {$F$};
\node[yshift=-.2cm,xshift=+.5cm] at (v9) {$I_2$};
\node[yshift=+.2cm,xshift=+.5cm] at (v3) {$I_1$};

\large
\node[ForestGreen] at (0,2.3) {$O_P$};
\node[RoyalBlue] at (-.5,0) {$O_L$};
\node[RoyalBlue] at (1.3,0) {$O_R$};
\node[white] at (0,-2.3) {|};

\def\Rarrow{1.6}
\coordinate (mu1) at ({80:\Rarrow});
\coordinate (ml1) at ({280:\Rarrow});
\draw[RoyalBlue,-{Latex[length=2mm]}, line width=0.9pt]
  ($(mu1)-(0,0.2)$) -- (ml1)
  arc[start angle=280, end angle=80, radius=\Rarrow];
\coordinate (mu2) at ({54:\Rarrow});
\coordinate (ml2) at ({-54:\Rarrow});
\draw[RoyalBlue,-{Latex[length=2mm]}, line width=0.9pt]
  ($(mu2)-(0,0.2)$) -- (ml2)
  arc[start angle=-54, end angle=54, radius=\Rarrow];

\draw (v3) -- (c1) -- (c2) -- (c3) -- (v9);

\def\Rout{2.2}
\def\Rin{1.8}
\pgfmathsetmacro{\Rmid}{(\Rout+\Rin)/2}   
\pgfmathsetmacro{\Rcap}{(\Rout-\Rin)/2}   

\draw[dashed, line width=0.7pt]
    ({107:\Rout}) arc[start angle=107, end angle=253, radius=\Rout]
    ([shift={(253:\Rmid)}]253:\Rcap) arc[start angle=253, delta angle=180, radius=\Rcap]
    ({253:\Rin}) arc[start angle=253, end angle=107, radius=\Rin]
    ([shift={(107:\Rmid)}]-73:\Rcap) arc[start angle=-73, delta angle=180, radius=\Rcap];

\draw[dashed, line width=0.7pt]
    ({37:\Rout}) arc[start angle=37, end angle=-37, radius=\Rout]
    ([shift={(-37:\Rmid)}]-37:\Rcap) arc[start angle=-37, delta angle=-180, radius=\Rcap]
    ({-37:\Rin}) arc[start angle=-37, end angle=37, radius=\Rin]
    ([shift={(37:\Rmid)}]217:\Rcap) arc[start angle=217, delta angle=-180, radius=\Rcap];

\def\Rmidcap{0.2}
\draw[dashed, line width=0.7pt]
    ($(c1)+(-\Rmidcap,0)$) --
    ($(c3)+(-\Rmidcap,0)$)
    arc[start angle=-180, end angle=0, radius=\Rmidcap]
    -- ($(c1)+(\Rmidcap,0)$)
    arc[start angle=0, end angle=180, radius=\Rmidcap];
    
\draw[dashed, line width=0.7pt] (v9) circle[radius=0.2];
\draw[dashed, line width=0.7pt] (v3) circle[radius=0.2];

%% file: fig-labeling.tex
\def\n{10}
\def\r{2}

\foreach \i in {1,...,\n} {
\ifnum\i=3\else
    \node[circle, fill=black, inner sep=2pt] (v\i)
      at ({360/\n * (\i-1)}:\r) {};
\fi
}

\node[left, yshift=+0.2cm] at (v1) {\color{RoyalBlue} $\texttt{[}$};
\node[xshift=-0.3cm] at (v2) {\color{RoyalBlue} $\texttt{]}$};
\node[xshift=-0.3cm] at (v10) {\color{RoyalBlue} $\texttt{2}$};
\node[left, yshift=+0.4cm] at (v8) {\color{RoyalBlue} $\texttt{2}$};
\node[right, yshift=+0.2cm] at (v7) {\color{RoyalBlue} $\texttt{0}$};
\node[right, yshift=+0.2cm] at (v6) {\color{RoyalBlue} $\texttt{]}$};
\node[xshift=+0.4cm] at (v5) {\color{RoyalBlue} $\texttt{]}$};
\node[yshift=-0.4cm] at (v4) {\color{RoyalBlue} $\texttt{]}$};

\node[left, yshift=+0.3cm] at (v9) {\color{RoyalBlue} $\texttt{[}$};
\node[right, yshift=+0.3cm] at (v9) {\color{RoyalBlue} $\texttt{0}$};

\coordinate (v3) at ({360/\n * (3-1)}:\r);

\foreach \i in {1,...,\n} {
\pgfmathtruncatemacro{\j}{mod(\i,\n)+1}
\draw[bend left=-14] (v\i) to (v\j);
}

\path (v3) -- (v9)
node[pos=0.25, circle, fill=black, inner sep=2pt] (c1) {}
node[pos=0.50, circle, fill=black, inner sep=2pt] (c2) {}
node[pos=0.75, circle, fill=black, inner sep=2pt] (c3) {};

\node[left] at (c1) {\color{RoyalBlue} $\texttt{2}$};
\node[left] at (c2) {\color{RoyalBlue} $\texttt{[}$};
\node[left] at (c3) {\color{RoyalBlue} $\texttt{[}$};
\node[right] at (c1) {\color{RoyalBlue} $\texttt{0}$};
\node[right] at (c2) {\color{RoyalBlue} $\texttt{[}$};
\node[right] at (c3) {\color{RoyalBlue} $\texttt{]}$};

\draw (v3) -- (c1) -- (c2) -- (c3) -- (v9);

\draw[bend left=-20, line width=1pt] (v2) to (v10);
\draw[bend left=-23, line width=1pt] (v1) to (v10);
\draw[bend left=+35, line width=1pt] (c2) to (c3);
\draw[bend left=-25, line width=1pt] (c1) to (v4);
\draw[bend left=-35, line width=1pt] (c1) to (c2);
\draw[bend left=+50, line width=1pt] (v6) to (v8);
\draw[bend left=+10, line width=1pt] (v5) to (c3);
\draw[bend left=-20, line width=1pt] (v9) to (v8);

\draw[white] (0,-2.2)--(1,-2.2);

\node[right, yshift=0.2cm] at (v2) {\color{ForestGreen} $\texttt{[}$};
\node[xshift=0.3cm] at (v1) {\color{ForestGreen} $\texttt{]}$};
\node[right, yshift=-0.2cm] at (v10) {\color{ForestGreen} $\texttt{2}$};
\node[yshift=-0.35cm] at (v9) {\color{ForestGreen} $\texttt{[}$};
\node[xshift=-0.1cm, yshift=-0.3cm] at (v8) {\color{ForestGreen} $\texttt{2}$};
\node[xshift=-0.25cm, yshift=-0.25cm] at (v7) {\color{ForestGreen} $\texttt{0}$};
\node[left] at (v6) {\color{ForestGreen} $\texttt{]}$};
\node[xshift=-0.3cm, yshift=0.2cm] at (v5) {\color{ForestGreen} $\texttt{[}$};
\node[xshift=-0.1cm, yshift=0.35cm] at (v4) {\color{ForestGreen} $\texttt{]}$};

%% file: graph.tex
\begin{tikzpicture}
\small
\begin{axis}[
    axis lines = left,
    width = 0.8\textwidth,
    height = 0.4\textwidth,
    xlabel={$n$},
    ylabel={Upper bound},
    xmin=0, xmax=26,
    ymin=12.7, ymax=13.7,
    xtick={0,2,4,6,8,10,12,14,16,18,20,22,24},
    ytick={12.8, 13.0, 13.2, 13.4, 13.6},
    ymajorgrids=true,
    grid style=dashed,
    legend pos=north east,
    every axis plot/.append style={thick}
]

\addplot[
    color = teal,
    only marks,
    mark size=1.5pt,
] coordinates {
    (2, 13.5111143909180)
    (4, 13.3510455303084)
    (6, 13.1713030482988)
    (8, 13.0840250641339)
    (10, 13.0153319247146)
    (12, 12.9650694912555)
    (14, 12.9252724437665)
    (16, 12.8929131038170)
    (18, 12.8658715917401)
    (20, 12.8428189450607)
    (22, 12.8228533732631)
    (24, 12.8053315494523)
};

\addplot[black!60, thick, dashed, samples=1000, domain=1:26]
    {12.34 + 1.8/(x^0.426)};
    
\end{axis}
\end{tikzpicture}